\newcommand{\kmeans}{{$k$-\textsc{means}}\xspace}
\newcommand{\kmed}{{$k$-\textsc{median}}\xspace}
\newcommand{\kcent}{{$k$-\textsc{center}}\xspace}
\newcommand{\sat}{\textsc{SAT}\xspace}
\newcommand{\qsat}{\textsc{QSAT}\xspace}
\newcommand{\qsatb}{\textsc{QSAT-B}\xspace}
\newcommand{\tsat}{\textsc{3SAT}\xspace}
\newcommand{\tsatb}{\textsc{3SAT-B}\xspace}
\newcommand{\uqsat}{\textsc{U-QSAT}\xspace}
\newcommand{\uqsatfull}{\textsc{Unambiguous QSAT}\xspace}
\newcommand{\uqsatb}{\textsc{U-QSAT-B}\xspace}
\newcommand{\utdmb}{\textsc{U-3DM-B}\xspace}
\newcommand{\utdmbfull}{\textsc{Unambiguous 3DM-B}\xspace}
\newcommand{\ucbtb}{\textsc{U-CBT-B}\xspace}
\newcommand{\ucbtbfull}{\textsc{Unambiguous Covering By Triples-B}\xspace}
\newcommand{\stsatb}{\textsc{S-3SAT-B}\xspace}
\newcommand{\stqsatb}{\textsc{S-QSAT-B}\xspace}
\newcommand{\stsatbfull}{\textsc{Stable 3SAT-B}\xspace}
\newcommand{\stdmb}{\textsc{S-3DM-B}\xspace}
\newcommand{\stdmbfull}{\textsc{Stable 3DM-B}\xspace}
\newcommand{\scbtb}{\textsc{S-CBT-B}\xspace}
\newcommand{\scbtbfull}{\textsc{Stable Covering by Triples-B}\xspace}
\newcommand{\abs}[1]{\ensuremath{|#1|}}
\newcommand{\HW}[2]{\ensuremath{HW\left(#1, #2\right)}}
\newcommand{\bR}{{\mathbb R}\xspace}
\newcommand{\cC}{\ensuremath{\mathcal{C}}}
\newcommand{\cI}{\ensuremath{\mathcal{I}}}
\newcommand{\cT}{\ensuremath{\mathcal{T}}}
\newcommand{\cX}{\ensuremath{\mathcal{X}}}
\newcommand{\RR}{\mathbb{R}}
\newtheorem{theorem}{Theorem}
\newtheorem{lemma}{Lemma}
\newtheorem{claim}{Claim}
\newtheorem{corollary}{Corollary}
\newtheorem{definition}{Definition}
\newcommand{\cl}{{\mathcal X}\xspace}
\newcommand{\cost}{{\rm cost}\xspace}
\newcommand{\fa}{{\mathcal C}\xspace}
\newcommand{\opt}{{\mathcal O}\xspace}
\newcommand{\loc}{{\mathcal S}\xspace}
\newcommand{\eps}{{\epsilon}}
\title{Exact Algorithms and Lower Bounds for Stable Instances of Euclidean \kmeans\footnote{A preliminary version of this appeared in Proceedings of ACM-SIAM Symposium on Discrete Algorithms (SODA) 2019.}}
\author[1]{Zachary Friggstad\thanks{This research was undertaken, in part, thanks to funding from the Canada Research Chairs program and an NSERC Discovery Grant.}}
\author[2]{Kamyar Khodamoradi}
\author[1]{Mohammad R. Salavatipour\thanks{Supported by NSERC.}}
\affil[1]{Department of Computing Science\\ University of Alberta}
\affil[2]{Department of Computer Science\\ University of Regina}
\date{}
\begin{document}

\maketitle



\begin{abstract}
We investigate the complexity of solving stable or perturbation-resilient
instances of \kmeans and \kmed clustering in fixed dimension Euclidean metrics (or more generally doubling metrics).
The notion of stable or perturbation resilient instances was introduced by Bilu and Linial [2010] and 
Awasthi, Blum, and Sheffet [2012]. In our context, we say a \kmeans instance is $\alpha$-stable if
there is a unique optimum solution which remains unchanged if distances are (non-uniformly)
stretched by a factor of at most $\alpha$.
Stable clustering instances have been studied to explain why heuristics such as
Lloyd's algorithm perform well in practice. 
In this work we show that for any fixed $\epsilon>0$, $(1+\epsilon)$-stable instances of \kmeans in doubling metrics, which include fixed-dimensional Euclidean metrics, can be solved in polynomial time.
More precisely, we show a natural multi-swap local-search algorithm in fact finds the optimum solution for $(1+\epsilon)$-stable instances of \kmeans and \kmed in a polynomial number of iterations.

We complement this result by showing that this is essentially tight: that when the dimension $d$ is part of the input, there is a fixed $\epsilon_0>0$ such there is not even a PTAS for $(1+\epsilon_0)$-stable \kmeans in $\mathbb R^d$ unless NP=RP. To do this, we consider a robust
property of CSPs; call an instance stable if there is a unique optimum solution $x^*$ and for any other solution $x'$, the number of unsatisfied clauses is proportional to the Hamming distance between $x^*$ and $x'$. Dinur, Goldreich, and Gur have already shown stable QSAT is hard to approximation for some constant $Q$ \cite{DGG}. Recently, Paradise \cite{Orr} extended this to the setting with bounded variable occurrence. More specifically, it implies that stable QSAT with bounded variable occurrence is APX-hard.
Given this, we consider ``stability-preserving'' reductions to prove our hardness for stable \kmeans. Such reductions seem to be more fragile and intricate than standard $L$-reductions and may be of further use to demonstrate other stable optimization problems are hard to solve.
\end{abstract}
\thispagestyle{empty}

\newpage
\setcounter{page}{1}
\section{Introduction}
\label{sec:intro}

The interest in explaining the difference between performance of many heuristic algorithms (in particular for clustering
problems) in practice vs. worst-case performance bounds has recently attracted attention and led to new research directions.
It has been long observed that for many optimization problems, such as clustering problems, the performance of some well known
heuristics are much better than their worst case performance analysis.

There have been several approaches to study and explain these differences. 
Bilu and Linial \cite{BL12} as well as Awasthi, Blum, and Sheffet  \cite{ABS10} introduced the notion of stability and perturbation resilience. 
The idea is that for many problems, such as a clustering problem, a typical instance of the problem is stable in the sense that 
the optimum solution is unique and it does not change even if one modifies or perturbs input  parameters by a small factor.
Informally, instances of a problem are called $\alpha$-stable or $\alpha$-perturbation resilient
if the structure of the optimum solution remains unchanged even if the input is perturbed by an $\alpha$ factor.
For example, a clustering problem is $\alpha$-stable if there is a unique optimum solution which remains the unique optimum
after some distances are scaled up to a factor of $\alpha$: different pairs of points may have their distances scaled differently.

Balcan et al. \cite{BBG09} argue that for clustering problems the goal is to find
the ``target'' clustering and typically the objective function is just a proxy. Therefore, the distances of the input points and
how they  contribute to objective function are typically not very precise; thus small changes in these values usually does not
change the target clustering.
It has been shown that for $\alpha$-stable instances of several problems
such as centre-based clustering problems (e.g. \kcent, \kmed, \kmeans), graph partitioning problems  (e.g.
Max-cut, Multiway cut), and other problems, one can find the optimum solution in polynomial time. 

In this paper, we focus on $\alpha$-stable instances of the classical clustering problems \kmed and \kmeans
in Euclidean metrics $\RR^d$. Perhaps the most widely used clustering model 
is the \kmeans clustering: Given a set $\cX$ of $n$ {\em data points}
in $d$-dimensional Euclidean space $\RR^d$, and an integer $k$, 
find a set of $k$ points $c_1,\ldots,c_k \in \RR^d$ to act as as {\em centres}
that minimize the sum of squared distances of each data point to its nearest 
centre. In other words, we would like to partition $\cX$ into $k$ cluster sets, $\{C_1,\ldots,C_k\}$ and 
find a centre $c_i$ for each $C_i$ to minimize
\[\sum_{i=1}^k\sum_{x\in C_i}||x-c_i||^2_2. \]
Here, $||x-c_i||_2$ is the standard Euclidean distance in $\RR^d$ between points $x$ and $c_i$.
This value is called the cost of the clustering.
Typically, the centres $c_i$ are selected to be the centroid (mean) of the cluster $C_i$.
In other situations the centres must be from the data points themselves (i.e. $c_i\in C_i$) or from
a given set $\fa$. This latter version is referred to as discrete \kmeans clustering.
There are results that show that one can reduce \kmeans to discrete version at a small loss
(see \cite{Matousek00}).
In this paper we study discrete \kmeans.
The problem is known to be NP-hard even for $k=2$ for $\RR^d$ when $d$ is not
fixed or for arbitrary $k$ in $\RR^2$ \cite{ADHP09,DFKVV04,MNV09,Vattani09}. 
Several approximation algorithms have been proposed for the
problem; for a while the best ratio being a $(9+\epsilon)$ via a local search algorithm \cite{KMNPSW04}.
This was recently improved to a PTAS independently by \cite{FRS16A,FRS16} and \cite{CKM16} when the dimension $d$ can be regarded
as a constant and an 6.357-approximation for arbitrary dimensions \cite{ANSW17}. Awasthi et al. \cite{ACKS15} showed that \kmeans is APX-hard in $\RR^d$ when $d = \Omega(\log n)$. 
Improvements and extensions of this lower bound in high-dimensional spaces have appeared in \cite{LSW17,CSL19} with the current-best lower bound for discrete \kmeans in Euclidean spaces being 1.73 \cite{CSL22}. We briefly remark here that their techniques do not imply hardness of stable clustering problems, this will be discussed further in Section \ref{sec:results}.

We now precisely define what it means for an instance of \kmeans be stable in our paper.
One can similarly define what it means for a \kmed instance to be $\alpha$-stable.
We present a \kmeans instance
as a triple $(\cX, \fa, \delta)$ where $\delta$ is a symmetric distance function between points in $\cX \cup \fa$ that satisfies the triangle inequality unless
otherwise explicitly stated. A solution is viewed as s set $\loc \subseteq \fa$
with $|\loc| = k$ and its cost is $\cost(\loc) := \sum_{j \in \cX} \delta(j, \loc)^2$ where we let $\delta(j, \loc)$ to denote $\min_{i \in \loc} \delta(j, i)$.
\begin{definition}[$\alpha$-stability]
For $\alpha \geq 1$, call an instance $\cI = (\cX, \fa, \delta)$ of metric \kmeans $\alpha$-stable if it has a unique optimum
solution $\opt$ which is also the unique optimum solution in every related (not necessarily metric) instance
$\cI'=(\cX,\fa,\delta')$ (that need not satisfy the triangle inequality but still satisfies symmetry) with $\delta(i,j)\leq \delta'(i,j)\leq \alpha \cdot \delta(i,j)$ for all $i,j \in \cX \cup \fa$.
\end{definition}

Several papers have studied complexity of $\alpha$-stable instances of \kmeans and \kmed. The main goal is to find
algorithms that work for smaller values of $\alpha$ (i.e. weak requirement for stability).
Awasthi, Blum, and Sheffet \cite{ABS12} showed that 3-stable instances of \kmeans and \kmed can be solved in polynomial time.
Balcan and Liang \cite{BL16} improved this by showing that for $\alpha=1+\sqrt{2}$, $\alpha$-stable 
instances of \kmeans and \kmed
can be solved in polynomial time. This was further improved in the case of metric stability recently by Agelidakis, Makarychev, Makarychev \cite{AMM17} who
showed that 2-metric stable (or 2-metric perturbation resilient) instances of centre-based clustering problems
such as \kmeans and \kmed can be solved in polynomial time.

In this work we focus on discrete \kmeans and \kmed on Euclidean metrics $\RR^d$ (and more generally doubling metrics)
 and prove both upper and lower bounds.
We prove that for any fixed $\epsilon>0$, $(1+\epsilon)$-stable instances of these problems on fixed dimension Euclidean
spaces ($\RR^d$ for fixed $d$) can be solved in polynomial time and this is tight.
In fact, our result is slightly stronger in that we show $(1+\epsilon)$-stable instances do not even admit a PTAS unless NP=RP.

More specifically, starting with a new proximity PCP theorem explained below in Theorem \ref{PCP-hypothesis}
(which is a bounded-occurrence version of Theorem 3.1 of \cite{DGG}) we can show that for some fixed $\epsilon>0$,
$(1+\epsilon)$-stable instances of \kmeans and \kmed cannot be solved in polynomial time when restricted to $\RR^d$ 
(but unbounded $d$) unless NP=RP. We had presented Theorem \ref{PCP-hypothesis} as a hypothesis in
the conference version of our paper. Since then, Paradise \cite{Orr}
has proved a bounded-occurrence version of Theorem 3.1 of \cite{DGG} which directly implies Theorem \ref{PCP-hypothesis}.

\subsection{Previous work}
Bilu and Linial \cite{BL12} gave a polynomial exact algorithm for $O(n)$-stable instances
of Max-Cut. This was improved to $O(\sqrt{n})$-stable instances by Bilu, Daniely, Linial, Saks \cite{BDLS13} and further by 
Makarychev, Makarychev, Vijayaraghavan \cite{MMV14} who provided a polynomial exact algorithm based on semidefinite programming  
for $O(\sqrt{\log n}\log\log n)$-stable instances of Max-Cut. This result may be nearly tight: \cite{MMV14} also shows that solving $\alpha(n)$-stable instances in polynomial time would imply an $\alpha(2n)$-approximation for the nonuniform sparsest cut problem.

Awasthi, Blum, and Sheffet \cite{ABS12} showed that 
for 3-stable instances of large class of clustering problems, called separable centre-based objective (s.c.b.o) clustering problems (such as \kmed over finite metrics (with no Steiner points)) and for $(2+\sqrt{3})$-stable 
instances of s.c.b.o clustering problems with Steiner points one can find the optimum 
clustering in polynomial time. Furthermore, they proved NP-hardness for instances of \kmed with Steiner points that satisfy \emph{3-centre proximity} condition ($\alpha$-centre proximity is the condition that for any point $x \in \cX$ in cluster $C_i$ with cluster centre $c_i$,  $\alpha \cdot \delta(x, c_i) < \delta(x, c_j)$ if $i \neq j$).  
Ben-David and Reyzin \cite{BR14} showed the NP-hardness of $(2 - \epsilon)$-stable instances of \kmed with no Steiner points for general metrics. 
Balcan, Haghtalab, and White \cite{BHW16} prove that 2-stable instances of $k$-centre can be solved in polynomial time and any
$(2-\epsilon)$-stable instances of the problem are NP-hard.
Angelidakis, Makarychev, and Makarychev \cite{AMM17} show that for class of clustering problems called centre-based clustering
(such as \kmeans, \kmed, \kcent), 2-metric perturbation resilient instances can be solved in polynomial time,
improving the bound of $1+\sqrt{2}$ from Balcan and Liang \cite{BL16}.
Ostrovsky et al. \cite{ORSS12} showed that for $\epsilon$-separated (defined below) instances of continuous \kmeans a variant of Lloyd's algorithm
is an $O(1)$-approximation.

For an instance of continuous \kmeans with $\cX \in \mathbb R^d$, let $\Delta^2_k(\cX)$ denote the optimal \kmeans clustering cost.
Say that the instance given by $\cX$ is $\epsilon$-separated if $\Delta^2_k (\cX)\leq \epsilon^2 \Delta^2_{k-1} (\cX)$.
Ostrovsky et al. \cite{ORSS12} showed that one can achieve a $(1+f(\epsilon))$-approximation to \kmeans in polynomial time.
This result was further improved by Awasthi, Blum, and Sheffet \cite{ABS10} that if $\Delta^2_k (\cX)\leq \alpha \Delta^2_{k-1} (\cX)$ for
some constant $\alpha>1$ then one can obtain a PTAS for \kmeans in time polynomial in $n,k$ but exponential 
in $\alpha,\epsilon$.
A solution $S_0$ to \kmed is $\frac{1}{\epsilon}$-locally optimal if any solution $S_1$ such that
$|S_0-S_1|+|S_1-S_0|\leq 2/\epsilon$ has cost at least as big as that of $S_0$. Cohen-Addad and Schwiegelshohn \cite{CS17}
showed that for $\alpha>3$, for any instance of \kmed that is $\alpha$-stable, any $\frac{2}{\alpha-3}$-locally optimal
solution is optimum. Hence a local search algorithm that swaps up to $\frac{2}{\alpha-3}$ centres finds the optimum solution.
However, they do not show how to find such a local optima in polynomial time.

Vijayaraghavan, Dutta, and Wang \cite{VDW17} studied additive perturbation stable (APS) instances of Euclidean \kmeans for $k=2$. 
An instance is $\delta$-APS if the (unique) optimum clustering remains optimum even if each point is moved up to $\delta$.
They \cite{VDW17} showed that for any fixed $\epsilon>0$, $\epsilon$-additive instances of Euclidean \kmeans 
for $k=2$ can be solved in polynomial time. There are also several results on stable instances of graph partitioning problems such as Max-Cut.

Another interesting aspect of our result is that we prove the local search dynamics find the optimum solution in polynomial time in stable instances
of \kmed and \kmeans. This stands in stark contrast to the fact that the complexity of finding a local minimum for the standard local search algorithm
is PLS-complete for \kmed \cite{AKP08}. We refer the interested reader to \cite{AKP08} and the references therein for more details of Polynomial Local Search
complexity. Essentially, our result shows that if stable instances of \kmed and \kmed too are PLS-hard then the fact that local search terminates in a polynomial
number of iterations would allow us to solve all problems in PLS in polynomial time.
The fact that local search terminates in polynomial time for stable instances is not surprising, but we finally provide the first proof of this fact.

\subsection{Our Results}\label{sec:results}
Our main results are Theorems \ref{thm:alg} and \ref{thm:hardness} below.
Recall a metric has doubling dimension $d$ if any ball with radius $R$ in the metric can be covered by at most $2^d$ balls of radius $R/2$.
Thus, $d$-dimensional Euclidean metrics have doubling dimension $O(d)$.

\begin{theorem}\label{thm:alg}
For any fixed $d \geq 1$ and $\epsilon'>0$, $(1+\epsilon')$-stable instances of \kmeans and \kmed in metrics with doubling dimension $d$ can
be solved in polynomial time.
\end{theorem}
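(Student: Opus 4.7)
The plan is to analyze the natural multi-swap local search algorithm with swap size $\rho=\rho(\epsilon',d)$, a constant chosen so that the PTAS analyses for doubling metrics of \cite{FRS16,CKM16} yield a local-optimality guarantee $(1+\epsilon)$ with $1+\epsilon$ strictly below $(1+\epsilon')^2$. The algorithm starts with any $k$-subset $\loc\subseteq\fa$ and iteratively performs any swap replacing $A\subseteq\loc$ with $B\subseteq\fa\setminus\loc$ where $|A|=|B|\leq\rho$ that strictly reduces the cost. To guarantee polynomial termination I would apply only swaps improving the cost by a factor of at least $1-1/n^c$ for a suitable constant $c$, after preprocessing so the optimum cost is polynomially bounded; a standard potential argument then bounds the number of iterations by $\mathrm{poly}(n)$.

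The correctness argument proceeds by contradiction: suppose the output $\loc$ is not equal to $\opt$. I would invoke the test-swap construction underlying the doubling-metric PTAS, which produces swaps $(A_1,B_1),\dots,(A_t,B_t)$ with $|A_\ell|,|B_\ell|\leq\rho$, the sets $B_\ell$ partitioning $\opt\setminus\loc$ and each facility of $\loc\setminus\opt$ appearing in at most a bounded number of the $A_\ell$'s, satisfying
\[
\sum_{\ell=1}^{t}\bigl[\cost\bigl((\loc\setminus A_\ell)\cup B_\ell\bigr)-\cost(\loc)\bigr]\ \leq\ (1+\epsilon)\cost(\opt)-\cost(\loc).
\]
Approximate local optimality forces each summand to be at least $-\cost(\loc)/n^c$, yielding $\cost(\loc)\leq(1+\epsilon)\cost(\opt)$ up to a lower-order term absorbed by the preprocessing.

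To reach a contradiction I would use the perturbation $\delta'(j,i)=(1+\epsilon')\delta(j,i)$ when $i\in\opt\setminus\loc$, and $\delta'(j,i)=\delta(j,i)$ otherwise. Then $\cost_{\delta'}(\loc)=\cost(\loc)$ because $\loc$ uses no facility whose distances were scaled, while the $\delta'$-cost of $\opt$ is at most $(1+\epsilon')^2\cost(\opt)$; by $(1+\epsilon')$-stability $\opt$ remains the unique optimum in the perturbed instance, so in particular $\cost_{\delta'}(\opt)<\cost(\loc)$. The main step, and what I expect to be the main obstacle, is to convert this strict inequality into a strict improvement for at least one test swap in the \emph{original} metric: the challenge is that the perturbation must be aligned with the test-swap bookkeeping so that the cost of $(\loc\setminus A_\ell)\cup B_\ell$ under $\delta'$ relates predictably to its cost under $\delta$. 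I anticipate needing to refine the test-swap analysis to produce not just the standard $(1+\epsilon)$-bound but a finer inequality whose right-hand side involves $\cost_{\delta'}(\opt)$, so that stability forces the sum of swap improvements to be strictly positive whenever $\loc\neq\opt$. Balancing the swap size $\rho$ against the stability margin $\epsilon'$ so that this argument closes cleanly is the delicate technical heart of the proof.
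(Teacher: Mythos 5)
There is a genuine gap, and it sits exactly where you flag it: your correctness argument stops at the standard locality guarantee $\cost(\loc)\leq(1+\eps)\cost(\opt)$ (up to a truncation term), and this bound is too crude to be contradicted by stability. After perturbing, stability only tells you $\opt$ is the unique minimizer of $\cost_{\delta'}$; to force $\loc=\opt$ you must prove $\cost_{\delta'}(\loc)=\cost(\loc)\leq\cost_{\delta'}(\opt)$, and $\cost_{\delta'}(\opt)$ can be far below $(1+\eps)\cost(\opt)$ — e.g.\ when $\loc$ and $\opt$ differ only on centres serving a negligible fraction of the total cost, $\cost_{\delta'}(\opt)\approx\cost(\opt)$ while the slack $\eps\cost(\opt)$ in the crude bound is comparatively huge. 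The paper closes this by replacing the crude bound with a per-client refinement: it defines $\Psi(\loc)$ as the $\loc$- and $\opt$-cost of only those clients whose centres in both solutions lie outside $\loc\cap\opt$, proves a swap inequality giving $\cost(\loc)\leq\cost(\opt)+O(\eps)\cdot\Psi(\loc)$ (``nearly-good'' solutions), and chooses the perturbation to scale \emph{every} distance except each client's distance to its current centre $\sigma(j,\loc)$. That choice matters: it makes clients with $\sigma(j,\loc)\in\loc-\opt$ but $\sigma(j,\opt)\in\loc\cap\opt$ also pay the $(1+\eps')^2$ factor in $\cost_{\delta'}(\opt)$, which is exactly what is needed to absorb the $c_j$-terms of $\Psi$ (via the paper's bound \eqref{eqn:c4bound}). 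Your perturbation, which scales only distances to facilities in $\opt\setminus\loc$, leaves those clients unscaled in $\cost_{\delta'}(\opt)$ and the analogous absorption step breaks; so even granting a refined inequality, the bookkeeping as you set it up does not close.

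A second, smaller problem is the termination mechanism. You propose accepting only swaps that improve the cost by a $1-1/n^{c}$ factor (after ``preprocessing so the optimum cost is polynomially bounded'', which is not justified and in general would alter the instance and its stability). Such a rule terminates at an \emph{approximate} local optimum, and an approximate local optimum need not be $\opt$ even in stable instances — the paper explicitly warns about this, noting that \cite{CS17} faces the same obstacle. The paper instead runs the plain best-improvement $\rho$-swap rule and proves polynomial convergence directly: if $\loc$ is not nearly-good, the best swap shrinks $\cost(\loc)-\cost(\opt)$ by a $(1-\frac{1}{2k})$ factor, so after $O(k\ln(n\Delta))$ iterations integrality forces a nearly-good solution to be reached; nearly-good solutions are then shown to equal $\opt$ by the perturbation argument, at which point no improving swap exists and the algorithm halts. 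To repair your write-up you would need both the $\Psi$-refined swap inequality (with the perturbation aligned to $\sigma(j,\loc)$) and this best-swap geometric-decrease argument in place of the threshold rule.
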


This theorem is proved by showing that the simple $\rho$-swap local search algorithm for a suitable constant $\rho=\rho(\epsilon',d)$
finds the optimum \kmeans clustering in polynomial time if the best improvement is taken in each iteration.
We should note that, in all the previous studies of local search algorithms, 
in order to obtain polynomial run time, a swap is performed if it yields a ``significant'' improvement of the solution.
Hence, the result of the algorithms is not a true local optimum, but, in some sense, an approximate local optimum.
However, in order to find the actual optimum, one cannot rely on an algorithm that produces an approximate local optimum.
For instance, the result of Cohen-Addad and Schwiegelshohn \cite{CS17} shows that a {\em true} local optimum is also optimum
in $\alpha$-stable instance of \kmed (for $\alpha>3$) but it does not show how to find a true local optimum in polynomial
time. In order to prove Theorem \ref{thm:alg} we must show that the local search algorithm that performs the best swap in each step
in fact finds the true local (and hence global) optimum in stable instances. We focus only on our setting of doubling metrics, but the ideas
can also be used to show how to find the global optimum in polynomial time for $\alpha$-stable instances of \kmeans and \kmed in general metrics studied in \cite{CS17} for constant $\alpha > 3$.

As a side effect, we also show how to avoid the ``$\eps$-loss'' that so many local search procedures lose when being modified to run in polynomial time.
For example, a local optimum solution to the single-swap heuristic for \kmed is known have cost at most $5$ times the global optimum cost, yet
a modification to the standard single-swap algorithm in \cite{ARYA} to ensure the algorithm runs in polynomial time is a $(5+\epsilon)$-approximation.
We provide analysis of the local-search procedure that takes the best swap at each step and prove the solution is a true 5-approximate solution (no $\eps$-loss)
after a polynomial number of iterations, even if it has not yet stabilized at a local optimum. Our approach may be helpful for others to communicate more clean approximation ratios
for their local search algorithms. The details of this analysis technique appear in Appendix \ref{app:alt}.

Our second major result is to show that \autoref{thm:alg} is essentially tight in that the assumption of $d$ being constant is critical to allow instances with arbitrarily small (constant) stability to be solved optimally.

Part of our reduction is inspired by recent work on \kmeans hardness in Euclidean spaces, notably on ideas in the reduction from \cite{ACKS15}. It should be noted that their reduction has each point being at a distance $1$ or some constant $c > 1$ from every possible centre and that in the ``yes'' case there is a solution where every point is within distance $1$ of a centre. But this does not guarantee the instance is stable, even if one follows parsimonious reductions from \uqsatfull (definition below) to ensure there is a unique solution where each point pays $1$ in its cluster. The simple reason is that stability is a much stronger requirement of a problem: slight perturbation of distances could change the optimal solution structure even in instances with distances $1$ and $c > 1$. It is not difficult to come up with such examples.

In order to prove \kmeans is hard even in stable instances, we
prove \autoref{thm:hardness} using a new PCP construction by Paradise \cite{Orr}, which is a slightly stronger version
of Theorem 3.1 in \cite{DGG}. We introduced this PCP construction as a hypothesis in the conference version of this paper. Since then, Paradise \cite{Orr} has actually proved
this hypothesis. We need the following definitions to state our result formally.

\begin{definition}[\uqsatfull]
In an instance of promise problem \uqsat, we are given a set of $n$ variables $x_1, \, x_2, \, \ldots, \, x_n$ and $m$ clauses $C_1, \, C_2, \, \ldots, \, C_m$ where each clause is a $Q$-CNF over $Q$ (distinct) variables.
The promise is that there is at most one satisfying assignment.
\end{definition}

\uqsat was proven hard using a randomized reduction in \cite{VV86} in the sense that an algorithm that can be used to determine satisfiability of a \uqsat instance could then be used to solve any language in $NP$ with a randomized reduction. That is, we would have $NP=RP$.

Given two binary vectors $x,x'$ of the same length, let $HW(x,x') \in [0,1]$ denote the Hamming weight of $x,x'$: the fraction of coordinates $i$ with $x_i \neq x'_i$.
The following is a corollary of Theorem 3.1 of \cite{DGG}, obtained by producing their PCPP for a given instance of \uqsat.
\begin{theorem}(Theorem 3.1 of \cite{DGG})\label{DGG-theorem}
There are universal constants $q,s,\epsilon>0$ such that the following holds.
For every $L\in NP$ there is a polynomial time randomized reduction from $L$ to an instance $\Phi$ of \uqsat with the following properties:

\begin{description}
\item{\bf Yes case:} if $L$ is a yes case then $\Phi$ has a unique satisfying assignment $x^*$ with probability $\Omega(1/{\rm poly}(n))$. Also, for any assignment $x$ to $\Phi$,
the fractions of clauses not satisfied by $x$ is at least $s \cdot HW(x,x^*)$.

\vspace{-1.5mm}
\item{\bf No case:} if $L$ is a no case then no assignment satisfies more than $(1-\epsilon)$-fraction of clauses of $\Phi$.
\end{description}
\end{theorem}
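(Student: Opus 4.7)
The plan is to combine the Valiant--Vazirani isolation lemma with a constant-query PCP of Proximity (PCPP) such as the ones of Ben-Sasson--Sudan--Vadhan--Wigderson or Dinur--Reingold. Valiant--Vazirani handles the unambiguity requirement in the yes case, while the PCPP supplies the linear Hamming-distance lower bound on the fraction of unsatisfied clauses.

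The construction would proceed in three steps. First, given $L\in NP$ with witness relation $R_L$, apply Valiant--Vazirani hashing: sample a random affine subspace of dimension $O(\log n)$ over $\mathbb{F}_2^{|w|}$ and conjoin to $R_L$ the constraint that the witness lies in it. In the yes case, with probability $\Omega(1/\mathrm{poly}(n))$ exactly one witness $w^{*}$ survives; in the no case none do. Second, let $C$ be the circuit that checks $R_L$ together with the hashing constraint, and apply a PCPP for $C$ with $q=O(1)$ queries and linear proximity soundness: the verifier reads a pair $(w,\pi)$ and, for every assignment, rejects with probability at least $s_0\cdot \mathrm{dist}(w,C^{-1}(1))$, while accepting the canonical proof $\pi^{*}$ attached to $w^{*}$ with probability $1$. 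Third, enumerate the random strings of the verifier and, for each one, write a $Q$-CNF clause ($Q=q$) that is false exactly when the verifier would reject on that random string; the conjunction of these clauses is the output formula $\Phi$ over the variables $(w,\pi)$.

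The analysis then splits into the two promise cases. In the no case, $C$ has no satisfying input, so the PCPP rejects on an $\Omega(1)$-fraction of random strings for every assignment, giving the constant $\epsilon$ gap. In the yes case, $x^{*}=(w^{*},\pi^{*})$ is the unique satisfying assignment, and for any other assignment $x=(w,\pi)$ with $HW(x,x^{*})=\delta$ either $w\neq w^{*}$, so $w$ is $\Omega(\delta)$-far from the only accepting input of $C$ and the PCPP rejects an $\Omega(\delta)$-fraction of tests, or else $w=w^{*}$ and $\pi$ disagrees with $\pi^{*}$ on a $\delta$-fraction of coordinates, in which case the proof-robustness of the PCPP forces an $\Omega(\delta)$-fraction of rejected tests. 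Combining the two regimes yields a single constant $s$ such that the rejection rate is at least $s\cdot HW(x,x^{*})$.

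The main obstacle is precisely this second case: standard PCPPs admit many accepting proofs, so one must design a PCPP whose canonical proof $\pi^{*}$ is unique and whose local tests are sensitive to perturbations of $\pi$ in proportion to the Hamming distance. A natural route is to take $\pi^{*}$ to be the low-degree extension of $w^{*}$ concatenated with a list of canonical auxiliary low-degree polynomials, and to compose the outer PCPP with a good error-correcting code so that every deviation from $\pi^{*}$ triggers a proportionate number of failed local checks (the low-degree test and the proximity test together cover both failure modes). Controlling this proof-side robustness while keeping the query complexity constant is the technical heart of the Dinur--Goldreich--Gur construction, and is what separates a genuine proximity PCPP from a vanilla one.
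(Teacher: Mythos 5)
The paper does not actually prove this statement: it is quoted as Theorem~3.1 of \cite{DGG} and is obtained, as the surrounding text says, by applying the DGG PCPP construction to an instance of \uqsat produced by the Valiant--Vazirani isolation reduction \cite{VV86}. Your outline --- VV hashing to isolate a unique witness, then a constant-query PCPP whose local checks are transcribed into $Q$-CNF clauses, one clause per rejecting local view --- is exactly this pipeline, so structurally you are following the same route the paper does.

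The gap is in your yes-case analysis, which as written does not give the bound $s\cdot HW(x,x^*)$. The dichotomy ``either $w\neq w^*$, so $w$ is $\Omega(\delta)$-far from the accepting input, or $w=w^*$ and $\pi$ is $\delta$-far from $\pi^*$'' is not valid: take $x=(w,\pi)$ that differs from $x^*$ in a single bit of $w$ and in a constant fraction of the proof coordinates. Then $w\neq w^*$, but $w$ is only $1/|w|$-close to the unique satisfying input, so ordinary PCPP proximity soundness forces rejection only with probability $O(1/|w|)$, while $HW(x,x^*)$ is a constant; and the proof-robustness you invoke in the second branch is, in a standard PCPP, only meaningful (if at all) relative to \emph{some} accepting proof, not relative to the canonical $\pi^*$, and is not available once $w\neq w^*$. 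What the argument actually needs is a single unified guarantee: the verifier rejects with probability proportional to the Hamming distance of the whole pair $(w,\pi)$ from the unique accepting pair $(w^*,\pi^*)$, i.e.\ a \emph{canonical} PCPP with strong, distance-to-the-canonical-proof soundness. That is precisely the content of Theorem~3.1 of \cite{DGG}; your closing paragraph concedes this is ``the technical heart'' of their construction, but then your proposal is not an independent proof --- it reduces to invoking DGG as a black box, which is what the paper already does. In particular, the suggested fix (low-degree extension as canonical proof plus composition with a good code) is not routine: preserving the proportionality of rejection to distance from the canonical proof under PCP composition is exactly the delicate part that DGG engineer, so it should not be presented as a straightforward patch.
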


An instance of \uqsatb is the same as \uqsat with the additional condition that
each variable appears in a bounded number of clauses. We typically use $B$ to refer to this bound as well. The following bounded occurrence version of Theorem \ref{DGG-theorem} is the basis for our hardness result.
It simply repeats Theorem \ref{DGG-theorem} with the condition that the \sat instance has bounded occurrence for each variable.

\begin{theorem}\label{PCP-hypothesis}[Hypothesis 1 in \cite{FKS19}]
There are universal constants $B,Q,s,\epsilon>0$ such that:
For every $L\in NP$ there is a polynomial time (randomized) reduction from $L$ to an instance $\Phi$ of \uqsatb with the following properties:
\vspace{-1.5mm}
\begin{description}
\item{\bf Yes case:} if $L$ is a yes case then $\Phi$ has a unique satisfying assignment $x^*$ with probability $\Omega(1/{\rm poly}(n))$. Also, for any assignment $x$ to $\Phi$,
the fractions of clauses not satisfied by $x$ is at least $s \cdot HW(x,x^*)$.

\vspace{-1.5mm}
\item{\bf No case:} if $L$ is a no case then no assignment satisfies more than $(1-\epsilon)$-fraction of clauses of $\Phi$.
\end{description}
\end{theorem}

As mentioned above, we stated the above as a hypothesis in the conference version of this paper \cite{FKS19}. After the conference appeared, Paradise \cite{Orr} proved a stronger version of 
Theorem 3.1 of \cite{DGG}. Rather than recalling all definitions in \cite{Orr}, we simply summarize the main features of their result. Theorem 1.6 in \cite{Orr} shows every $L \in NP$ admits a PCP verifier that is:
\begin{itemize}
\item {\bf Strong}: The reduction to the PCP is parsimonious in that there is a 1-to-1 correspondence between witnesses for yes instances of $L$ and proofs that are accepted with probability exactly 1 by the verifier: the so-called {\bf canonical proofs} $\mathcal C$.
Further, for any proof $\pi$ the probability the PCP rejects $\pi$ is $\Omega(HW(\pi, \mathcal C))$: i.e. at least proportional to the minimum Hamming distance between $\pi$ and a proof in $\mathcal C$.
\item {\bf Smooth}: All bits of the proof are queried by the verifier with the equal probability.
\end{itemize}

The proof of Theorem \ref{PCP-hypothesis} is then immediate. First, use the reduction in \cite{VV86} to reduce $L$ to an instance of the promise problem \uqsat. Then apply the PCP verifier from Theorem 1.6 in \cite{Orr} followed by the reduction to \qsatb from the proof of Corollary 1.8
in \cite{Orr}. This directly establishes Theorem \ref{PCP-hypothesis} with no further arguments being required.
Note that the randomization in the reduction only comes from the reduction in \cite{VV86} proving \uqsat is hard, every step from \cite{Orr} is deterministic.


We will frequently refer to SAT instances with the properties mentioned in the Yes case of Theorem \ref{PCP-hypothesis}, so it will be convenient to use two more definitions.
\begin{definition}[Stable SAT Instances]
For $0 \leq s \leq 1$,
an instance $\Phi$ of $\sat$ is said to be $s$-stable if there is exactly one satisfying assignment $x^*$ and for any assignment $x$ the fraction of clauses of $\Phi$ that are not satisfied is at least $s \cdot \HW{x^*}{x}$.
\end{definition}
\begin{definition}[\stqsatb]
In an instance of promise problem \stqsatb, we are given an instance $\Phi$ of \qsatb with the following guarantee: 
Either $\Phi$ is $s$-stable or it is not satisfiable.
\end{definition}
So, Theorem \ref{PCP-hypothesis} gives a randomized reduction from any $L \in NP$ to \qsatb that always maps a no instance to a no instance and,
with polynomially-large probability, maps a yes instance to an instance of \stqsatb which, by definition, has exactly one satisfying assignment.

Using Theorem \ref{PCP-hypothesis} we prove our main result for hardness of approximating Stable $k$-Means in Euclidean metrics whose dimension is not fixed.

\begin{theorem}\label{thm:hardness}
There exists universal constants $\epsilon, \gamma > 0$ such that there is no $(1+\gamma)$-approximation for
$(1+\epsilon)$-stable instances of \kmeans in $\RR^d$ unless $NP=RP$.
\end{theorem}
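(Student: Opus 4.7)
The plan is to compose Hypothesis \ref{PCP-hypothesis} with a \emph{stability-preserving} reduction from \stqsatb to \kmeans in $\RR^d$ with $d = \Theta(n)$. Given a \stqsatb instance $\Phi$ on $n$ variables and $m = O(n)$ clauses (bounded occurrence $B$, clause width $Q$, stability parameter $s$), I construct $\cI=(\cX,\fa,\|\cdot\|_2)$ with $k=n$ and candidate center set $\fa=\{c_i^0,c_i^1:i\in[n]\}$ placed in orthogonal coordinate blocks (one block per variable) so that distinct variables' centers are far apart. The data set $\cX$ comprises (i) heavy variable-anchor points near each $c_i^b$ whose combined mass forces any low-cost solution to open exactly one of $\{c_i^0,c_i^1\}$ per variable, and (ii) one clause-point $q_j$ per clause, placed so that serving $q_j$ by any open ``satisfying-literal'' center yields a fixed baseline cost, while if no satisfying-literal center is open $q_j$ must be served at an additive penalty $\gamma$. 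The goal is that every assignment-solution $\loc(x)=\{c_i^{x_i}\}$ has cost exactly $\cost(\loc(x))=A+\gamma U(x)$, where $A$ is the common anchor cost and $U(x)$ is the number of clauses of $\Phi$ unsatisfied by $x$.

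In the yes case, $\Phi$ is $s$-stable with unique satisfying assignment $x^*$, so $\opt=\loc(x^*)$ has cost $A$; any other assignment-solution $\loc(x')$ satisfies $U(x')\geq s\cdot\HW{x^*}{x'}\cdot m \geq sm/n$, and any non-assignment solution is disqualified by the anchor penalty. Since $m/n=\Theta(1)$, choosing $\gamma=c\epsilon_0 A/s$ for a suitable absolute constant $c$ enforces $\cost(\loc')/\cost(\opt)>(1+\epsilon_0)^2$ uniformly over all $\loc'\neq\opt$; because squared distances stretch by $(1+\epsilon_0)^2$ under a $(1+\epsilon_0)$ distance perturbation, this is exactly the $(1+\epsilon_0)$-stability condition. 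In the no case, every assignment leaves $\geq\epsilon m$ clauses unsatisfied, so every solution has cost at least $A(1+\Omega(\epsilon\epsilon_0/s))$. A PTAS for $(1+\epsilon_0)$-stable \kmeans would therefore distinguish the yes and no cases of $\Phi$ with inverse-polynomial probability, which by Hypothesis \ref{PCP-hypothesis} yields $NP=RP$.

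The chief obstacle is the stability-preserving part of the reduction. A standard $L$-reduction controls the cost of the optimum only asymptotically, but here I must control $\cost(\loc')/\cost(\opt)$ for \emph{every} rival $\loc'$ against the worst-case non-uniform $(1+\epsilon_0)$ perturbation of distances — a uniform, not average, guarantee. The tightest case is a single-bit flip ($|\Delta|=1$), which touches only $\leq B$ clauses, and this is exactly where the $s$-stability of $\Phi$ (rather than mere unique satisfiability) is indispensable: it guarantees that at least a $s/n$ fraction of $m$ clauses, and hence at least $\Omega(s)$ clauses, flip from satisfied to unsatisfied, yielding the necessary additive gap $\gamma\cdot\Omega(s)$. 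A related subtlety is calibrating the variable-anchor weights so that non-assignment solutions are disqualified under any adversarial $(1+\epsilon_0)$-perturbation, yet the anchor cost $A$ does not dwarf the clause-cost signal $\gamma U(x)$ to the point that the relative gap drops below $2\epsilon_0$. Balancing these parameters against the universal constants $s,\epsilon,Q,B$ of Hypothesis \ref{PCP-hypothesis} is the technical core of the argument and is what makes stability-preserving reductions genuinely more fragile than their approximation-preserving counterparts.
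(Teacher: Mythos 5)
Your overall plan (compose Hypothesis \ref{PCP-hypothesis} with a stability-preserving reduction to \kmeans in $\RR^{\Theta(n)}$, with one-sided error giving $NP=RP$) matches the paper's, but your central stability argument has a genuine gap: the claimed uniform multiplicative gap $\cost(\loc')/\cost(\opt) > (1+\epsilon_0)^2$ over \emph{all} rivals cannot be achieved by a construction of the kind you describe. Since the reduction cannot know $x^*$, a clause point must sit symmetrically with respect to its (up to $Q$) potentially-satisfying centres, which belong to distinct, mutually far-apart variable blocks; by the triangle inequality its baseline cost is $\Omega(1)$ in gadget units, so $\cost(\opt) = \Theta(n)$, and the anchor machinery only adds to this. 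On the other hand, in a bounded-occurrence instance a single-bit-flip rival $\loc(x')$ differs from $\opt$ in one variable gadget and at most $B$ clauses, so its additive excess is $O(1)$; hence $\cost(\loc')/\cost(\opt) = 1 + O(1/n) \to 1$, far below $(1+\epsilon_0)^2$. Your proposed fix, $\gamma = c\epsilon_0 A/s$, silently requires the per-clause penalty to grow like $\Theta(\epsilon_0 n/s)$, but the penalty is a difference of two squared distances within a clause gadget and is bounded by the gadget scale, which cannot exceed a constant multiple of the baseline without inflating $A$ (and the anchor scale needed to exclude non-assignment solutions) by the same factor --- the two requirements chase each other and the ratio $\gamma/A = O(1/n)$ persists. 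So the sufficient condition you invoke (global multiplicative gap $\Rightarrow$ stability) is sound, but it is unattainable here, and the single-bit-flip case you correctly flag as tightest is exactly where it fails.

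The statement is still true, but the argument must be perturbation-specific and \emph{local}, with a gap proportional to the Hamming distance between solutions rather than a uniform ratio: the adversary can only usefully inflate distances from points to their $\opt$-centres, and for a rival at Hamming distance $h$ the inflation advantage is $O(\epsilon_0 h)$ (only the $O(h)$ points served differently matter, each of bounded cost), while the rival's structural deficit is $\Omega(s h)$ unsatisfied-clause penalties by $s$-stability of $\Phi$; choosing $\epsilon_0$ small relative to $s, B, Q$ and the gadget constants makes $\cost'(\loc') > \cost'(\opt)$ for every perturbation. This is precisely how the paper argues in its final step (Theorem \ref{thm:scbtb-kmeans}), where the yes-case guarantee is $\cost'(C) > (1+\epsilon)6n + s_3 \cdot \HW{C^*}{C}\cdot n$, i.e.\ Hamming-proportional, not multiplicative. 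The paper also does not reduce from \stqsatb directly: it routes through \stsatb, then stable bounded 3D-Matching, then stable Covering-by-Triples, each step certified to preserve Hamming-proportional stability, partly because the covering structure (with $k=n$ and all $3n$ unit-vector points needing a covering centre at squared distance $2$ versus $4$) automatically controls \emph{all} rival centre sets, including the non-assignment-like solutions that your anchor calibration must handle by hand. To repair your direct reduction you would need to (i) replace the global ratio argument by the local Hamming-proportional comparison above, and (ii) prove the anchor gadgets exclude non-assignment solutions under every $(1+\epsilon_0)$-perturbation with a deficit that is itself proportional to the disagreement --- both nontrivial, and essentially the content the paper spreads over Sections \ref{sec:hardness2}--\ref{sec:hardness4}.
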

Before our work, it was not known that it was even hard to solve stable \kmeans in Euclidean spaces exactly much less hard to approximate.

~

{\bf Hardness Reduction Techniques:}
Our goal is to take a stable instance of \qsatb and map it to a stable instance of \kmeans. The definitions of stability in these two problems, of course, differ and our goal
is to not only provide a hardness reduction for \kmeans but also to translate the notion of stability from \qsatb to \kmeans. To this end, we informally call a reduction ``stability-preserving''
if it maps stable instances of one problem to stable instances of the other problem.

A strong caution to the reader is that standard $L$-reductions do not always preserve stability.
In decision problems like \qsatb or \textsc{3D-Matching} (an auxiliary problem we encounter on the way to proving hardness for stable \kmeans),
we certainly need our reduction to be parsimonious to ensure uniqueness of the optimal solution. But even parsimonious $L$-reductions do not suffice. In Appendix \ref{app:nostable},
we given an example of a stable \qsat and show the classic parsimonious $L$-reduction from \qsat to \qsatb for some constant $B$ that is based on expander graphs fails to preserve stability\footnote{Paradise \cite{Orr} shows this reduction does preserve stability if the \qsat instance has each variable appearing in the same number of clauses, which
would follow from the smooth property of their PCP.}.
Ultimately, this shows why we required the newer construction by Paradise \cite{Orr} rather than the previous construction in \cite{DGG} providing a strong PCP for every language in NP but does not guarantee smoothness.

Thus, stability-preserving reductions are more fragile than $L$-reductions. The arguments about why stability is preserved are also more challenging and in-depth than the standard ``no-case'' analysis in an $L$-reduction.
We believe such reduction may be interesting in other contexts, especially in proving hardness for other problems under certain stability assumptions.

{\bf Outline of the paper:}
The algorithm for solving stable instances of \kmeans in constant-dimension doubling metrics
appears in Section \ref{sec:algorithm}. The presentation focuses only on \kmeans, the algorithm for solving stable instances of \kmed
is nearly identical.

Some details of the hardness reduction are given in Section \ref{sec:hardness0}.
Then the reduction is broken into three steps. In Section \ref{sec:hardness1}, we begin by reducing \stqsatb to \stsatb which also serves as our introduction to the concept of 
stability-preserving reductions. In Section \ref{sec:hardness2}, we provide a stability-preserving
reduction from \stsatb to the classic \textsc{3D-Matching} problem with appropriate stability and bounded-degree
guarantees maintained in the reduction. A further step in reduction is needed to reduce from stable instances of \textsc{3D-Matching} to a covering problem. This simple step appears in Section \ref{sec:hardness3}.
Finally, Section \ref{sec:hardness4} finishes with the reduction 
to an $\alpha$-stable Euclidean \kmeans instance in $\mathbb R^d$ with $d = \Omega(n)$ for some absolute constant $\alpha > 1$ that, ultimately, depends on the $Q, B$ and $s$ from Theorem \ref{PCP-hypothesis}.
The $\eps$ from Theorem \ref{PCP-hypothesis} also factors into showing hardness of even approximating $\alpha$-stable \kmeans instances in high-dimensional Euclidean metrics within some small constant factor.


\section{Solving Low-Dimensional Stable Instances} \label{sec:algorithm}
Our main goal in this section is to prove Theorem \ref{thm:alg}. We prove it for \kmeans only, the proof for \kmed is essentially identical.
Suppose $\eps',d$ are fixed constants and we are given an instance $(\cl,\fa,\delta)$ of \kmeans in a doubling metric with doubling dimension $d$
that is $(1+\eps')$-stable; i.e. it has a unique optimum solution
$\opt \subseteq \fa$ and it remains the unique optimum solution even if distances between points in $\cX \cup \fa$ are scaled (non-uniformly)
by at most $(1+\eps')$ factor.
If the reader is not comfortable with doubling metrics, nothing will be lost by thinking of $\RR^d$ in particular whose doubling dimension is $\Theta(d)$.


Let $\eps$ be such that $1+6\eps = (1+\eps')^2$, roughly speaking we have $\eps \approx \eps'/3$ for small $\eps'$. 
Without loss of generality, we assume $\eps \leq 1/6$ since we can shrink $\eps'$ if necessary;
an instance that is $(1+\eps')$-stable is also $(1+\eps'')$-stable for $\eps'' < \eps'$.
Let $\rho := \rho(\eps, d)$ be the constant from the \kmeans local search algorithm in \cite{FRS16A,FRS16}.
We briefly recall that the $\rho$-swap local search analysis for \kmeans in Euclidean metrics of dimension $d$
finds a solution whose cost is at most $1+\eps$ times the optimum solution cost. For impressions, $\rho = d^{O(d)} \cdot \eps^{-O(d/\eps)}$.

Let $\mathfrak{F}_k = \{\loc \subseteq \fa : |\loc| = k\}$ be set of {\em feasible} solutions.
Consider Algorithm \ref{alg:local}, which is the standard $\rho$-swap local search algorithm with slight modification that
in each step it performs the swap that yields the best improvement.

\begin{algorithm*}[h]
 \caption{$\rho$-Swap Local Search} \label{alg:local}
\begin{algorithmic}
\State let $\loc$ be any set in $\mathfrak{F}_k$
\While{$\exists$ sets $\loc' \in \mathfrak{F}_k$ with $|\loc - \loc'| \leq \rho$ and $\cost(\loc') < \cost(\loc)$}
\State $\loc \leftarrow \displaystyle \arg\min_{\substack{\loc' \in \mathfrak{F}_k\\|\loc-\loc'| \leq \rho}} \cost(\loc')$
\EndWhile
\State \Return $\loc$
\end{algorithmic}
\end{algorithm*}

Each iteration runs in polynomial time because $\rho$ is a constant. Unlike standard polynomial-time local search algorithms that stop once no improvement by a factor of $\eps/k$ can be made, our
algorithm simply terminates once no improvement is possible at all. We will argue that the algorithm terminates in a polynomial number of iterations in $(1+\eps')$-stable instances and
give an explicit bound on the number of iterations.

An interesting observation on the quality of the returned solutions in non-stable instances is made at the end of this section: informally it says
that if we truncate the main loop of Algorithm \ref{alg:local} to a polynomial number of iterations, then in $(1+\eps')$-stable instances it finds the optimum solution and, further, in arbitrary instances of \kmeans
in $\mathbb R^d$ it finds a $(1+O(\eps))$-approximate solution.

Let $\opt \in \mathfrak{F}_k$ be the unique optimum solution. For any set $\loc \in \mathfrak{F}_k$, define the following:
\begin{itemize}
\item For $j \in \cl$, let $\sigma(j, \loc)$ be the centre in $\loc$ nearest to to $j$, breaking ties by some fixed ordering of $\fa$.
\item $\overline{\cl}_\loc = \{j \in \cl : \sigma(j, \loc) \in \loc-\opt \text{ and } \sigma(j, \opt) \in \opt-\loc\}$.
\item $\Psi(\loc) = \sum_{j \in \overline{\cl}_\loc} \delta(j, \sigma(j, \loc))^2 + \delta(j, \sigma(j, \opt))^2$.
\end{itemize}

Here is why we define the function $\Psi(.)$. In the analysis of local search algorithms such as in \cite{FRS16A,FRS16}, in order
to show that a local optimum solution $\loc$ is $(1+\eps)$-approximate, one shows that 
$\cost(\loc)\leq \cost(\opt)+O(\eps)(\cost(\opt)+\cost(\loc))$. That bound is too crude for our purposes here.
Instead, we require $\cost(\loc) \leq \cost(\opt) + \eps \cdot \Psi(S)$, i.e. the error term is not an $\eps$ factor
of $\cost(\opt)+\cost(\loc)$; instead it is only an $\eps$ factor of the cost of $\opt$ and $\loc$ for points in $\overline{\cl}_\loc$.
The function $\Psi(\loc)$ is a bit challenging to track, it does not necessarily decrease as $\cost(\loc)$ decreases. Still, it is an important quantity in our analysis. 

\begin{definition}
Say $\loc \in \mathfrak{F}_k$ is a {\bf nearly-good solution} if $\cost(\loc) \leq \cost(\opt) + 2\eps \cdot \Psi(\loc)$.
\end{definition}


\subsection{A Structural Theorem}

We fix some $\loc \in \mathfrak{F}_k$ in this section, which may or may not be a local optimum solution.
Let $\loc' = \loc-\opt$ and let $\opt' = \opt-\loc$. We will use analysis from \cite{FRS16A,FRS16} to handle clients in $\overline{\cl}_\loc$.

For now, consider the \kmeans instance with points $\overline{\cl}_\loc$, possible centres $\fa' := \loc' \cup \opt'$, and $k' = |\loc'| = |\opt'|$. Note $\loc', \opt'$ are disjoint, which was a technical requirement for the analysis in \cite{FRS16A,FRS16}.
For a subset $P \subseteq \loc' \cup \opt'$ with $|P \cap \loc'| = |P \cap \opt'|$, define $\Delta_j^P$ for each $j \in \overline{\cl}_\loc$ to be $\delta(j, \sigma(\loc'  \triangle P))^2 - \delta(j, \sigma(\loc'))^2$, i.e.
the change in $j$'s assignment cost if we replaced solution $\loc'$ with $\loc' \triangle P$.

Now, it is not necessarily clear that $\opt'$ is an optimal solution for this restricted instance of \kmeans nor is it clear that $\loc'$ is a locally-optimal solution with respect to the $\rho$-swap heuristic. However, we do not need these facts. No inequality in \cite{FRS16A,FRS16} requires that $\opt'$ be a global optimum solution (it is just used to conclude that a local optimum solution is a near-optimal solution). Furthermore, all upper bounds on $\Delta_j^P$ terms in \cite{FRS16A,FRS16} do not require local optimality of $\loc'$: local optimality was used there simply to show $\Delta_j^P \geq 0$ if $|P| \leq \rho$ but we do not use this bound.

Thus, the following holds in our setting which follows from results in \cite{FRS16A,FRS16}.
\begin{theorem}\label{thm:structure}
There is a randomized algorithm that samples a partition $\pi$ of $\loc' \cup \opt'$ with the following properties:
\begin{itemize}
\item $|P \cap \loc'| = |P \cap \opt'| \leq \rho$ for each $P \in \pi$ and

\item 
${\bf E}_{\pi}\left[\sum_{P \in \pi} \sum_{j \in \overline{\cl}_{\loc}}\Delta_j^P]\right]\leq \sum_{j \in \overline{\cl}_{\loc}} (1+\epsilon) \cdot \delta(j, \opt')^2 - (1-\epsilon) \cdot \delta(j, \loc')^2$.
\end{itemize}
\end{theorem}
The randomized partition is described by Theorem 3.2 in \cite{FRS16} and the upper bound on the expected cost change when performing this swap is given at the end of Section 3.2 in \cite{FRS16} (for the appropriate choice of $\rho$). A careful reader can verify Theorem \ref{thm:structure} indeed holds simply by reading Section 3.2 in \cite{FRS16} plus appropriate definitions preceding that section. Instead of reiterating every argument from that paper to show this holds, we simply point out that Theorem \ref{thm:structure} does not require optimality of $\opt'$ nor local optimality of $\loc'$. Again for emphasis, in \cite{FRS16} local optimality was only used to show $0 \leq E_{\pi}\left[\sum_{P \in \pi} \Delta^P_j\right]$ which we do not require.

We use Theorem \ref{thm:structure} to show the following main technical result that supports our analysis of the running time of Algorithm \ref{alg:local} and proves it returns $\opt$ in stable instances is the following.
Intuitively, it says that if the solution is not nearly-good then the next step of the local search algorithm will make a significant step.

\begin{theorem}\label{thm:support}
For any $\loc \in \mathfrak{F}_k$, if $\cost(\loc) > \cost(\opt) + \eps \cdot \Psi(S)$  then
there is some $\loc' \in \mathfrak{F}_k$ with $|\loc - \loc'| \leq \rho$ where
\[ \cost(\loc') \leq \cost(\loc) + \frac{\cost(\opt) - \cost(\loc) + \eps \cdot \Psi(S)}{k}. \]
\end{theorem}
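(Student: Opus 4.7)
The plan is to invoke Theorem \ref{thm:structure} to obtain a random partition $\pi$ of $\opt' \cup \loc'$ and then aggregate ``test swaps'' across its parts. For each $P \in \pi$, consider the candidate swap
\[
\loc_P \;:=\; (\loc \setminus (P \cap \loc')) \;\cup\; (P \cap \opt'),
\]
which by the first bullet of Theorem \ref{thm:structure} satisfies $\loc_P \in \mathfrak{F}_k$ and $|\loc - \loc_P| = |P \cap \loc'| \leq \rho$. The target inequality will follow by averaging, once I establish
\[
\mathbb{E}\!\left[\sum_{P \in \pi} \bigl(\cost(\loc_P) - \cost(\loc)\bigr)\right] \;\leq\; \cost(\opt) - \cost(\loc) + \eps \cdot \Psi(\loc),
\]
since $|\pi| \leq |\loc'| \leq k$ and the hypothesis makes the right-hand side strictly negative, so dividing by any $|\pi| \leq k$ yields the claimed per-swap improvement for the best $P$ (and for the best realization of $\pi$).

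To bound $\cost(\loc_P)$, I would exhibit a (generally suboptimal) assignment $\phi_P : \cl \to \loc_P$ by the following case analysis on each client $j$. \textbf{(a)} If $\sigma(j,\opt) \in P \cap \opt'$, route $j$ to $\sigma(j,\opt)$. \textbf{(b)} Else if $\sigma(j,\loc) \notin P \cap \loc'$, keep $\phi_P(j) := \sigma(j,\loc)$ (still present in $\loc_P$). \textbf{(c)} Else if $\sigma(j,\opt) \in \loc \cap \opt$, route $j$ to $\sigma(j,\opt)$ (still present). \textbf{(d)} The only remaining case forces $j \in \overline{\cl}_\loc$ with $\sigma(j,\loc) \in P \cap \loc'$ and $\sigma(j,\opt) \in \opt' \setminus P$; by the pairing $\cents$ from the second bullet of Theorem \ref{thm:structure}, some centre within distance $O(D_{\sigma(j,\loc)})$ of $\sigma(j,\loc)$ survives in $\loc_P$, and I route $j$ there. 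Summing $\delta(j,\phi_P(j))^2 - \delta(j,\sigma(j,\loc))^2$ over $P$ and $j$, telescoping (each $i^* \in \opt'$ belongs to exactly one part) contributes $\cost(\opt) - \cost(\loc)$ from cases (a) and (c), case (b) contributes $0$, and the only residual is the case-(d) error.

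The case-(d) contribution for a client $j \in \overline{\cl}_\loc$ is charged only when $\pi$ splits the pair $(\sigma(j,\loc),\sigma(j,\opt))$ into different parts. The triangle inequality together with the bound $D_{\sigma(j,\loc)} \leq \delta(j,\sigma(j,\loc)) + \delta(j,\sigma(j,\opt))$ and the Euclidean-style expansion $\delta(j,\mathrm{backup})^2 - \delta(j,\sigma(j,\loc))^2 = O\bigl(\delta(j,\sigma(j,\loc))\cdot D_{\sigma(j,\loc)} + D_{\sigma(j,\loc)}^2\bigr)$ from \cite{FRS16A,FRS16} show the deterministic error incurred by such a $j$ is $O\bigl(\delta(j,\sigma(j,\loc))^2 + \delta(j,\sigma(j,\opt))^2\bigr)$. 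The third bullet of Theorem \ref{thm:structure} caps the splitting probability for every net pair by $\Theta(\eps)$, so tuning the hidden constants in $\rho(\eps,d)$ makes the expected per-client error at most $\eps\cdot\bigl(\delta(j,\sigma(j,\loc))^2 + \delta(j,\sigma(j,\opt))^2\bigr)$; summing over $j \in \overline{\cl}_\loc$ produces exactly $\eps\cdot\Psi(\loc)$.

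The main obstacle I anticipate lies in case (d): the standard Euclidean $k$-means local search argument would charge this error against $\eps\cdot(\cost(\loc)+\cost(\opt))$, which is far too coarse here. Restricting the error to $\eps\cdot\Psi(\loc)$ hinges on the observation that \emph{only} clients in $\overline{\cl}_\loc$ reach case (d), and on verifying that the pairing $\cents$ together with the net $\net$ supplies a surviving backup within an $O(D_{\sigma(j,\loc)})$ radius for \emph{every} split pair $(\sigma(j,\loc),\sigma(j,\opt))$, not merely those directly appearing in $\net$ (which is where the net's defining condition $D_{\sigma(j,\opt)} \geq \Theta(\eps)\cdot D_{\sigma(j,\loc)}$ comes into play, with the opposite regime absorbed into $\eps$ by the reverse inequality). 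Once this bookkeeping is in place, conditioning on a realization of $\pi$ attaining at most the expected aggregate improvement and then taking the best part $P^* \in \pi$ yields the desired $\loc' := \loc_{P^*}$ and completes the proof.
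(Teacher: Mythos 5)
Your proposal is correct and takes essentially the same route as the paper: the same test swaps $\loc \triangle P$ over the parts of the partition from Theorem \ref{thm:structure}, the same per-client case analysis in which only the clients of $\overline{\cl}_\loc$ incur the extra $\eps\cdot(c_j + c^*_j)$ term, and the same averaging over the at most $k$ parts using that $\cost(\opt)-\cost(\loc)+\eps\cdot\Psi(\loc)$ is negative. The only real difference is in your case (d), where the paper simply cites the reassignment of \cite{FRS16A,FRS16} as a black box with per-client expected change $(1+\eps)\cdot c^*_j-(1-\eps)\cdot c_j$, whereas your sketch appeals to the third bullet of Theorem \ref{thm:structure} for arbitrary pairs $(\sigma(j,\loc),\sigma(j,\opt))$ even though that bullet only covers net pairs -- closing this requires exactly the full case analysis of the prior work, i.e.\ the bookkeeping you yourself flag, so the cleaner route is the citation the paper uses.
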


\begin{proof}
Sample a random partition $\pi$ of $\opt' \cup \loc'$ as in Theorem \ref{thm:structure} but now consider the effect of the 
swap $\loc \rightarrow \loc \triangle P$ for each part $P \in \pi$. We place an upper bound on
${\bf E}_\pi [\sum_{P \in \pi} \cost(\loc \triangle P) - \cost(\loc)]$ by describing a valid way to redirect each $j \in \cl$ 
in each swap on a case-by-case as follows. For brevity, let $c^*_j = \delta(j, \sigma(j, \opt))^2$ be the cost
of connecting $j$ in the global optimum solution and, analogously, $c_j = \delta(j, \sigma(j, \loc))^2$.

\begin{itemize}
\item We never move any $j$ with both $\sigma(j, \loc), \sigma(j, \opt) \in \loc \cap \opt$. Note $\sigma(j, \loc)$ remains open after each swap so this is valid. Observe for such clients that $c^*_j = c_j$ so we, conveniently,
say the total assignment cost change for $j$ over all swaps $P \in \pi$ is bounded by $c^*_j - c_j$.
\item For $j$ with $\sigma(j, \loc) \in \loc'$ and $\sigma(j, \opt) \in \loc \cap \opt$, move $j$ to $\sigma(j, \opt)$ when swapping the part $P$ with $\sigma(j, \loc) \in P$. 
As $\sigma(j, \loc)$ remains open when swapping all other $P' \neq P$, we can leave $j$ assigned to $\sigma(j, \loc)$ to upper bound its cost change for swaps $P' \neq P$ by 0. The total cost assignment for $j$ is then bounded by $c^*_j - c_j$.
\item For $j$ with $\sigma(j, \loc) \in \loc \cap \opt$ and $\sigma(j, \opt) \in \opt'$, move $j$ to $\sigma(j, \opt)$ when swapping the part $P$ with $\sigma(j, \opt) \in P$ and do not move $j$ when swapping any other part $P' \neq P$.
This places an upper bound of $c^*_j - c_j$ on the total assignment cost change for $j$.
\item Finally, consider $j$ with $\sigma(j, \loc) \in \loc'$ and $\sigma(j, \opt) \in \opt'$. Note these are precisely the points $j \in \overline{\cl}_S$. From Theorem \ref{thm:support},
\[{\bf E}_{\pi}\left[\sum_{P \in \pi} \Delta^P_j\right] \leq (1+\eps) \cdot c^*_j - (1-\eps) \cdot c_j = c^*_j - c_j + \eps \cdot (c_j + c^*_j).\]
\end{itemize}
Aggregating this cost bound for all clients, we see
\[ {\bf E}_{\pi} \left[\sum_{P \in \pi} \cost(\loc \triangle P) - \cost(\loc)\right] \leq \cost(\opt) - \cost(\loc) + \eps \cdot \Psi(S).\]
Therefore there is some $\pi$ and some $P \in \pi$ with
\[ \cost(\loc \triangle P) - \cost(\loc) \leq \frac{\cost(\opt) - \cost(\loc) + \eps \cdot \Psi(S)}{|\pi|} \leq \frac{\cost(\opt) - \cost(\loc) + \eps \cdot \Psi(S)}{k}, \]
the latter bound using $|\pi| \leq k$ and the fact the numerator is negative.
\end{proof}


\subsection{Polynomial-Time Convergence to a Nearly-Good Solution}
In order to show that Algorithm \ref{alg:local} terminates in polynomial time on stable instances, we first
show that a nearly-good solution will be encountered by Algorithm \ref{alg:local} within a polynomial number of iterations even if
the instance is not stable. 
The next section shows that the only nearly-good solution is the optimal solution in $(1+\eps')$-stable instances, thereby completing the proof of Theorem \ref{thm:alg}.

From Theorem \ref{thm:support}, we show solutions that are not nearly-good are improved significantly in a single step of the local search algorithm (i.e. in terms of their full cost, not just when considering clients in $\overline{\cl}_{\loc}$).
\begin{lemma}\label{lem:neargood}
Suppose $\loc \in \mathfrak{F}_k$ is a solution which is not nearly-good. Then there is some $\loc' \in \mathfrak{F}_k$ with $|\loc-\loc'| \leq \rho$
satisfying \[\cost(\loc') - \cost(\opt) \leq \left(1 - \frac{1}{2k}\right) \cdot (\cost(\loc) - \cost(\opt)).\]
\end{lemma}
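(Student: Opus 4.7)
The plan is to apply Theorem~\ref{thm:support} directly and combine it with the failure of the nearly-good condition. Since $\loc$ is not nearly-good, we have $\cost(\loc) > \cost(\opt) + 2\eps \cdot \Psi(\loc)$, which in particular implies $\cost(\loc) > \cost(\opt) + \eps \cdot \Psi(\loc)$, so the hypothesis of Theorem~\ref{thm:support} is met. Hence there exists $\loc' \in \mathfrak{F}_k$ with $|\loc - \loc'| \leq \rho$ and
\[ \cost(\loc') \leq \cost(\loc) + \frac{\cost(\opt) - \cost(\loc) + \eps \cdot \Psi(\loc)}{k}. \]

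Next, I would rearrange to isolate $\cost(\loc') - \cost(\opt)$. Subtracting $\cost(\opt)$ from both sides gives
\[ \cost(\loc') - \cost(\opt) \leq \left(1 - \tfrac{1}{k}\right)(\cost(\loc) - \cost(\opt)) + \frac{\eps \cdot \Psi(\loc)}{k}. \]
The key step is then to bound the stray $\eps \cdot \Psi(\loc)/k$ term using the assumption that $\loc$ is not nearly-good: rearranging $\cost(\loc) - \cost(\opt) > 2\eps \cdot \Psi(\loc)$ yields $\eps \cdot \Psi(\loc) < \tfrac{1}{2}(\cost(\loc) - \cost(\opt))$. Substituting this into the previous inequality gives
\[ \cost(\loc') - \cost(\opt) < \left(1 - \tfrac{1}{k}\right)(\cost(\loc) - \cost(\opt)) + \tfrac{1}{2k}(\cost(\loc) - \cost(\opt)) = \left(1 - \tfrac{1}{2k}\right)(\cost(\loc) - \cost(\opt)), \]
which is exactly the conclusion of the lemma.

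I do not expect any real obstacle here: the proof is essentially a two-line algebraic consequence of Theorem~\ref{thm:support} together with the quantitative ``not nearly-good'' hypothesis, and the factor $2$ in the definition of nearly-good is precisely what is needed to absorb the $\eps \cdot \Psi(\loc)/k$ error term from the structure theorem into the geometric improvement rate $(1 - 1/(2k))$.
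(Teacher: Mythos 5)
Your proof is correct and is essentially identical to the paper's own argument: both invoke Theorem~\ref{thm:support} (whose hypothesis holds since $2\eps\Psi(\loc)\ge\eps\Psi(\loc)\ge 0$) and then absorb the $\eps\Psi(\loc)/k$ term using $\eps\Psi(\loc)<\tfrac12(\cost(\loc)-\cost(\opt))$ from the failure of the nearly-good condition. Your explicit check that the hypothesis of Theorem~\ref{thm:support} is satisfied is a nice touch the paper leaves implicit; otherwise the two proofs coincide.
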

\begin{proof}
Consider the set $\loc'$ guaranteed by Theorem \ref{thm:support}. The fact that $\loc$ is not a nearly-good solution means
\begin{eqnarray*}
\cost(\loc') - \cost(\opt) & \leq & \cost(\loc) - \cost(\opt) + \frac{\cost(\opt) - \cost(\loc) + \eps \cdot \Psi(S)}{k} \\
& < & \cost(\loc) - \cost(\opt) + \frac{\cost(\opt) - \cost(\loc)}{2k}
= \left(1 - \frac{1}{2k}\right) \cdot (\cost(\loc) - \cost(\opt)).\vspace{-1.5mm}
\end{eqnarray*}
\vspace{-2.5mm}
\end{proof}

To argue about the number of iterations of Algorithm \ref{alg:local},
we make the assumption that all coordinates of all points in $\cl \cup \fa$ are integers. This is without loss of generality:  scaling all points by the product of all denominators increases the bit complexity
of the input by a polynomial factor and Algorithm \ref{alg:local} would behave exactly as it would before the scaling (i.e. would consider the same sequence of sets $\loc$).

Let $\Delta = \max_{j \in \cl, i \in \fa} \delta^2(i, j)$. Observe that $\cost(\loc) - \cost(\opt) \leq n \Delta$, $\cost(\loc)$ is an integer for any $\loc \in \mathfrak{F}_K$, and $\ln \Delta$ is polynomial
in the bit complexity of the input.
\begin{corollary} \label{cor:neargood}
When Algorithm \ref{alg:local} terminates, the returned solution is a nearly-good solution. Also, within $2k \cdot \ln(n \Delta)$ iterations Algorithm \ref{alg:local} will have had some iteration with $\loc$
being a nearly-good solution.
\end{corollary}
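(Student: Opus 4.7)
\textbf{Proof proposal for Corollary \ref{cor:neargood}.}

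The first assertion will follow from a short contradiction argument. Suppose the algorithm terminates with a solution $\loc$ that is not nearly-good. Since $\Psi(\loc) \geq 0$ and $\loc$ is not nearly-good, we have $\cost(\loc) > \cost(\opt) + 2\eps \Psi(\loc) \geq \cost(\opt)$, so Lemma \ref{lem:neargood} applies and yields a set $\loc' \in \mathfrak{F}_k$ with $|\loc-\loc'| \leq \rho$ and
\[
\cost(\loc') - \cost(\opt) \leq \left(1 - \tfrac{1}{2k}\right)(\cost(\loc) - \cost(\opt)) < \cost(\loc) - \cost(\opt),
\]
so $\cost(\loc') < \cost(\loc)$. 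But then the while-loop condition in Algorithm \ref{alg:local} is satisfied, contradicting termination.

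For the second assertion, I will track the potential $\phi(\loc) := \cost(\loc) - \cost(\opt) \geq 0$ across iterations. Let $\loc_0, \loc_1, \loc_2, \ldots$ denote the sequence of solutions produced by the algorithm. If $\loc_t$ is not nearly-good, Lemma \ref{lem:neargood} produces some candidate swap within distance $\rho$ achieving $\phi$-value at most $(1-\tfrac{1}{2k}) \phi(\loc_t)$. Because Algorithm \ref{alg:local} takes the \emph{best} such swap, $\loc_{t+1}$ satisfies the same bound, so $\phi(\loc_{t+1}) \leq (1-\tfrac{1}{2k})\phi(\loc_t)$. Iterating, as long as each of $\loc_0, \ldots, \loc_{T-1}$ is not nearly-good,
\[
\phi(\loc_T) \leq \left(1 - \tfrac{1}{2k}\right)^T \phi(\loc_0) \leq e^{-T/(2k)} \cdot n\Delta,
\]
using $\phi(\loc_0) \leq n\Delta$ (each of the $n$ clients costs at most $\Delta$). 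Setting $T = \lceil 2k \ln(n\Delta) \rceil$ gives $\phi(\loc_T) < 1$. Since coordinates are integers, $\cost$ values are integers as well (well, integer up to a constant after the scaling argument in the paper; we may assume genuine integrality), so $\phi(\loc_T) \in \mathbb{Z}_{\geq 0}$ forces $\phi(\loc_T) = 0$, i.e. $\cost(\loc_T) = \cost(\opt)$, which is trivially a nearly-good solution as $\Psi(\loc_T) \geq 0$. So at some iteration $t \leq T$ we must have encountered a nearly-good $\loc_t$ (otherwise the inductive bound contradicts the minimality of $\cost(\opt)$ through integrality).

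The only mild subtlety is the interaction between the ``best improving swap'' rule and Lemma \ref{lem:neargood}: the lemma merely guarantees the \emph{existence} of some good swap, but the algorithm's chosen swap can only do at least as well, which is exactly what is needed to propagate the geometric decrease. Beyond that, the argument is essentially a standard potential/integrality combination and the $\ln \Delta$ factor is polynomial in the bit complexity of the input as noted before the statement.
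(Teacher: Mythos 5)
Your proposal is correct and follows essentially the same route as the paper: the termination claim via Lemma \ref{lem:neargood} producing a strictly improving swap, and the iteration bound via the geometric decrease of $\cost(\loc)-\cost(\opt)$ under the best-swap rule, the bound $\cost(\loc_0)-\cost(\opt)\leq n\Delta$, and integrality of costs forcing the gap to $0$ within roughly $2k\ln(n\Delta)$ iterations. The only differences are cosmetic (your ceiling versus the paper's floor in the iteration count, and naming the gap a potential $\phi$), so there is nothing substantive to add.
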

\begin{proof}
Lemma \ref{lem:neargood} shows that if $\loc$ is not a nearly-good solution then there is a better solution $\loc' \in \mathfrak{F}_k$ with $|\loc-\loc'| \leq \rho$, so the local search algorithm can only terminate
with a nearly-good solution.

For the sake of contradiction, suppose that after $K = \lfloor2k \cdot \ln(n \Delta)\rfloor$ iterations  Algorithm \ref{alg:local} has still not encountered a nearly-good solution.
Say $\loc_0, \loc_1, \ldots, \loc_K \in \mathfrak{F}_k$ is the sequence of sets held by the algorithm after the first $K$ iterations, where $\loc_0$ is the initial set.

For $0 \leq i < K$, Lemma \ref{lem:neargood} and the fact that Algorithm \ref{alg:local} always chooses the best improving swap
shows $\cost(\loc_{i+1}) - \cost(\opt) \leq (1-1/(2k)) \cdot (\cost(\loc_i) - \cost(\opt))$. Thus,
\[ \cost(\loc_K) - \cost(\opt) \leq \left(1 - \frac{1}{2k}\right)^K \cdot (\cost(\loc_0) - \cost(\opt)) \leq \left(1 - \frac{1}{2k}\right)^K \cdot n \Delta < 1. \]
Because costs are integral, $\cost(\loc_K) - \cost(\opt) = 0$ which contradicts that $\loc_K$ is not a nearly-good solution.
\end{proof}

This does not yet show that Algorithm \ref{alg:local} terminates in a polynomial number of steps. This fact will be proven after the next subsection when we show the only nearly-good solution
in stable instances is $\opt$.


\subsection{Nearly-Good Solutions are Optimal in Stable Instances}
Our high-level approach is inspired by \cite{CS17}, but we must address larger technical challenges. Roughly speaking, the added difficulty is because the local search analysis
from \cite{FRS16A,FRS16} we are following has the bound on the cost change of the swaps depending mildly on $\cost(\loc)$. We are also burdened with proving that the optimum is found in a
polynomial number of iterations, something that was not addressed in \cite{CS17}.

Throughout this section, let $\loc$ be a fixed nearly-good solution. Define perturbed distances $\delta'(i,j)$ for $i \in \fa, j \in \cl$ as follows:
\[
\delta'(i,j) = \left\{\begin{array}{rl}
(1+\eps') \cdot \delta(i,j) & \text{if } i \neq \sigma(j, \loc) \\
\delta(i,j) & \text{otherwise}
\end{array}\right.
\]
Due to this \kmeans instance being $(1+\eps')$-perturbation stable, $\opt$ remains the unique optimum solution under these perturbed distances $\delta'$.
For any $\loc' \in \mathfrak{F}_k$, let $\cost'(\loc') = \sum_{j \in \cl} \min_{i \in \loc'} \delta'(i, j)^2$ be the cost of $\loc'$ under distances $\delta'$.
Partition the points in $\cl$ into the following groups:
\begin{itemize}
\item $\cl^1 = \{j \in \cl : \sigma(j, \loc) \in \loc-\opt \text{ and } \sigma(j, \opt) \in \loc \cap \opt\}$,
\item $\cl^2 = \{j \in \cl : \sigma(j, \loc) \in \loc\cap\opt \text{ and } \sigma(j, \opt) \in \opt-\loc\}$,
\item $\cl^3 = \{j \in \cl : \sigma(j, \loc), \sigma(j, \opt) \in \loc \cap \opt\}$, and
\item $\cl^4  = \{j \in \cl : \sigma(j, \loc) \in \loc-\opt \text{ and } \sigma(j, \opt) \in \opt - \loc\}$.
\end{itemize}
Observe $\cl^4 = \overline{\cl}_S$ (notation from the previous section) and that $\sigma(j, \opt) = \sigma(j, \loc)$ for $j \in \cl^3$.

As in the proof of Theorem \ref{thm:support}, let $c^*_j = \delta(j, \sigma(j, \opt))^2$ be the clustering cost incurred by point $j$ in the optimum solution and, analogously for $\loc$, $c_j = \delta(j, \sigma(j, \loc))^2$.
By considering the connection cost of each point on a case-by-case basis, we easily see
\begin{equation}\label{eqn:perturb}
\cost'(\opt) = \sum_{j \in \cl^1} (1+\eps')^2 \cdot c^*_j + \sum_{j \in \cl^2} \min\{(1+\eps')^2 \cdot c^*_j, c_j\} + \sum_{j \in \cl^3} c^*_j + \sum_{j \in \cl^4} (1+\eps')^2 \cdot c^*_j.
\end{equation}

Before putting all pieces together, we make one last observation. As $\loc$ is a nearly-good solution, $c^*_j \leq c_j$ for $j \in \cl^2$, and $c^*_j = c_j$ for $j \in \cl^3$, we have:
\begin{eqnarray*}
\sum_{j \in \cl^4} c_j & \leq & \sum_{j \in \cl^1} c_j + \sum_{j \in \cl^4}c_j = \cost(\loc) - \sum_{j \in \cl^2} c_j - \sum_{j \in \cl^3} c_j \\
& \leq & \cost(\loc) - \sum_{j \in \cl^2} c^*_j - \sum_{j \in \cl^3} c^*_j \leq \cost(\opt) + 2\eps \cdot \Psi(\loc)  - \sum_{j \in \cl^2} c^*_j - \sum_{j \in \cl^3} c^*_j \\
& = & \sum_{j \in \cl^1} c^*_j + \sum_{j \in \cl^4} c^*_j + 2\eps\left(\sum_{j \in \cl^4} c^*_j + c_j\right).
\end{eqnarray*}
\vspace{-1.5mm}
Rearranging,
\vspace{-1.5mm}
\begin{equation}\label{eqn:c4bound}
\sum_{j \in \cl^4} c_j \leq \frac{1}{1-2\eps}\left(\sum_{j \in \cl^1} c^*_j + (1+2\eps) \cdot \sum_{j \in \cl^4} c^*_j\right) \leq (1+6\eps) \cdot \left(\sum_{j \in \cl^1} c^*_j + \sum_{j \in \cl^4} c^*_j\right).
\end{equation}
\vspace{-3.5mm}
\begin{lemma}\label{lem:neargoodsol}
The nearly-good solution $\loc$ is the optimum solution.
\end{lemma}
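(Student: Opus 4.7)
The plan is to assume $\loc \neq \opt$ and derive a contradiction by combining the $(1+\eps')$-stability of the instance with the nearly-good hypothesis and (\ref{eqn:c4bound}).

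First, I would note that $\cost'(\loc) = \cost(\loc)$: for each $j \in \cl$, the $\delta'$-distance from $j$ to $\sigma(j,\loc)$ equals $\delta(j,\sigma(j,\loc))$ while every other center of $\loc$ has its distance to $j$ scaled by $(1+\eps')>1$, so $\sigma(j,\loc)$ remains the nearest center of $\loc$ under $\delta'$. Since the instance is $(1+\eps')$-stable, $\opt$ is the unique optimum under $\delta'$ as well, so $\cost'(\opt) \leq \cost'(\loc) = \cost(\loc)$, with strict inequality whenever $\loc \neq \opt$.

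Next, I would lower-bound $\cost'(\opt)$ using (\ref{eqn:perturb}). For each $j \in \cl^2$ both arguments of the inner $\min$ are at least $c^*_j$: trivially $(1+\eps')^2 c^*_j \geq c^*_j$, and $c_j \geq c^*_j$ because $\sigma(j,\loc) \in \loc\cap\opt$ is an $\opt$-center that is no closer to $j$ than $\sigma(j,\opt)$. Writing $A = \sum_{j \in \cl^1} c^*_j$ and $B = \sum_{j \in \cl^4} c^*_j$ and using the identity $(1+\eps')^2 - 1 = 6\eps$, this yields
\[
\cost'(\opt) \;\geq\; (1+\eps')^2(A+B) + \sum_{j \in \cl^2 \cup \cl^3} c^*_j \;=\; \cost(\opt) + 6\eps(A+B).
\]

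Combining with the nearly-good hypothesis $\cost(\loc) \leq \cost(\opt) + 2\eps\,\Psi(\loc) = \cost(\opt) + 2\eps(B+D)$, where $D = \sum_{j \in \cl^4} c_j$, the chain $\cost(\opt) + 6\eps(A+B) \leq \cost'(\opt) \leq \cost(\loc) \leq \cost(\opt) + 2\eps(B+D)$ simplifies to $3A + 2B < D$ (strict by the strict stability inequality, since $\loc \neq \opt$). Pairing this with (\ref{eqn:c4bound}), $D \leq (1+6\eps)(A+B)$, gives $2A + B < 6\eps(A+B)$. Since $\eps \leq 1/6$, we have $6\eps(A+B) \leq A+B$, so $2A+B < A+B$, forcing $A < 0$, which contradicts $A \geq 0$. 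Hence $\loc = \opt$.

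The main obstacle will be the constant juggling: the $2\eps$ factor in the definition of nearly-good, the $6\eps$ gain extracted from stretching distances by $(1+\eps')$, and the $(1+6\eps)$ in (\ref{eqn:c4bound}) must all align so that $\eps \leq 1/6$ suffices to eliminate $A$, $B$, and $D$ simultaneously. A coarser lower bound on $\cost'(\opt)$ (say, dropping the $\cl^2$ contribution) would not close the gap. Equally essential is the design of the perturbation $\delta'$: perturbing only the distances from $j$ to centers other than $\sigma(j,\loc)$ ensures $\cost'(\loc) = \cost(\loc)$ on the nose, so that stability translates directly into $\cost'(\opt) \leq \cost(\loc)$ with no slack.
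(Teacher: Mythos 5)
Your proof is correct and follows essentially the same route as the paper: the same perturbation $\delta'$, the same partition $\cl^1,\ldots,\cl^4$ with \eqref{eqn:perturb} and \eqref{eqn:c4bound}, and the same use of $(1+\eps')$-stability, merely rearranged as a proof by contradiction in the quantities $A$, $B$, $D$ instead of the paper's direct term-by-term comparison showing $\cost'(\loc)=\cost(\loc)\leq\cost'(\opt)$.
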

\begin{proof}
Using \eqref{eqn:c4bound}, we bound $\cost(\loc)$ in the following way:
\begin{eqnarray}
\cost(\loc) & \leq & \cost(\opt) + 2\eps \cdot \Psi(\loc) = \cost(\opt) + 2\eps \sum_{j \in \cl^4} c^*_j + 2\eps \sum_{j \in \cl^4} c_j \nonumber \\
& \leq & \cost(\opt) + 2\eps \sum_{j \in \cl^4} c^*_j + 2\eps \cdot (1+6\eps) \cdot \left(\sum_{j \in \cl^1} c^*_j + \sum_{j \in \cl^4} c^*_j\right) \nonumber \\
& \leq & \sum_{j \in \cl^1} (1+6\eps) \cdot c^*_j + \sum_{j \in \cl^2} c^*_j + \sum_{j \in \cl^3} c^*_j + \sum_{j \in \cl^4} (1+6\eps) \cdot c^*_j  \nonumber\\
& \leq & \sum_{j \in \cl^1} (1+6\eps) \cdot c^*_j + \sum_{j \in \cl^2} \min\{(1+6\eps) \cdot c^*_j, c_j\} + \sum_{j \in \cl^3} c^*_j + \sum_{j \in \cl^4} (1+6\eps) \cdot c^*_j. \label{eqn:bound}
\end{eqnarray}
The last bound again uses $c^*_j \leq c_j$ for $j \in \cl^2$. Recall we chose $\eps$ so $(1+6\eps) = (1+\eps')^2$. Thus, combining \eqref{eqn:perturb}, \eqref{eqn:bound}
and the simple observation that $\cost'(\loc) = \cost(\loc)$ we see
\[ \cost'(\loc) = \cost(\loc) \leq \cost'(\opt). \]
Finally, because the instance is $(1+\eps')$-stable with $\opt$ being the unique optimum, it remains the unique optimum solution under the perturbed distances $\delta'$. This shows $\loc = \opt$. 
\end{proof}
We now conclude the proof of our main algorithmic result.
\begin{proof}[Proof of Theorem \ref{thm:alg}]
By Corollary \ref{cor:neargood}, within a polynomial number of iterations Algorithm \ref{alg:local} will have $\loc$ being a nearly-good solution. By Lemma \ref{lem:neargoodsol}, $\loc = \opt$.
Certainly Algorithm \ref{alg:local} will then terminate because there can be no improving swap for an optimal solution.
\end{proof}

We make the following interesting observation.
It states that the local-search algorithm provided earlier, when truncated to a polynomial number of steps,
provides a PTAS for arbitrary (not necessarily stable) instances of \kmeans and will fully solve stable instances 
whose stability constant is related to the $\eps$ in the PTAS approximation guarantee.

Consider fixed $0 < \eps \leq 1/6$ and fixed dimension $d$. Let $\rho := \rho(\eps, d)$ be as before (the constant in \cite{FRS16A,FRS16}).
\begin{corollary}
Consider an instance $\cl = (\cX, \fa, \delta)$ of \kmeans in a metric with doubling dimension $d$.
If we stop the loop of the $\rho$-swap local search heuristic in Algorithm \ref{alg:local} after $2k \cdot \ln(n \Delta)$ iterations,
then if $\cl$ is $\sqrt{(1+6\eps)}$-stable the algorithm will find the optimum solution and, otherwise,
the returned solution $\loc$ satisfies $\cost(\loc) \leq (1+6\eps) \cdot \cost(\opt)$.
\end{corollary}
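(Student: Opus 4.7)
The plan is to combine the machinery already in place, namely Corollary \ref{cor:neargood}, Lemma \ref{lem:neargoodsol}, and inequality \eqref{eqn:c4bound}, with the observation that the best-swap local search produces a monotonically non-increasing cost sequence. Let $\loc_0, \loc_1, \ldots, \loc_T$ be the iterates of Algorithm \ref{alg:local} during the first $T := \lfloor 2k \cdot \ln(n\Delta)\rfloor$ iterations, with $\loc_T$ the truncated output. Since each step strictly improves the cost or terminates, we have $\cost(\loc_T) \leq \cost(\loc_t)$ for every $t \leq T$. By Corollary \ref{cor:neargood}, at least one $\loc_t$ with $t \leq T$ is a nearly-good solution; pick such an index $t^\#$. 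The entire task reduces to bounding $\cost(\loc_{t^\#})$.

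For the non-stable half, I would essentially replay the computation already performed inside the proof of Lemma \ref{lem:neargoodsol}, but stop just before the step that invokes $(1+\eps')$-stability. Starting from $\cost(\loc_{t^\#}) \leq \cost(\opt) + 2\eps \cdot \Psi(\loc_{t^\#}) = \cost(\opt) + 2\eps \sum_{j \in \cl^4}(c^*_j + c_j)$, apply \eqref{eqn:c4bound}, which is derived purely from the nearly-good property (no stability needed), to replace $\sum_{j \in \cl^4} c_j$ with at most $(1+6\eps)(\sum_{j \in \cl^1} c^*_j + \sum_{j \in \cl^4} c^*_j)$. Expanding $\cost(\opt)$ as a sum over $\cl^1 \cup \cl^2 \cup \cl^3 \cup \cl^4$ and using $\eps \leq 1/6$ to verify that the per-client coefficients are bounded by $1+6\eps$ on $\cl^1 \cup \cl^4$ and by $1$ on $\cl^2 \cup \cl^3$ (the $\cl^4$ coefficient is $1 + 4\eps + 12\eps^2$, and $12\eps^2 \leq 2\eps$ precisely under $\eps \leq 1/6$), we conclude $\cost(\loc_{t^\#}) \leq (1+6\eps) \cost(\opt)$. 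Monotonicity then gives $\cost(\loc_T) \leq (1+6\eps)\cost(\opt)$.

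For the stable half, note that $\sqrt{1+6\eps}$-stability is precisely $(1+\eps')$-stability with the $\eps'$ chosen so that $1+6\eps = (1+\eps')^2$; this is exactly the parameter relationship used to prove Lemma \ref{lem:neargoodsol}. That lemma therefore applies verbatim and forces $\loc_{t^\#} = \opt$. Since $\opt$ admits no strictly improving swap, every subsequent iterate is also $\opt$, so $\loc_T = \opt$. There is no genuine obstacle in this argument; the only item requiring care is to keep the bookkeeping of $\eps$, $\eps'$, and $\rho = \rho(\eps,d)$ consistent with what was fixed when invoking the structure theorem and Corollary \ref{cor:neargood}, but this is already done by the statement of the corollary, so no additional work is needed.
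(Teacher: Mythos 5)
Your proposal is correct and follows essentially the same route as the paper: invoke Corollary \ref{cor:neargood} (which needs no stability) to get a nearly-good iterate within the budget, apply Lemma \ref{lem:neargoodsol} verbatim in the $\sqrt{1+6\eps}$-stable case, and extract the $(1+6\eps)$-approximation from the nearly-good property plus \eqref{eqn:c4bound} and monotonicity otherwise. The only difference is cosmetic: the paper bounds $\Psi(\loc)\leq\cost(\opt)+\cost(\loc)$ and rearranges to $\frac{1+2\eps}{1-2\eps}\leq 1+6\eps$, whereas you replay the per-client coefficient bookkeeping from \eqref{eqn:bound}; both use $\eps\leq 1/6$ in exactly the same place and give the identical bound.
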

\begin{proof}
We already argued it finds the optimum solution in $(1+\eps')$-stable instances where $\eps'$ satisfies $(1+\eps')^2 = 1+6\eps$. From Corollary \ref{cor:neargood}, which did not require the assumption that the instance
is stable, within $2k \cdot \ln(n \Delta)$ iterations some set $\loc$ considered in the algorithm satisfies $\cost(\loc) \leq \cost(\opt) + 2\eps \cdot \Psi(\loc)$.

Deriving the bound in \eqref{eqn:c4bound} also did not rely on stability, so
\[
\cost(\loc) \leq \cost(\opt) + 2\eps \cdot \Psi(\loc) \leq (1+2\eps) \cdot \cost(\opt) + 2\eps \cdot \cost(\loc).
\]
Rearranging, $\cost(\loc) \leq \frac{1+2\eps}{1-2\eps} \cdot \cost(\opt) \leq (1+6\eps) \cdot \cost(\opt).$
Thus, the final set returned by Algorithm \ref{alg:local} is at most this expensive: at most $(1+6\eps) \cdot \cost(\opt)$.
\end{proof}

\section{Roadmap of the Reduction for Theorem \ref{thm:hardness}}\label{sec:hardness0}
Our overall goal in the remaining sections is to prove Theorem \ref{thm:hardness}.
We remark that all of our reductions are deterministic reductions and run in polynomial time. The only randomization
in the reduction is in Theorem \ref{PCP-hypothesis} itself. Indeed, this seems essential given the current understanding of 
\uqsat as the only known hardness proofs are by randomized reductions.
Ultimately, by composing Theorem \ref{PCP-hypothesis} with our reductions, we obtain a randomized, polynomial time reduction from every language $L \in NP$
to \kmeans that has the following property. For every instance $I$ of $L$ we will have computed a value $c_I$ such that the resulting \kmeans instance $\cI_I$ has the following properties
depending on whether $I$ is a yes case or a no case.
\begin{itemize}
\vspace{-1.5mm}
\item[{\bf Yes case}:] With probability $\geq 1/{\rm poly}(|I|)$, $\cI_I$ is $s$-stable for some universal constant $s$ and the optimum solution to $\cI_I$ has cost $c_I$.
\vspace{-1.5mm}
\item[{\bf No case}:] Always, the optimum solution cost to $\cI_I$ is at least $\gamma \cdot c_I$ for some universal constant $\gamma>1$.
\end{itemize}
Given this, if there was an efficient $\gamma'$-approximation (for some $\gamma'<\gamma$) for $s$-stable instances of discrete \kmeans in $\mathbb R^d$ where $d$ is part of the input
then we could decide languages in $NP$ in the following way. By repeating the reduction polynomially many times and running the $\gamma'$-approximation
on each of the resulting \kmeans instances, with probability $\geq 1/2$  if $I$ was a yes instance then we would find some solution with cost $< \gamma \cdot c_I$
and, always, if $I$ was a no instance then every solution would have cost $\geq \gamma \cdot c_I$. That is, we would have decided $L$ with a randomized, polynomial
time algorithm with one-sided error (in the yes case) meaning $NP=RP$.

Starting with Theorem \ref{PCP-hypothesis}, we first reduce $Q$ parameter in \stqsatb to 3: that is we provide a stability-preserving reduction from \stqsatb to \stsatb\footnote{Paradise \cite{Orr} also presents a reduction from \stqsatb to the variant of \stsatb where every clause involves at most 3 variables. But we need
each clause to depend on exactly 3 distinct variables so we will present our approach here. We call it stable E3SAT-B}. We should point out that the letter \textsc{B} in \stsatb and \stqsatb (and other subsequent problems we consider) is to mean bounded occurrence but the exact bound may change as we reduce from one problem to another. We articulate these bounds in our reductions. For instance, in our reduction from an instance $\Phi$ of \stqsatb with bound $B$ we generate an instance $\Psi$ of \stsatb with bound $B'=\max\{7B,4BQ^2\}$ on the occurrence of each variable.

Our reduction from \stqsatb to \stsatb is a relatively simple reduction,
but it serves as a good introduction to the concept of preserving stability. Then we reduce \stsatb to stable instances of \textsc{3D-Matching} for some appropriate concept of stability
for this problem. Problem definitions, precise details of what we mean by stability for \textsc{3D-Matching}, and other finer-grained details we need to preserve will be discussed later. Finally, we reduce stable \textsc{3D-Matching}
instances to stable \kmeans instances to complete our proof.

\section{A Stability-Preserving Reduction From \stqsatb to \stsatb}\label{sec:hardness1}

Our first step is to show
hardness of \stsatb. There are standard reductions from \qsatb to \tsatb and,
if $Q$ is regarded as a constant, the most commonly-taught reduction is also an $L$-reduction. But more is needed to preserve stability, likely
the simple reduction the reader has in mind is not even parsimonious. While the reduction in this section is still quite simple, 
it serves as a warm-up to the concept of preserving stability in a reduction and it is a
necessary technical step toward our final goal. This is the first step of proof of hardness of stable instances of \kmeans.

Let $B, Q, s, \eps$ be constants from Theorem \ref{PCP-hypothesis}.
Let $\Phi$ be an instance of \textsc{QSAT-B}; we construct an 
instance $\Psi$ of \textsc{E3SAT-B} (3SAT-B with exactly 3 literals per clause) for some bound $B'$ on the number of occurrences of each variable which will depend only on $Q$ and $B$.
Properties of this reduction, including how it preserves stability, will be proven below.
Our reduction actually produces an instance of \stqsatb with {\em exactly} 3 literals per clause, hence it is actually
a reduction to stable instances of \textsc{E3SAT-B} and we require addressing clauses of size $< 3$ of the \stqsatb instance in the reduction.

Say $\Phi$ has variables $X$ and clauses $\mathcal C$ where each clause $C \in \mathcal C$ is viewed as a set of literals over distinct variables of $X$. We may depict a clause as, say, $x \vee \overline y \vee z$.
Before describing the reduction, consider the following gadget. For literals $\ell_1, \ell_2, \ell_3$ let $F(\ell_1, \ell_2, \ell_3)$ be the following collection of seven 3CNF clauses, applying the reduction $\overline{\overline x} \rightarrow x$ for any doubly-negated variable:
\begin{center}
$\overline \ell_1 \vee \ell_2 \vee \ell_3, ~~~~~ \overline \ell_1 \vee \ell_2 \vee \overline \ell_3, ~~~~~ \overline \ell_1 \vee \overline \ell_2 \vee \ell_3, ~~~~~ \overline \ell_1 \vee \overline \ell_2 \vee \overline \ell_3,$\\
$\ell_1 \vee \overline \ell_2 \vee \ell_3, ~~~~~ \ell_1 \vee \overline \ell_2 \vee \overline \ell_3, ~~~~~  \text{and } \ell_1 \vee \ell_2 \vee \overline \ell_3$.
\end{center}
One can easily check that the only way to satisfy all clauses in $F(\ell_1, \ell_2, \ell_3)$ is for all literals to be \textsc{False}.
So $F(\ell_1,\ell_2,\ell_3)$ enforces that all these three literals have to be false.

Our instance $\Psi$ of \textsc{3SAT-B} is constructed as follows. The variables of $\Psi$ consist of $X$ and a collection of new variables $Y$ we introduce below as we describe the clauses.
For each $C \in \mathcal C$, say $C = \{\ell^1_C, \ldots, \ell^{|C|}_{C}\}$.
We introduce some new variables $Y_C$ and clauses $\mathcal C'_C$ for $\Psi$.
\begin{itemize}
\item {\bf Case $|C| = 1$}. Let $Y_C = \{y_C, z_C\}$ and $\mathcal C'_C = F(\overline{\ell^1_C}, y_C, z_C)$.
\item {\bf Case $|C| = 2$}. Let $Y_C = \{w_C, y_C, z_C\}$ and $\mathcal C'_C = \{\ell^1_C \vee \ell^2_C \vee w_C\} \cup F(w_C, y_C, z_C)$.
\item {\bf Case $|C| = 3$}. Let $Y_C = \emptyset$ and $\mathcal C'_C = \{C\}$.
\item {\bf Case $|C| \geq 4$}. Let $Y_C = \{y^0_C, \ldots, y^{|C|}_C, z_C\}$ and $\mathcal C'_C$ be comprised of the following constraints:
\begin{enumerate}
\item $y^{i-1}_C \vee \ell^i_C \vee \overline{y^i_C}$ for each $1 \leq i \leq |C|$,
\item $\overline{y^{i-1}_C} \vee y^i_C \vee z_C$ for each $1 \leq i \leq |C|$,
\item $\overline{\ell^i_C} \vee y^{j-1}_C \vee z_C$ for each pair $1 \leq i < j \leq |C|$,
\item and $F(y^0_C, \overline{y^{|C|}_C}, z_C)$.
\end{enumerate}
\end{itemize}
Note the coarse upper bound of $|\mathcal C'_C| \leq 3Q^2$ and $|Y_C| \leq 2Q$ holds in each case.

Finally, the variables of $\Psi$ are $X' := X \cup Y$ where $Y=\bigcup_{C \in \mathcal C} Y_C$ 
and the constraints of $\Psi$ are $\mathcal C' := \cup_{C \in \mathcal C} \mathcal C'_C$. By construction, each
clause in $\Psi$ has exactly three literals over distinct variables. Also, each variable of $\Psi$ appears in at most $B' = \max\{7B, 4BQ^2\}$ clauses of $\Psi$.
The cases with $|C| = 1$ or 2 are simply padding gadgets to get exactly 3 literals per clause. For Case 4,
the constraints are to ensure that $z_C = \textsc{False}$, variables $y^0_C,\ldots,y^{|C|}_C$ are monotonic: set (i.e 
\textsc{False}, \textsc{False}, \textsc{False}, $\ldots$ , \textsc{True}, \textsc{True}) with
one switch from \textsc{False} to \textsc{True}, and that the switch 
from \textsc{False} to \textsc{True} appears at the first index $i$ where $\ell_i$
is \textsc{True}. This property is proven during the proof of Claim \ref{claim:pars}.

Let $n = |X|, m = |\mathcal C|, n' = |X'|$ and $m' = |\mathcal C'|$. Note,
\[ n' = n + \sum_{C \in \mathcal C} |Y_C| \leq n + (Q+1) \cdot m \leq n + (Q+1)B\cdot n \leq 2QB \cdot n 
\quad\text{and}\quad
 m' = \sum_{C \in \mathcal C} |\mathcal C'_C| \leq 3Q^2 \cdot m. \]

The following two claims are straightforward.
\begin{claim}\label{claim:pars}
Any satisfying assignment for $\Phi$ can be extended uniquely to a satisfying assignment for $\Psi$. 
Conversely, the restriction of any satisfying assignment for $\Psi$ to variables in $X$
is a satisfying assignment for $\Phi$.
\end{claim}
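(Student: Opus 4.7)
The plan is to establish the stronger clause-by-clause property: for each $C \in \mathcal{C}$, (i) every assignment to the literals of $C$ that satisfies $C$ extends uniquely to $Y_C$ so as to satisfy $\mathcal{C}'_C$, and (ii) every assignment satisfying $\mathcal{C}'_C$ sets at least one literal of $C$ to true. Since the sets $Y_C$ are pairwise disjoint across clauses, these per-clause statements combine immediately into both directions of the claim. The one observation I would reuse throughout is that each of the seven clauses of $F(\ell_1, \ell_2, \ell_3)$ is violated at one of the seven truth assignments to $(\ell_1, \ell_2, \ell_3)$ other than $(\textsc{False}, \textsc{False}, \textsc{False})$, so the unique assignment satisfying $F(\ell_1, \ell_2, \ell_3)$ sets all three of $\ell_1, \ell_2, \ell_3$ to $\textsc{False}$.

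Given this, the cases $|C| \in \{1,2,3\}$ are essentially immediate. For $|C|=1$, the gadget $F(\overline{\ell^1_C}, y_C, z_C)$ forces $\overline{\ell^1_C} = y_C = z_C = \textsc{False}$, so $\ell^1_C = \textsc{True}$ and the extension on $Y_C$ is uniquely determined. For $|C|=2$, $F(w_C, y_C, z_C)$ pins $w_C, y_C, z_C$ all to $\textsc{False}$, reducing $\ell^1_C \vee \ell^2_C \vee w_C$ to $\ell^1_C \vee \ell^2_C$ and again fixing the extension while forcing one of the original literals to be true. Case $|C|=3$ is trivial.

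The substantive work is the case $|C|\geq 4$, which I would handle by chaining four observations in order. First, the gadget in group~(4) pins $y^0_C = \textsc{False}$, $y^{|C|}_C = \textsc{True}$, and $z_C = \textsc{False}$. Second, with $z_C = \textsc{False}$ the clauses of group~(2) read $y^{i-1}_C \Rightarrow y^i_C$, so $y^0_C, y^1_C, \ldots, y^{|C|}_C$ is monotone non-decreasing and, given its boundary values, flips from $\textsc{False}$ to $\textsc{True}$ at a unique index $k \in \{1, \ldots, |C|\}$. Third, the clause of group~(1) at $i=k$ reduces to $\ell^k_C$, forcing $\ell^k_C = \textsc{True}$ and hence satisfying $C$. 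Fourth, group~(3) now reads $\ell^i_C \Rightarrow y^{j-1}_C$ for all $1 \leq i < j \leq |C|$, which together with monotonicity forces $k$ to equal the smallest index with $\ell^k_C = \textsc{True}$. Together these show that any satisfying assignment of $\mathcal{C}'_C$ is determined on $Y_C$ by its restriction to the literals of $C$ via the ``smallest-true-index'' rule, and in particular satisfies $C$. For the converse extension I would, given a satisfying assignment to $C$, let $k^*$ be its smallest true-literal index and set $z_C = \textsc{False}$, $y^i_C = \textsc{False}$ for $i < k^*$, and $y^i_C = \textsc{True}$ for $i \geq k^*$; a three-way case split on whether $i < k^*$, $i = k^*$, or $i > k^*$ then verifies that every constraint in groups~(1)--(4) holds.

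The only step where I expect anything to need real care---the closest thing to an obstacle---is this last verification in case $|C| \geq 4$: one must check that the monotone extension defined by $k^*$ really does satisfy every clause in groups~(1) and~(3), since a single missed case would destroy uniqueness. But each such clause contains a literal that becomes true under the smallest-true-index recipe (either by the defining property of $k^*$ or by the monotonicity of the $y^i_C$'s), so the check is genuine book-keeping rather than a conceptual obstacle.
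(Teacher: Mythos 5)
Your proposal is correct and matches the paper's own argument in all essentials: you pin $y^0_C$, $y^{|C|}_C$, $z_C$ via the $F$ gadget, derive monotonicity of the $y^i_C$ from the group~(2) clauses, and use groups~(1) and~(3) to force the flip index to coincide with the least true literal, exactly as in the paper's proof (your per-clause uniqueness-plus-satisfaction framing and the explicit verification of the forward extension are just slightly more detailed bookkeeping). No substantive difference to report.
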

\begin{proof}
First, consider a satisfying assignment $\phi : X \rightarrow \{\textsc{True}, \textsc{False}\}$ for $\Phi$. For each clause $C \in \mathcal C$, we claim there is a unique way to assign values to $Y_C$ to satisfy all clauses in $\mathcal C'_C$.
This is simple to verify when $|C| \leq 3$, recalling the only way to satisfy all clauses in $F(u, v, w)$ 
is for the three literals to be false.

Suppose $|C| \geq 4$.
We know one of the literals $\ell^1_C, \ldots, \ell^{|C|}_C$ is true, let $i$ be the least-indexed literal that is true. Set $y^0_C, \ldots, y^{i-1}_C$ to \textsc{False}, $y^i_C, \ldots, y^{|C|}_C$
to \textsc{True}, and $z_C$ to \textsc{False}. This satisfies all clauses in $\mathcal C'_C$. We also claim this is the only way to satisfy all clauses.

To see this, note $F(y^0_C, \overline{y^{|C|}_C}, z_C)$ forces $y^0_C = z_C = \textsc{False}$ and $y^{|C|}_C = \textsc{True}$. Then the second set of constraints ensure the $y_C$-variables
are ``monotone'': that there exists some $1 \leq i' \leq |C|$ with $y^{j}_C = \textsc{False}$ for $j < i'$ and $y^j_C = \textsc{True}$ for $j \geq i'$.

It cannot be that $i' < i$, otherwise $y^{i'-1}_C \vee \ell^{i'}_C \vee \overline{y^{i'}_C}$ is \textsc{False} (recall $\ell^{i'}_C$ is \textsc{False} for $i' < i$).
It also cannot be that $i' > i$, otherwise $\overline{\ell^i_C} \vee y^{i'-1} \vee z_C$ is \textsc{False}. Thus, $i' = i$ meaning the assignment described above is the only way to extend the satisfying assignment for $\Phi$
to one that satisfies $\mathcal C'_C$.

The argument essentially reverses. Consider a satisfying assignment $\psi : X' \rightarrow \{\textsc{True}, \textsc{False}\}$ for
$\Psi$ and consider a clause $C \in \mathcal C$ of $\Phi$. If $|C| \leq 3$, it is easy to see the restriction of the assignment for $\Psi$ to $X$
satisfies $C$ itself. So suppose $|C| \geq 4$. As argued before, we must have both $y^0_C$ and $z_C$ being \textsc{False} and $y^{|C|}_C$ being \textsc{True}. Consider the least index $1 \leq i \leq |C|$ with $y^{i-1}_C$ being
\textsc{False} and $y^i_C$ being \textsc{True}. Then $\ell^i_C$ is \textsc{True}, so the original clause $C$ of $\Phi$ is satisfied.
\end{proof}

\begin{claim}\label{claim:lred}
Suppose at most $(1-\gamma) \cdot m$ clauses of $\Phi$ can be satisfied by any assignment for some $\gamma \geq 0$. Then at most $\left(1-\frac{\gamma}{3Q^2}\right)\cdot m'$ clauses of $\Psi$ can be satisfied
by any assignment.
\end{claim}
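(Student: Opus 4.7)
The plan is to prove the contrapositive in local form. Given any assignment $\beta$ for $\Psi$, restrict it to $X$ to obtain an assignment $\alpha$ for $\Phi$. The heart of the argument is a \emph{gadget unsatisfiability lemma}: for every clause $C \in \mathcal C$ that $\alpha$ fails to satisfy, at least one clause of $\mathcal C'_C$ is not satisfied by $\beta$. Once this is established, the conclusion is immediate: by hypothesis at least $\gamma m$ clauses of $\Phi$ are unsatisfied by $\alpha$, so at least $\gamma m$ clauses of $\Psi$ are unsatisfied by $\beta$ (since the gadgets $\mathcal C'_C$ are disjoint by construction), and the bound $m' \leq 3Q^2 m$ gives that the fraction of satisfied clauses of $\Psi$ is at most $1 - \gamma m/m' \leq 1 - \gamma/(3Q^2)$.

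The gadget unsatisfiability lemma requires checking the four cases of the construction. Cases $|C| = 1, 2, 3$ are easy. For $|C|=1$, if $\alpha$ does not satisfy $\ell^1_C$ then $\overline{\ell^1_C}$ is true under $\beta$, and since the only assignment satisfying all seven clauses of $F(\overline{\ell^1_C}, y_C, z_C)$ forces all three literals to be false, at least one clause of $F$ must be unsatisfied. For $|C|=2$, if $\alpha$ falsifies both $\ell^1_C$ and $\ell^2_C$ then either $w_C$ is false (and the clause $\ell^1_C \vee \ell^2_C \vee w_C$ is unsatisfied) or $w_C$ is true (and $F(w_C, y_C, z_C)$ contains an unsatisfied clause). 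Case $|C|=3$ is trivial since $\mathcal C'_C = \{C\}$.

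The main obstacle is Case $|C| \geq 4$. I plan to argue by contradiction: assume every clause of $\mathcal C'_C$ is satisfied by $\beta$ while all $\ell^i_C$ are false. The seven clauses from $F(y^0_C, \overline{y^{|C|}_C}, z_C)$ force $y^0_C = \textsc{False}$, $y^{|C|}_C = \textsc{True}$, and $z_C = \textsc{False}$. Then the clauses of type (2), which read $\overline{y^{i-1}_C} \vee y^i_C \vee z_C$, force the sequence $y^0_C, y^1_C, \ldots, y^{|C|}_C$ to be monotone non-decreasing under $z_C = \textsc{False}$; since it starts at \textsc{False} and ends at \textsc{True} there is some index $i^*$ with $y^{i^*-1}_C = \textsc{False}$ and $y^{i^*}_C = \textsc{True}$. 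But then the clause $y^{i^*-1}_C \vee \ell^{i^*}_C \vee \overline{y^{i^*}_C}$ of type (1) is satisfied only if $\ell^{i^*}_C = \textsc{True}$, contradicting that $\alpha$ falsifies $C$. Combining the four cases yields the gadget lemma and completes the proof.
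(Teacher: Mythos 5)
Your proof is correct and follows essentially the same route as the paper: the paper's argument also restricts the $\Psi$-assignment to $X$, invokes the per-gadget fact that every falsified clause $C$ of $\Phi$ forces at least one unsatisfied clause in $\mathcal C'_C$ (the contrapositive of the direction of Claim \ref{claim:pars} that it cites), and concludes with the same counting $\gamma m \geq \frac{\gamma}{3Q^2} m'$. Your case analysis for the gadget lemma, including the $|C|\geq 4$ monotonicity argument, matches the paper's appendix proof of Claim \ref{claim:pars}; you merely inline it rather than cite it.
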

\begin{proof}
Consider a truth assignment $\psi$ for $\Psi$ and let $\phi$ be its restriction to $X$.
By Claim \ref{claim:pars}, any clause $C \in \mathcal C$ of $\Phi$ is satisfied by $\phi$ if all corresponding clauses in $\mathcal C'_C$ are satisfied by $\psi$.
This can happen for at most $(1-\gamma) \cdot m$ clauses $C \in \mathcal C$. So at least $\gamma \cdot m$ clauses $C \in \mathcal C$ have at least one corresponding clause in $\mathcal C'_C$ not being
satisfied.
Overall, the number of unsatisfied clauses in $\Psi$ is at least
$\gamma \cdot m \geq \frac{\gamma}{3Q^2}\cdot m'$.
\end{proof}

Finally, we show the reduction preserves stability. One might be tempted to think the analysis will be very similar to the {\bf no} case analysis. However, as stated in the introduction, we will later given an example showing
these concepts are fundamentally different by showing a well-known parsimonious $L$-reduction that does not preserve stability. This example is found in Appendix \ref{app:nostable}.

We have not attempted to optimize constants in our analysis below, but dependence on $B$ and $Q$ seems essential. The following shows that the stability drops only by at most a constant factor,
assuming $\Phi$ has bounded clause size and bounded occurrence for each variable.
\begin{theorem}
\label{thm:qsat_red}
Suppose $\Phi$ is $s$-stable. Then $\Psi$ is $s'$-stable where $s' = \frac{s}{8BB'Q^2}$.
\end{theorem}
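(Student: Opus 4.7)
The plan is to verify the two requirements for $s'$-stability of $\Psi$: uniqueness of the satisfying assignment, and that every assignment $y$ of $\Psi$ violates at least an $s' \cdot \HW{y}{y^*}$ fraction of $\Psi$'s clauses. Uniqueness of $y^*$ is immediate from Claim \ref{claim:pars}: $x^*$ (the unique satisfying assignment of $\Phi$) extends to a unique $y^*$ that satisfies $\Psi$, and the restriction of any satisfying assignment of $\Psi$ to $X$ is a satisfying assignment of $\Phi$. The real work lies in the Hamming-proportional lower bound on the number of violations.

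Fix any assignment $y$ of $\Psi$ and write $x = y|_X$. Let $d_X$ and $d_Y$ count the $X$- and $Y$-coordinates on which $y$ disagrees with $y^*$, so $n' \cdot \HW{y}{y^*} = d_X + d_Y$; for each original clause $C$ let $d^C_Y = |\{v \in Y_C : y_v \neq y^*_v\}|$ and let $v^C$ be the number of clauses of $\mathcal{C}'_C$ violated by $y$, and set $V_\Psi = \sum_C v^C$. The heart of the proof is two gadget-level rigidity statements, both proved by case analysis on $|C| \in \{1, 2, 3, \geq 4\}$: \textbf{(a)} if $x$ violates $C$, then $v^C \geq 1$; and \textbf{(b)} if $y$ agrees with $y^*$ on every $X$-variable of $\mathrm{vars}(C)$ and $d^C_Y \geq 1$, then $v^C \geq 1$. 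Statement (a) is immediate for $|C| = 3$ since $C \in \mathcal{C}'_C$; for $|C| \in \{1, 2\}$ the $F(\cdot)$ padding gadget forces the padding literal (or $\overline{\ell^1_C}$) to a value incompatible with $C$ being violated; for $|C| \geq 4$ the boundary conditions $y^0_C = 0$, $y^{|C|}_C = 1$, $z_C = 0$ forced by the type-4 gadget $F(y^0_C, \overline{y^{|C|}_C}, z_C)$, combined with the forward chain $y^{i-1}_C \leq y^i_C$ from type-2 clauses (once $z_C = 0$) and the backward chain $y^{i-1}_C \geq y^i_C$ from type-1 clauses (once all $\ell^i_C$ are false), forces $y^0_C = y^{|C|}_C$ and contradicts $0 \neq 1$. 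Statement (b) is just the local parsimoniousness underlying Claim \ref{claim:pars}: with the $X$-variables of $C$ set to $x^*$'s values, $y^*|_{Y_C}$ is the unique assignment to $Y_C$ satisfying $\mathcal{C}'_C$, so any $Y_C$-deviation costs at least one clause.

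With (a) and (b) in hand, the main inequality is assembled by counting. Statement (a) combined with the $s$-stability of $\Phi$ yields
\[ V_\Psi \;\geq\; |\{C \in \mathcal{C} : x \text{ violates } C\}| \;\geq\; s \cdot (d_X/n) \cdot m. \]
To bound $d_Y$: every $C$ with $v^C = 0$ either has $d^C_Y = 0$ (by (b)) or has some $X$-variable of $\mathrm{vars}(C)$ where $x \neq x^*$; the latter class contains at most $Bd_X$ clauses (since each differing $X$-variable lies in $\leq B$ clauses of $\Phi$), each contributing $\leq |Y_C| \leq 2Q$ to $d_Y$. Combined with the trivial $\sum_{C : v^C \geq 1} d^C_Y \leq 2Q V_\Psi$, this gives
\[ d_X + d_Y \;\leq\; (1 + 2QB)\, d_X + 2Q\, V_\Psi. \]

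To finish, the $d_X = 0$ case follows immediately from $V_\Psi \geq d_Y/(2Q)$, and when $d_X \geq 1$ we substitute $d_X \leq nV_\Psi/(sm)$ from the stability bound above. The bounded-occurrence inequalities $Qm \leq Bn$, $n \leq Qm$, $m' \leq 3Q^2 m$, and $n' \geq n$ (which give $n'/m' \geq 1/(3QB)$) then yield, after a short calculation, $V_\Psi/m' \geq (s/(8BB'Q^2)) \cdot (d_X + d_Y)/n'$, which is precisely $s'$-stability of $\Psi$ with the claimed $s' = s/(8BB'Q^2)$. The main obstacle is the detailed (though elementary) case analysis for (a) at $|C| \geq 4$ and shepherding the occurrence-based constants cleanly into the stated form of $s'$.
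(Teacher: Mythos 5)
Your proof is correct and follows essentially the same route as the paper: your facts (a) and (b) are precisely the paper's two key observations --- that an unsatisfied clause $C$ of $\Phi$ forces an unsatisfied clause of $\mathcal C'_C$, and that a fully satisfied gadget with a deviation on $Y_C$ forces an $X$-deviation on a variable of $C$ (the paper's Subcase 2.2 rigidity argument) --- and your charging of $d_Y$ either to violated gadgets or to $X$-deviations, combined with $s$-stability of $\Phi$ and the occurrence bounds $|Y_C|\le 2Q$, $|X|\le Qm$, $m'\le B'n'$, is the paper's Case 1 / Case 2 (with subcases) analysis collapsed into a single inequality. One minor note: the inequality $Qm\le Bn$ you invoke need not hold when clauses of $\Phi$ have fewer than $Q$ variables (which the reduction explicitly allows), but replacing it by $m\le Bn$, or simply using $m'\le B'n'$ as the paper does, still yields $V_\Psi/m'\ge s'\,(d_X+d_Y)/n'$ with the stated $s'=\frac{s}{8BB'Q^2}$ and plenty of slack, so nothing breaks.
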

\begin{proof}
Because $\Phi$ has a unique satisfying assignment $x^*$, then by Claim \ref{claim:pars}
there is some $y^*$ (assignment of values to variables in $Y$)
such that $(x^*, y^*)$ is the unique satisfying assignment for $\Psi$. Consider any truth assignment
$(x', y')$ for $\Psi$, we show the fraction of unsatisfied clauses in $\Psi$ is at least $s' \cdot \HW{(x^*, y^*)}{(x', y')}$.

Let $h = \HW{(x^*, y^*)}{(x', y')}$, $h_x = \HW{x^*}{x'}$, and $h_y = \HW{y^*}{y'}$.
Observe $|X| \cdot h_x + |Y| \cdot h_y = |X'| \cdot h$. Consider the following two cases.

~

\noindent
{\bf Case 1}: $|X| \cdot h_x \geq |X'| \cdot h/2$.\\
Because $\Phi$ is $s$-stable, then $x'$ leaves at least $m \cdot s \cdot h_x$ clauses of $\Phi$ unsatisfied. For each unsatisfied clause $C$ of $\Phi$, at least one clause of $\Psi$ in $\mathcal C'_C$
is not satisfied by $(x', y')$. So at least $m \cdot s \cdot h_x$ clauses of $\Psi$ are also not satisfied. Note the following:
\begin{itemize}
\item $|X| \leq Q \cdot m$, because each clause in $\Phi$ has at most $Q$ literals.
\item $B' \cdot |X'| \geq m'$, because each variable of $\Psi$ appears in at most $B'$ clauses.
\end{itemize}
The number of unsatisfied clauses of $\Psi$ is then at least
\[ m \cdot s \cdot h_x \geq \frac{s|X|}{Q} \cdot h_x \geq \frac{s|X'|}{2Q} \cdot h \geq \frac{s}{2B'Q}\cdot h \cdot m.'\]

~

\noindent
{\bf Case 2}: $|X| \cdot h_x < |X'| \cdot h/2$, equivalently $|Y| \cdot h_y > |X'| \cdot h/2$.\\
Let $\mathcal C_{bad} \subseteq \mathcal C$ be the clauses $C$ of $\Phi$ such that at least one variable in $Y_C$ is different between $y^*$ and $y'$. The number of $y$-variables that differ between $y^*$ and $y'$
is $|Y| \cdot h_y$ and $|Y_C| \leq 2Q$ for each $C \in \mathcal C$, so 

\begin{equation}\label{c-bad}
|\mathcal C_{bad}| \geq \frac{|Y|}{2Q} \cdot h_y.
\end{equation}

Partition $\mathcal C_{bad}$ into two groups:
\begin{itemize}
\item 
$\mathcal C_{bad}^1$: Clauses $C_i \in \mathcal C_{bad}$ such that at least one clause of $\Psi$ in $\mathcal C'_i$ is not satisfied by $(x', y')$.
\item $\mathcal C_{bad}^2$: Clauses $C_i \in \mathcal C_{bad}$ such that all clauses in $\mathcal C'_i$ are satisfied by $(x', y')$.
\end{itemize}

~

\noindent
{\bf Subcase 2.1}: $|\mathcal C_{bad}^1| \geq |\mathcal C_{bad}|/2$.\\
Then the number of clauses not satisfied by $(x', y')$ can be bounded from below as follows:
\[ |\mathcal C_{bad}^1| \geq \frac{|\mathcal C_{bad}|}{2} \geq \frac{|Y|}{4Q} \cdot h_y \geq \frac{|X'|}{8Q} \cdot h \geq \frac{h}{8B'Q} \cdot m', \]
where the 2nd inequality follows from (\ref{c-bad}).

\noindent
{\bf Subcase 2.2}: $|\mathcal C_{bad}^2| > |\mathcal C_{bad}|/2$.\\
Consider some $C \in \mathcal C_{bad}^2$. All constraints in $\mathcal C'_C$ are satisfied yet $y^*$ disagrees with $y'$ on $Y_C$, so it must be that $|C| \geq 4$ because, by construction of $\mathcal C'_C$,
the only way to satisfy all clauses in $\mathcal C'_C$ for $|C| \leq 3$ has all variables in $Y_C$ being \textsc{False}.

By construction of $\mathcal C'_C$ in the case $|C| \geq 4$, the fact that all clauses are satisfied means there is a unique index $1 \leq i \leq |C|$ with ${y'}^i_C = \textsc{True}$ and ${y'}^{i-1}_C = \textsc{False}$.
There is also a unique index $j$ with ${y^*}^j_C = \textsc{True}$ and ${y^*}^{j-1}_C = \textsc{False}$. But because $y^*$ and $y'$ disagree on $Y_C$, it must be that $i \neq j$.
Observe, then that $\ell^{\min\{i, j\}}_C$ has different values under $x^*$ and $x'$.
That is, there is some variable $x_\ell$ appearing in $C$ where $x^*_\ell \neq x'_\ell$.

The fact that each $C \in \mathcal C_{bad}^2$ witnesses at least one such variable $x_\ell$ with $x^*_\ell \neq x'_\ell$ and the fact that each variable of $\Phi$ appears in at most $B$ clauses means
there are at least $|\mathcal C_{bad}^2|/B$ variables $x_\ell \in X$ with $x^*_\ell \neq x'_\ell$.
That is:
\[ |X| \cdot h_x \geq \frac{|\mathcal C_{bad}^2|}{B} \geq \frac{|\mathcal C_{bad}|}{2B} \geq \frac{|Y|}{4BQ} \cdot h_y \geq \frac{|X'|}{8BQ} \cdot h \geq \frac{h}{8BB'Q} \cdot m'. \]
The fact that $\Phi$ is $s$-stable means at least $s \cdot h_x \cdot m$ clauses are not satisfied by $x'$. As before, for each clause $C \in \mathcal C$ not satisfied by $x'$
there is at least one clause in the group of clauses $\mathcal C'_C$ of $\Psi$
that is not satisfied by $(x', y')$. Thus, the number of clauses of $\Psi$ that are not satisfied by $(x', y')$ is at least
\[ s \cdot h_x \cdot m \geq s \cdot \frac{|X|}{Q} \cdot h_x \geq \frac{s}{8BB'Q^2} \cdot h \cdot m', \]
as required.
\end{proof}

\section{Reduction From \stsatbfull to \stdmbfull}\label{sec:hardness2}

In this section we show how to reduce stable instances of bounded 3SAT (\stsatb) to stable instances of bounded occurrence
3D matching.
We formally define the problems we consider in our reduction below.

\begin{definition}[\utdmbfull or \utdmb]
An instance of \utdmb problem is given via a hypergraph $G = (V_1 \cup V_2 \cup V_3, \, \cT)$ where, for $i = 1, \, 2, \, 3$, $\abs{V_i} = n$ , and each \emph{triple} $t \in \cT$ is of the form $(v_1, \, v_2, \, v_3)$ with $v_i \in V_i$. In the decision version of the problem, the task is to decide whether a \emph{perfect matching}, i.e., a subset $T^* \subseteq \cT$ of $n$ disjoint triples, exists given the guarantees that:
\begin{enumerate}
    \item Each vertex $v$ appears in at most $B$ triples, where $B$ is a constant.
    \item There is at most one perfect matching.
\end{enumerate}
\end{definition}

Throughout this section, for sets of triples $T^*$ and $T$, the Hamming weight function $\HW{T^*}{T}$ is defined as $\dfrac{\abs{T^* \Delta T}}{2 \abs{T^*}}$ for a nonempty set $T^*$.

\begin{definition}[\stdmbfull or \stdmb]
An instance of the \stdmb problem is an instance of the \utdmb problem that is $(s, \gamma)$-stable for $0 < s, \gamma < 1$, in the sense that it has the following guarantees.
\begin{enumerate}
    \item If a perfect matching $T^*$ exists, any subset of disjoint triples $T$ has a size at most $(1 - s \cdot \HW{T^*}{T}) n$, where $\HW{T^*}{T}$ is the fraction of the triples on which $T^*$ and $T$ disagree.
    \item If no perfect matchings exist, then any subset $T \subseteq \cT$ of disjoint triples has $\abs{T} \leq (1 - \gamma)\cdot n$.
\end{enumerate}
\end{definition}

\begin{definition}[\ucbtbfull]
An instance of the \ucbtb problem is given via the same hypergraph as in the \utdmb problem. In the decision version of the problem, the task is to decide whether a subset $T^* \subseteq \cT$ of size $n$
that covers all the $3n$ vertices exists given the guarantees that:
\begin{enumerate}
    \item Each vertex $v$ appears in at most $B$ triples, where $B$ is a constant.
    \item There is at most one set of triples solution covering all the vertices.
\end{enumerate}
Furthermore, if an instance $\cI$ of \ucbtb has a solution that covers all the nodes, we call it a \emph{covering} instance.
\end{definition}

\begin{definition}[\scbtbfull]
An instance of the \scbtb problem is an instance of the \ucbtb problem that is $(s, \gamma)$-stable for $0 < s, \gamma < 1$, in the sense that it has the following guarantees.
\begin{enumerate}
    \item If a unique covering solution $T^*$ exists, any subset of $n$ triples $T$ fails to cover at least an $s \cdot \HW{T^*}{T}$ fraction of the $3n$ vertices.
    \item If no covering solutions exist, then any subset $T \subseteq \cT$ of triples, $\abs{T} = n$, covers at most $(1 - \gamma)$ fraction of the points.
\end{enumerate}
\end{definition}

Here we show that the hardness of the \stsatb problem implies hardness for \stdmb. 

\begin{theorem}
There exists a polynomial reduction transforming any instance $\Psi$ of \stsatb with $n$ variables and $m$ clauses where each variable appears in at most $B$ clauses to an instance $\cI$ of \textsc{S-3DM-B} with $36 \cdot m$ points and $34 \cdot m$ triples where each point appears in at most 7 triples with the following properties
\begin{enumerate}
    \item \textbf{Yes Case:} if $\Psi$ is a satisfiable instance of \stsatb, then there exits a unique set of $12 \cdot m$ disjoint triples $T^*$  called the \emph{perfect matching}. Furthermore, for any disjoint
    set of triples $T$, it is true that $\abs{T} \leq (1 - s_1 \cdot \HW{T^*}{T}) \cdot \abs{T^*}$.
    \item \textbf{No Case:} if $\Psi$ is not satisfiable, then every set of disjoint triples has a size at most $(1 - \gamma_1)\cdot 12 \cdot m$,
\end{enumerate}
where $s_1$ and $\gamma_1$ are universal constants satisfying $0 < s_1, \gamma_1 < 1$ depending only on $B$ and the stability constant $s'$ from Theorem \ref{thm:qsat_red}.
\end{theorem}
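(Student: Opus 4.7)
The plan is to adapt the classic Karp reduction from 3SAT to 3-Dimensional Matching, sizing the gadgets carefully to preserve both bounded occurrence and stability. The construction will use three families of gadgets. For each variable $x_i$ of $\Psi$ there is a \emph{variable gadget}: a cycle on $2K$ internal points together with $K$ ``true pins'' and $K$ ``false pins'', plus $K$ triples each joining a true pin with two alternately-indexed internal points and $K$ triples doing the same on the shifted residue class for the false pins. Because the internal points can be covered only by taking all $K$ true triples or all $K$ false triples, the gadget enforces a binary truth choice. For each clause $C_j$ there is a \emph{clause gadget} with two private points $c_j^1,c_j^2$ and three triples $(c_j^1,c_j^2,p_\ell)$, one per literal $\ell$, where $p_\ell$ is a pin that is left free by the variable gadget exactly when $\ell$ is TRUE. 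Finally, every pin not consumed by a clause gadget is attached to a private \emph{garbage triple} with two fresh points. The exponent $K = 2^{2\lfloor \log (\frac{3}{2}B + 1)\rfloor}$ is a specific power of $4$ of order $\Theta(B^2)$, chosen so that each variable gadget has comfortably more than $B$ pins of each polarity (needed to service its up-to-$B$ clause connections) and so that the final point and triple counts land on $(18K+15)m$ and $(12K+15)m$ --- the former being $3\cdot(6K+5)m$, matching the size of a perfect matching.

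For the Yes case, I will map the unique satisfying assignment $x^*$ of $\Psi$ to the matching $T^*$ obtained by selecting, in each variable gadget, the true triples if $x^*_i$ is TRUE and the false triples otherwise; in each clause gadget, the triple through a pin witnessing a satisfied literal (breaking ties by a fixed rule, say by literal index); and the garbage triple through every remaining free pin. Uniqueness of $T^*$ follows gadget-by-gadget: the variable cycle offers only two options, and inside a clause gadget only triples whose pin is free are eligible, so uniqueness of $x^*$ together with the tie-breaking rule pins down a unique clause triple, while the garbage triples are forced. Hence the reduction is parsimonious.

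The main technical work is the stability inequality. Given any disjoint $T\subseteq \cT$, extract an assignment $x'$ by reading off each variable gadget (and, if a gadget is in a mixed or empty state, setting $x'_i$ arbitrarily while charging the gadget with an uncovered internal point). Let $h_x = \HW{x^*}{x'}$. I will classify $T^*\Delta T$ by gadget. A single variable flip accounts for $\Theta(K)$ swapped variable-gadget triples together with up to $B$ clause triples and $\Theta(K)$ garbage triples, so $|T^*\Delta T|$ is bounded above by $c_1 K \cdot h_x \cdot n$ plus the clause/garbage re-selections that leave $x'$ unchanged. Conversely, every mixed variable gadget leaves an internal point uncovered, and every clause whose selected literal is FALSE under $x'$ leaves one of $c_j^1,c_j^2$ uncovered; so $|T^*|-|T|$ is at least the number of mixed variable gadgets plus the number of clauses unsatisfied by $x'$. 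Using $s$-stability of $\Psi$ to bound the latter by $s \cdot h_x \cdot m$, and combining these two bounds, I obtain $|T|\leq (1 - s_1 \HW{T^*}{T})|T^*|$ for an absolute constant $s_1$ depending only on $s,B,K$. The No case follows by the same decoding: since $\Psi$ is unsatisfiable, the extracted $x'$ leaves at least $\eps m$ clauses unsatisfied, each contributing an uncovered clause point, so $|T|\leq (1-\gamma_1)|T^*|$ for some $\gamma_1 = \Omega(\eps/K)$. The chief obstacle is bookkeeping this dual charging: disagreements in $T^*\Delta T$ overcount variable flips by a factor of $\Theta(K)$ while potentially missing local clause/garbage reshuffles entirely, and keeping the chains of inequalities tight is precisely where the particular power-of-$4$ value of $K$ becomes essential.
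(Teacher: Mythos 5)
Your construction has a fatal completeness problem before the stability analysis even begins. In your design a clause gadget consists of two private points $c_j^1,c_j^2$ and three triples $(c_j^1,c_j^2,p_\ell)$, and a pin is exposed by its variable gadget exactly when its literal is \textsc{True}. In the Yes case a clause may have two or three true literals under $x^*$; the clause gadget consumes only one of the corresponding exposed pins, and the remaining exposed pins are wired to clause gadgets, so under your rule (``every pin not consumed by a clause gadget is attached to a private garbage triple'') they have no garbage triple and cannot be covered. Hence no perfect matching on all $(18K+15)m$ points exists at all unless every clause happens to have exactly one true literal. The classical fix --- global garbage-collection pairs that may absorb any pin --- restores coverage but destroys both the bounded-occurrence guarantee (each pin would then lie in $\Theta(m)$ garbage triples) and uniqueness; attaching a private garbage triple to \emph{every} pin fails too, since the two fresh points force that triple into any perfect matching and block the clause triples. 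Relatedly, your appeal to ``breaking ties by a fixed rule'' only disambiguates your \emph{definition} of $T^*$ from $x^*$; it does nothing to make the 3DM \emph{instance} uniquely solvable, which is what \stdmb requires. The paper resolves exactly these two tensions with Kann-style wheels ($K$ copies per occurrence joined by binary trees, so every tip is covered by wheel or tree triples and only the roots act as pins) and a clause gadget consisting of seven sub-gadgets, one per satisfying assignment, engineered so that a satisfied clause forces $8$ triples and an unsatisfied one at most $7$ --- this is what makes the reduction parsimonious and coverage-complete simultaneously.

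The second gap is in the stability argument, which is the actual content of the theorem. The dangerous adversary is a matching that \emph{sacrifices} a few variable-gadget triples in order to expose pins of both polarities for the same occurrence and thereby harvest extra clause-gadget triples; your single-cycle gadget gives no quantitative control over this trade, and your sketch explicitly defers the ``bookkeeping.'' In the paper this is precisely the role of the tree structure and of the choice $K=2^{2\lfloor\log(\frac32 B+1)\rfloor}$: the claim that a ``bad'' variable satisfies $\abs{T_j}<\beta_j(2K-1)$, together with the computation showing a sacrifice of $\eta_W$ wheel triples buys at most $4\eta_C\le 4B$ clause triples while costing at least $\eta_W(2K+1)/3>4\eta_C$, is what makes cheating strictly unprofitable and lets the deficit be charged against $\abs{T^*\Delta T}$ via the auxiliary matching $\hat T$ and the $s$-stability of $\Psi$. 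Without an analogue of that lemma for your gadget (and with the construction itself not admitting a perfect matching in the Yes case), neither the Yes-case inequality $\abs{T}\le(1-s_1\cdot\HW{T^*}{T})\abs{T^*}$ nor the No-case bound can be established as proposed.
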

We note that $\frac{n}{3} \leq m \leq \frac{B}{3} \cdot n$ holds for any \textsc{3SAT} instance where each clause contains exactly three different variables, this will be used frequently in the analysis of our reduction.


\subsection{The Reduction}
The reduction is inspired by known reductions (eg. \cite{gareyjohson}), but we change the clause gadget to make the reduction parsimonious. Throughout, we call a subset of triples $T$
a {\em packing} if the triples in $T$ are pairwise-disjoint.
The reduction uses two types of gadgets that are described below.

~

\noindent
{\bf Variable Gadgets}\\
This is a standard construction. For each variable $x_i$ of $\Psi$, we create a ``gear'', depicted in Figure \ref{fig:gear}. That is, we introduce {\em inner points} $u_i[b], w_i[b]$ and {\em tip points} $v_i[b], \overline{v_i}[b]$
for each $1 \leq b \leq \beta_i$ where $1 \leq \beta_i \leq B$ is the number of clauses of $\Psi$ that include variable $x_i$. The triples in this gadget are:
\[ G_i := \{\{v_i[b], u_i[b], w_i[b]\} : 1 \leq b \leq \beta_i\} \cup \{\{\overline{v_i}[b], w_i[b], u_i[b+1]\} : 1 \leq b \leq \beta_i\} \]
where we have used wraparound indexing (i.e. $u_i[\beta_i+1]$ is $u_i[1]$). Note $|G_i| = 2 \cdot \beta_j$.

\begin{figure}[H]
    \begin{center}
        \includegraphics[width = 8cm]{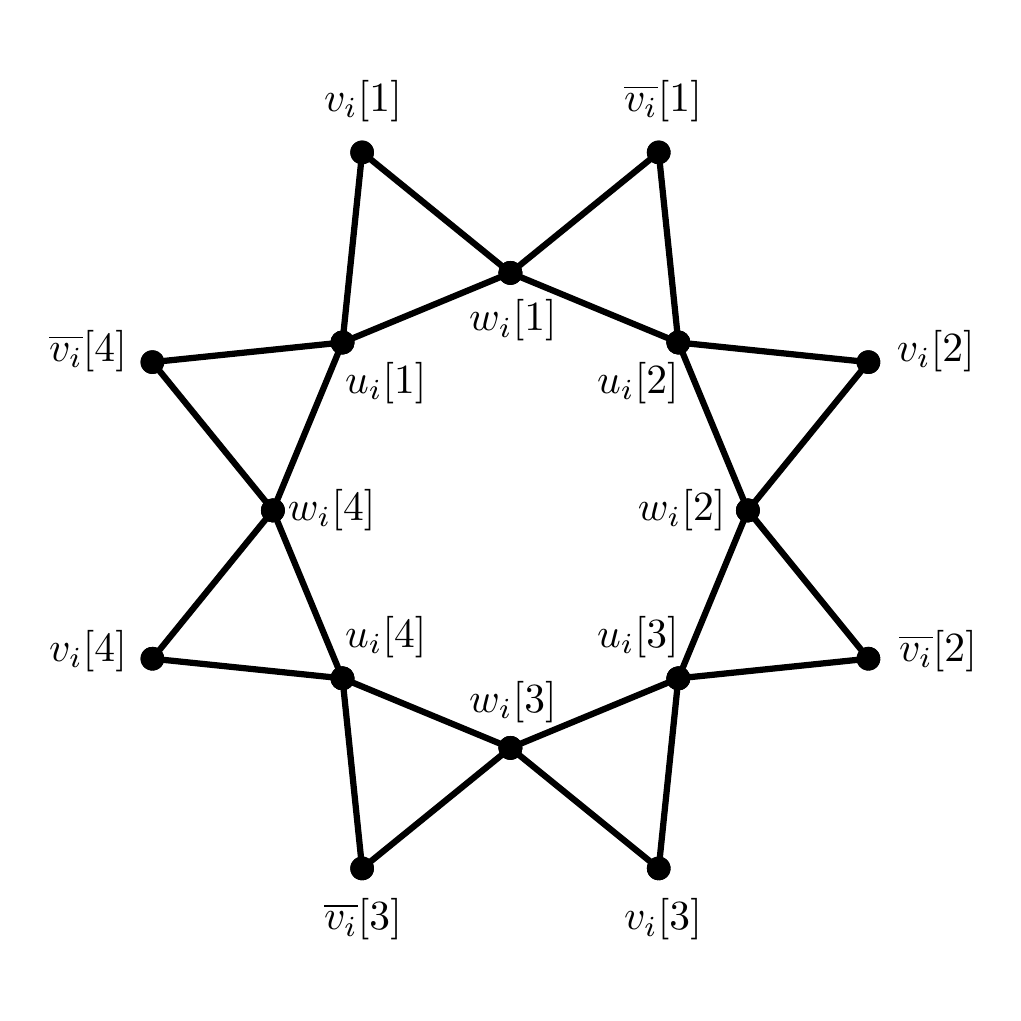}
    \end{center}
    \caption{The variable gadget for a variable $x_i$ with $\beta_i = 4$.}
    \label{fig:gear}
\end{figure}

Intuitively, setting $x_i$ to \texttt{True} corresponds to selecting all triples that include the points $\overline{v_i}[b], 1 \leq b \leq \beta_i$ and setting $x_i$ to \texttt{False} corresponds to selecting all triples that include the points
$v_i[b], 1 \leq b \leq \beta_i$. That is, the points that are not covered by such a packing of triples corresponds to the truth assignment being modelled.


\begin{claim} \label{claim:variable}
Let $T \subseteq G_i$ be any packing of triples. Then $|T| \leq \beta_i$ and $|T| = \beta_i$ if and only if $T = \{\{v_i[b], u_i[b], w_i[b]\} : 1 \leq b \leq \beta_i\}$ or $T = \{\{\overline{v_i}[b], w_i[b], u_i[b+1]\} : 1 \leq b \leq \beta_i\}$.
\end{claim}
\begin{proof}
First, note that since each $t \in G_i$ contains two inner points and there are $2 \cdot \beta_i$ inner points, then no packing $T \subset G_i$ has size more than $\beta_j$.
It is also clear that any packing of size exactly $\beta_i$ consists of alternating triples around the gear, i.e. it is one of the two packings from the statement of the claim.
\end{proof}

~

\noindent
{\bf Clause Gadgets}\\
Consider each clause $C = \ell_i \vee \ell_j \vee \ell_k$ involving variables $x_i, x_j, x_k$. Also let $b_i$ be such that $C$ is the $b_i$'th clause containing $x_i$ according to an arbitrary but fixed ordering of the occurrences of each variable.
Similarly define $b_j$ and $b_k$ for $x_j$ and $x_k$, respectively. Let $S_C$ be the set of ways to assign values to the variables $x_i, x_j, x_k$ to satisfy $C$, so $|S_C| = 7$.

We create 24 new points for the clause gadget. Three of them are {\em literal points} we call $y_i[b_i], y_j[b_j], y_k[b_k]$. Then, for $\alpha \in S_C$ we create three new vertices that we call {\em control points}: call these $z_i[\alpha], z_j[\alpha]$ and $z_k[\alpha]$.

There are 28 triples in this clause gadget that involve the literal points and control points for this gadget as well as the 6 tip points $v_i[b_i], \overline{v_i}[b_i], v_j[b_j], \overline{v_j}[b_j], v_k[b_k], \overline{v_k}[b_k]$ from the variable gadgets
corresponding to this particular occurrence of each variable in $C$.

For each $\alpha \in S_C$ let $v(\alpha, i)$ be the tip point $v_i[b_i]$ or $\overline{v_i}[b_i]$ corresponding to the truth assignment of $x_i$ under $\alpha$. That is, if $\alpha$ assigns \texttt{True} to $x_i$ then let $v(\alpha, i) = v_i[b_i]$ otherwise let $v(\alpha, i) = \overline{v_i}[b_i]$.

\[ G_C := \{\{z_i[\alpha], z_j[\alpha], z_k[\alpha]\} : \alpha \in S_C\} \cup \{\{v(\alpha, \ell), z_\ell[\alpha], y_\ell[b_\ell]\} : \alpha \in S_C,  \ell \in \{i,j,k\}\}. \]

A portion of this construction is depicted in Figure \ref{fig:clause}. The top layer of points are the control points for this particular $\alpha$ and do not appear in any other triples in the clause gadget (i.e. triples for different $\alpha' \neq \alpha$). The middle layer of points are the literal points: these appear in the other vertically-drawn triples for other portions of the clause gadget (i.e. for other $\alpha' \in S_C$). The bottom points are tip points from variable gadgets that correspond to this satisfying assignment: by leaving all three tips uncovered when we choose the triples from the corresponding variable gadgets, we are indicating to this clause that the variables have truth assignment $\alpha$ so we may pick the three vertically-drawn triples and cover all points in the figure.

\begin{figure}[H]
    \begin{center}
        \includegraphics[width = 6cm]{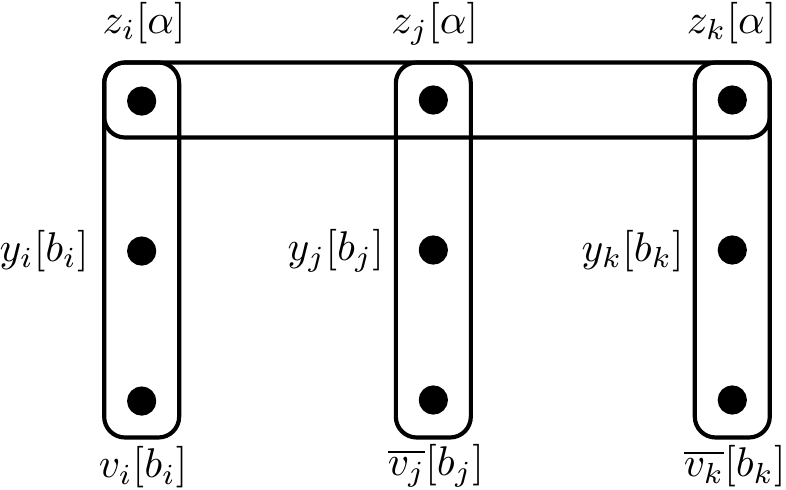}
    \end{center}
    \caption{Example of part of a clause gadget for a clause $C = \overline{x_i} \vee x_j \vee \overline{x_k}$ where $b_i, b_j$ and $b_k$ denote which occurrence of the corresponding variable appears in $C$. Here, $\alpha$ is the assignment $x_i = \texttt{True}, x_j = \texttt{False}, x_k = \texttt{False}$ which satisfies $C$. In particular, the tips of the gears that appear in this figure correspond to $\alpha$, not necessarily to the sign of the original variable in $C$.
    The rounded rectangles indicate the four triples in this gadget associated with $\alpha$. }
        \label{fig:clause}
\end{figure}

Important properties of this gadget are summarized in Claim \ref{clm:clause}.

\begin{claim}\label{clm:clause}
Let $T \subseteq G_C$ be a packing. Then $|T| \leq 9$. Furthermore, if $|T| = 9$ then for some $\alpha^* \in S_C$ we have
\[ T = \{\{z_i[\alpha], z_j[\alpha], z_k[\alpha]\} : \alpha \in S_C - \{\alpha^*\}\} \cup \{\{v(\alpha^*, \ell), z_\ell[\alpha^*], y_\ell[b_\ell]\} : \ell \in \{i,j,k\}\}. \]
\end{claim}
Intuitively, if $|T| = 9$ then the triples that cover the literal points all correspond to the same satisfying assignment $\alpha^*$.
\begin{proof}
If $T$ contains all seven triples that cover only control points (i.e. all triples in $\{\{z_i[\alpha], z_j[\alpha], z_k[\alpha]\} : \alpha \in S_C\}$) then it cannot contain any other triple in $G_C$ since they all include a control point.
If not, notice still that $T$ can only contain at most three triples that cover a literal point since there are only three literal points. In either case, $|T| \leq 9$.

Now suppose $|T| = 9$. Then $T$ contains exactly three triples that cover a literal point and exactly six triples that only cover control points. Let $\alpha^* \in S_C$ be the satisfying assignment such that $\{z_i[\alpha^*], z_j[\alpha^*], z_k[\alpha^*]\} \notin T$. Then all triples covering a literal point must be of the form $\{v(\alpha^*, \ell), z_\ell[\alpha^*], y_\ell[b_\ell]\}$ for the three choices $\ell \in \{i,j,k\}$, as required, since all other triples covering a literal point share a control point in common with the triples covering only control points.
\end{proof}

This completes the construction. Notice the number of points in the resulting instance is:
\[ \sum_{i=1}^n 4 \cdot \beta_i + \sum_{C} 24 = 12 \cdot m + 24 \cdot m = 36 \cdot m \]
and the number of triples is:
\[ \sum_{i=1}^n 2 \cdot \beta_i + \sum_{C} 28 = 6 \cdot m + 28 \cdot m = 34 \cdot m \]
where we have used the fact that $\sum_{i=1}^n \beta_i = 3 \cdot m$ as each clause involves precisely three distinct variables. Finally, it is easy to tell by inspection that every point lies in at most 7 triples: each inner point is in exactly 2 triples, tip points $v_i[b]$ and $\overline{v_i}[b]$ each lie in 1 triple from the variable gadget and at most 4 triples from the clause gadget corresponding to $b$'th occurrence of $x_i$, each control point is in exactly 2 triples, and each literal point is in exactly 7 triples. So this is an instance of \stdmb where each point appears in at most $7$ triples.


\subsection{Completeness Analysis}

{\bf The Canonical Packing $T^*$}\\
Here we suppose $\Psi$ has a unique satisfying assignment $x^*$ and every truth assignment $x'$ does not satisfy at least $s' \cdot HW(x^*, x') \cdot m$ clauses of $\Psi$ where $0 < s' < 1$ is an absolute constant.
We will show there is a single packing $T^*$ of size $12 \cdot m$ (i.e. it covers all points) and any other packing $T$ has $|T| \leq (1 - s_1 \cdot HW(T^*, T)) \cdot |T^*|$ for some absolute constant $0 < s_1 < 1$.

The construction of $T^*$ is straightforward. For each $i$, if $x^*_i = \texttt{True}$ then include all triples $\{\overline{v_i}[b], w_i[b], u_i[b+1]\}, 1 \leq b \leq \beta_i$
in $T^*$ and if $x^*_i = \texttt{False}$ then include all triples $\{v_i[b], u_i[b], w_i[b], 1 \leq b \leq \beta_i\}$ in $T^*$. The only points in the variable gadgets that are not yet covered
are those that correspond to the truth values of their corresponding variables.

Next consider a clause $C$. As in the construction of the clause gadget, say $x_i, x_j, x_k$ are the variables involved in $C$ and that, respectively, $b_i, b_j, b_k$ denotes which occurrence of the variable lies in $C$.
Let $\alpha^*$ be the truth assignment to $x_i, x_j, x_j$ given by $x^*$. Add the triples covering control points for $\alpha \neq \alpha^*$ to $T$, i.e. each of $\{z_i[\alpha], z_j[\alpha], z_k[\alpha]\}$ for $\alpha \in S_C - \{\alpha^*\}\}$.
Finally, add the three triples covering literal points corresponding to $\alpha^*$, namely $\{v(\alpha^*, \ell), z_\ell[\alpha^*], y_\ell[b_\ell]\}$.

By construction, every point in this \stdmb instance is covered by exactly one triple in $T^*$.

~

\noindent
{\bf Stability Analysis}\\
Next, consider some packing $T$ of triples. We show $|T| \leq (1 - s_1 \cdot HW(T^*, T)) \cdot |T^*|$ for some absolute constant $s_1$ to complete the stability analysis. We classify variables and clauses as follows.

Consider a variable $x_i$. 
\begin{itemize}
\item Call $x_i$ {\bf good} if $|T \cap G_i| = \beta_i$. From Claim \ref{claim:variable}, this means $T \cap G_i$ naturally corresponds to a truth assignment for $x_i$ (i.e. the uncovered tips correspond to the value of $x_i$).
\item Call $x_i$ {\bf bad} otherwise, so $|T \cap G_i| < \beta_i$.
\end{itemize}
Let $x'$ be the following truth assignment. Let $x'_i$ correspond to the truth assignment given by $T$ (i.e. corresponding to the uncovered tips) if $x_i$ is a good variable, otherwise let $x'_i = x^*_i$. Let $\zeta_B$ the fraction of bad variables, i.e. the number of bad variables is $\zeta_B \cdot n$.

Now consider a clause $C$.
\begin{itemize}
\item Call $C$ {\bf bad} if at least one of the three variables involved in $C$ is bad.
\item Call $C$ {\bf good-unsatisfied} if all three variables of $C$ are good, yet $x'$ does not satisfy $C$.
\item Call $C$ {\bf good-satisfied} if all three variables of $C$ are good and $x'$ satisfies $C$.
\end{itemize}

Next, we claim it suffices to assume that if $C$ is good-satisfied then we can assume that $|T \cap G_C| = 9$, i.e. the packing corresponds to a particular truth assignment in the canonical way described in Claim \ref{clm:clause}.

\begin{lemma}\label{lem:fixit}
Suppose for some constant $0 < s_1 \leq \frac{2}{17}$ we have the following: For every packing $T'$ with $|T' \cap G_C| = 9$ for each good-satisfied clause $C$ (here good-satisfied is with respect to $T'$)
we have $|T'| \leq (1-s_1 \cdot HW(T^*, T')) \cdot |T^*|$.
Then $|T| \leq (1 - s_1 \cdot HW(T^*, T)) \cdot |T^*|$ for every packing $T$.
\end{lemma}
\begin{proof}
By induction on the number of clauses $C$ such that $|T' \cap G_C| < 9$ for each good-satisfied clauses. The base case where there is no such clause $C$ is in fact our assumption.

Inductively, suppose $C$ is a good-satisfied clause yet $|T' \cap G_C| < 9$. Since all variables in $C$ are good and their corresponding values satisfy $C$, we can replace the triples in $T' \cap G_C$ with the 9 triples corresponding to this satisfying truth assignment $\alpha$
for $C$. Call this new packing $T''$. Notice $|T^* \Delta T'| \leq |T^* \Delta T''| + 17$ since we removed at most 8 triples and then added exactly 9 triples to form $T''$. By induction and recalling $HW(T^*, T) = |T^* \Delta T| / (2|T^*|)$, we see
\begin{eqnarray*}
|T'| & \leq & |T''|-1 \\
& \leq & \left(1 - s_1 \cdot \frac{|T^* \Delta T''|}{2 \cdot |T^*|} \right) \cdot |T^*| - 1 \\
& \leq & \left(1 - s_1 \cdot \frac{|T^* \Delta T'| - 17}{2 \cdot |T^*|} \right) \cdot |T^*| - 1 \\
& = & (1 - s_1 \cdot HW(T^*, T')) \cdot |T^*| + s_1 \cdot \frac{17}{2} - 1 \\
& \leq & (1 - s_1 \cdot HW(T^*, T')) \cdot |T^*|
\end{eqnarray*}
where the last bound follows by our assumption that $s_1 \leq 2/17$.
\end{proof}

Our goal is to prove the ``base case'' of Lemma \ref{lem:fixit} and articulate the constant $s_1$. We will now assume $|T \cap G_C| = 9$ for every good-satisfied clause $C$.
Continuing, we consider two cases. We have not attempted to ``balance'' the analysis between the two cases to optimize the resulting stability constant $s_1$.

In either case we use the observation that:
\[ |T| = \sum_{i=1}^n |T \cap G_i| + \sum_C |T \cap G_C| \leq \sum_{i=1}^n \beta_i + \sum_C 9 = 3m + 9m = 12m = |T^*|. \]
We will observe that throughout these cases, either $T$ is deficient in enough variable gadgets (i.e. $|T \cap G_i| \leq \beta_i - 1$ for enough $1 \leq i \leq n$) or deficient enough in clause gadgets (i.e. $|T \cap G_C| \leq 8$ for enough clauses $C$), where ``enough'' means the total deficiency is proportional $HW(T^*, T) \cdot |T^*|$.

\begin{itemize}
\item {\bf Case}: $\zeta_B > \frac{s'}{30 \cdot B^2} \cdot HW(T^*, T)$ where, recall, $0 < s' < 1$ is the stability constant for $\Psi$.\\

For each bad variable $x_i$, $|T \cap G_i| \leq \beta_i - 1$. So the number of variable gadgets for which $T$ has strictly fewer than the maximum possible number of triples
is at least
\[ \zeta_B \cdot n > \frac{s'}{30 \cdot B^2} \cdot HW(T^*, T) \cdot n \geq \frac{s'}{10 \cdot B^3} \cdot HW(T^*, T) \cdot m = \frac{s'}{120 \cdot B^3} \cdot HW(T^*, T) \cdot |T^*|. \]
where we have used $B \cdot n \geq 3 \cdot m$ and $|T^*| = 12 \cdot m$.

That is,
\[ |T| \leq \left(1-\frac{s'}{120 \cdot B^3} \cdot HW(T^*, T)\right) \cdot |T^*|. \]

\item {\bf Case}: $\zeta_B \leq \frac{s'}{30 \cdot B^2}  \cdot HW(T^*, T)$. \\

We begin by showing $HW(x^*, x')$ is at least proportional to $HW(T^*, T)$ in this case.
\begin{claim}
If $\zeta_B \leq \frac{s'}{30 \cdot B^2} \cdot HW(T^*, T)$, then $HW(x^*, x') \geq \frac{1}{5 \cdot B} \cdot HW(T^*, T)$.
\end{claim}
\begin{proof}
Suppose otherwise. We count $|T^* \Delta T|$ by considering different gadgets.
\begin{itemize}
\item Triples from bad variable gadgets. Each of $T$ and $T^*$ can have at most $B$ triples from any variable gadget, so at most $2B \cdot \zeta_B \cdot n$ triples come from bad variable gadgets.
\item Triples from bad clause gadgets. Each of $T$ and $T^*$ can have at most $9$ triples from any clause gadget and there are at most $B \cdot \zeta_B \cdot n$ bad clauses (as each variable participates in at most $B$ clauses), so at most $18B \cdot \zeta_B \cdot n$ triples from from bad clause gadgets.
\item Triples from good variable gadgets for a variable $x_i$ with $x'_i \neq x^*_i$. The number of such variables is exactly $HW(x^*, x') \cdot n$ by how we constructed $x'$ and each of $T$ and $T^*$ can have at most $B$ triples from variable gadgets, so at most $2B \cdot HW(x^*, x') \cdot n$ triples come from variable gadgets for good variables with $x'_i \neq x^*_i$.
\item Triples from clause gadgets that are not bad but contain a variable $x_i$ with $x'_i \neq x^*_i$. The number of such clauses is at most $B \cdot HW(x^*, x') \cdot n$ since there are $HW(x^*, x') \cdot n$ good variables that disagree on their assignment between $x^*$ and $x'$, and each variable appears in at most $B$ clauses. Each of $T$ and $T^*$ has at most 9 triples from any clause gadget, so at most $18 \cdot B \cdot HW(x^*, x) \cdot n$ triples come from  clause gadgets for clauses that are not bad but contain a variable $x_i$ with $x'_i \neq x^*_i$.
\end{itemize}
Recalling our assumption that $|T \cap T'_C| = 9$ for every good-satisfied clause $C$ and that $x^*$ satisfies every clause, we see that $T^*$ and $T$ agree on every other gadget. So we have accounted for all triples in $|T^* \Delta T|$.

From these four cases and using $n \leq 3 \cdot m, s' < 1$ and $|T^*| = 12 \cdot m$, we see
\begin{eqnarray*}
|T^* \Delta T| & \leq & 20B \cdot \zeta_B \cdot n + 20B \cdot HW(x^*, x') \cdot n \\
& \leq & 60B \cdot \zeta_B \cdot m + 60B \cdot HW(x^*, x') \cdot m \\
& = & 5B \cdot \zeta_B \cdot |T^*| + 5B \cdot HW(x^*, x') \cdot |T^*| \\
& \leq & \left(\frac{1}{6B} + 1\right) \cdot HW(T^*, T) \cdot |T^*|
\end{eqnarray*}
This contradicts $HW(T^*, T) = |T^* \Delta T|/(2|T^*|)$ which is true by definition.
\end{proof}

Thus, we have $HW(x^*, x') \geq \frac{1}{5 \cdot B} \cdot HW(T^*, T)$.
By stability of $\Psi$, the number of clauses of $\Psi$ not satisfied by $x'$ is at least $s' \cdot HW(x^*, x') \cdot m \geq \frac{s'}{5B} \cdot HW(T^*, T) \cdot m$.
The number of bad clauses is at most $B \cdot \zeta_B \cdot n$ since each bad variable appears in at most $B$ bad clauses. So the number of bad clauses is bounded by
\[ B \cdot \zeta_B \cdot n \leq 3B \cdot \zeta_B \cdot m \leq 3B \cdot \frac{s'}{30 \cdot B^2} \cdot HW(T^*, T) \cdot m = \frac{s'}{10 \cdot B} \cdot HW(T^*, T) \cdot m. \]

The number of good-unsatisfied clauses is at least the number of clauses not satisfied by $x'$ minus the number of bad clauses, thus is at least
\[  \frac{s'}{5 \cdot B} \cdot HW(T^*, T) \cdot m - \frac{s'}{10 \cdot B} \cdot HW(T^*, T) \cdot m = \frac{s'}{10 \cdot B} \cdot HW(T^*, T) \cdot m. \]

So in at least $\frac{s'}{10 \cdot B} \cdot HW(T^*, T) \cdot m = \frac{s'}{120 \cdot B} \cdot HW(T^*, T) \cdot |T^*|$ clause gadgets, $T$ has one fewer triple than $T^*$. That is,
\[ |T| \leq \left(1 - \frac{s'}{120 \cdot B} \cdot HW(T^*, T)\right) \cdot |T^*|. \]
\end{itemize}

In either case, we have $|T| \leq (1 - s_1 \cdot HW(T^*, T)) \cdot |T^*|$ where $s_1 = \frac{s'}{120 \cdot B^3}$. Clearly $s_1 \leq \frac{2}{17}$ also holds, which is required for Lemma \ref{lem:fixit}.
This completes the stability analysis.

\subsection{Soundness Analysis}
Suppose any truth assignment to $\Psi$ satisfies at most $(1-\rho) \cdot m$ clauses where $\rho > 0$ is an absolute constant. Our analysis breaks into two cases similar to the stability analysis, but it is much shorter.

Let $T$ be any packing of triples. We define good and bad variables as in the stability analysis and let $\zeta_B$ denote the fraction of bad variables. Let $x'$ be the following  truth assignment. We set $x'_i$ to be the truth assignment
corresponding to the packing $T$ for every good variable $x_i$. We set $x'_i$ arbitrarily for every bad variable. Also let $\Lambda = 12 \cdot m$, which is the number of triples required to cover every point exactly once (i.e. the optimal value in the completeness case).

\begin{itemize}
\item {\bf Case}: $\zeta_B > \frac{\rho}{6B}$.\\

As in the analogous case from the stability analysis and using $B\cdot n \geq 3m$.
\[ |T| \leq \Lambda - \zeta_B \cdot n \leq \Lambda - \frac{\rho}{6B} \cdot n \leq \Lambda - \frac{\rho}{2B^2} \cdot m = \left(1 - \frac{\rho}{24B^2}\right) \cdot \Lambda. \]
Again, this is because in the $\zeta_B \cdot n$ variable gadgets for bad variables $x_i$, we have $|T \cap G_i| \leq \beta_i-1$.

\item {\bf Case}: $\zeta_B \leq \frac{\rho}{6B}$.\\

The number of clauses that involve a bad variable is at most $B \cdot \zeta_B \cdot n$. Also, the number of clauses not satisfied by $x'$ is at least $\rho \cdot m$. Of these,
at least $\rho \cdot m - B \cdot \zeta_B \cdot n$ clauses include only good variables. Since
these clauses are not satisfied, by Claim 
\ref{clm:clause}, $T$ includes at most 8 triples from the corresponding clause gadget. That is,
using our bound on $\zeta_B$ and $n \leq 3m$ we have

\[ |T| \leq \Lambda - (\rho \cdot m - B \cdot \zeta_B \cdot n) \leq \Lambda - \frac{\rho}{2} \cdot m =  \left(1 - \frac{\rho}{24}\right) \cdot \Lambda. \]
\end{itemize}
Considering either case, we see $|T| \leq (1 - \rho/(24B^2)) \cdot \Lambda$, as required.

\section{From \stdmbfull to \scbtbfull}\label{sec:hardness3}

Here, we show a reduction from any instance of the \stdmb problem to an instance of the \scbtb problem. In this section, for two sets of triples $T$ and $T'$, we define the Hamming distance function the same as before, $\HW{T}{T'} = \dfrac{\abs{T \Delta T'}}{2n}$, which is the size of the symmetric distance of the two set of triples divided by two times the size of the maximum set of disjoint triples.

\begin{theorem}\label{theo:sctb}
There exists a polynomial reduction transforming any instance $\cI$ of \stdmb with $3n$ vertices and $m$ triples to an  instance $\cI'$ of \scbtb with the same number of vertices and triples, such that
\begin{enumerate}
    \item \textbf{Yes Case:} if $\cI$ admits a perfect matching, then there exists a unique set of $n$ triples $T^*$ in $\cI'$ that cover the entire set of nodes. Furthermore, for any set of triples $T$ of size $n$ (different
    from $T^*$), $T$ covers at most $(1 - s_2 \cdot \HW{T^*}{T}) \cdot 3n$ vertices.
    \item \textbf{No Case:} if $\cI$ does not admit a perfect matching, then every set of triples $T$ with $\abs{T} = n$, covers at most $(1 - \gamma_2)\cdot 3n$ vertices, 
\end{enumerate}
where $\gamma_2$ ($< \gamma_1)$ and $s_2$ are universal positive constants in the $(0, 1)$ interval.
\end{theorem}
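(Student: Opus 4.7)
The plan is to use the identity reduction, setting $\cI' := \cI$ on the same hypergraph. Since \stdmb asks for $n$ pairwise-disjoint triples while \scbtb asks for $n$ (not necessarily disjoint) triples covering all $3n$ vertices, a pigeonhole argument shows any size-$n$ covering is automatically disjoint, i.e., a perfect matching. Consequently, the Yes-case perfect matching $T^*$ of $\cI$ is the unique size-$n$ covering of $\cI'$, and in the No case $\cI'$ admits no size-$n$ covering. The reduction clearly runs in polynomial time and preserves the vertex and triple counts.

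The technical core is the following deficit lemma, which I would prove by induction on $u$: \emph{for any set $T$ of $n$ triples leaving $u$ vertices uncovered, there is a disjoint subfamily $T_d \subseteq T$ with $|T_d| \geq n - u$.} When $u = 0$, total coverage equals $3n = 3|T|$, forcing $T$ to be disjoint and $T_d = T$. When $u \geq 1$, some vertex $v$ is over-covered; dropping one triple $t \ni v$ leaves $v$ covered by another triple, so at most two vertices become uncovered and the deficit drops by at least $1$. Applying the inductive hypothesis to $T - \{t\}$ yields a disjoint subfamily of size at least $(n-1) - (u-1) = n - u$.

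With this lemma in hand, the No case is immediate: for any size-$n$ $T$ with uncovered count $u$, the disjoint subfamily $T_d$ satisfies $|T_d| \leq (1-\gamma_1) n$ by \stdmb stability, whence $u \geq n - |T_d| \geq \gamma_1 n$. Choosing $\gamma_2 := \gamma_1/3$ makes $T$ miss at least a $\gamma_2$ fraction of the $3n$ vertices, as required; in particular $\gamma_2 < \gamma_1$.

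For the Yes case, fix a size-$n$ $T$ and let $d := |T - T^*| = |T^* - T|$ so that $\HW{T^*}{T} = d/n$. The goal is $u \geq 3 s_2 \cdot d$. If $u \geq d$ this already holds for any $s_2 \leq 1/3$. Otherwise, since $T_d \subseteq T$, $|T^* - T_d| \geq |T^* - T| = d$ and $|T_d - T^*| \geq |T_d| - |T^* \cap T| \geq (n-u) - (n-d) = d - u$, so $|T^* \Delta T_d| \geq 2d - u$. Applying \stdmb stability to the disjoint $T_d$ gives
\[
n - u \;\leq\; |T_d| \;\leq\; n - \tfrac{s_1 (2d-u)}{2},
\]
which rearranges to $u \geq \tfrac{2 s_1}{2 + s_1} \cdot d$. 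Setting $s_2 := \tfrac{2 s_1}{3(2+s_1)} \in (0, \tfrac{1}{3})$ closes both cases. The main obstacle is precisely this Yes-case analysis: \stdmb stability only constrains the disjoint subfamily $T_d$, so one must translate bounds from $\HW{T^*}{T_d}$ into bounds in terms of $\HW{T^*}{T}$, which the case split on whether $u \geq d$ together with the comparison of symmetric differences handles cleanly.
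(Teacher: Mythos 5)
Your proposal is correct and follows essentially the same route as the paper: the identity reduction, extracting a disjoint subfamily of $T$ whose size is at least $n$ minus the number of uncovered vertices, applying the \stdmb stability to that subfamily, and translating its Hamming distance back to $\HW{T^*}{T}$ via a case analysis. Your deficit lemma is the same combinatorial fact the paper obtains by taking a maximal disjoint subfamily $T'$ (each leftover triple adds at most two newly covered vertices, giving $|T'| \geq n - u$), and your case split and constant $s_2 = \tfrac{2s_1}{3(2+s_1)}$ differ only cosmetically from the paper's choice $s_2 = \tfrac{s_1}{3s_1+4}$.
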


\begin{proof}

The transformation function on the instances is an identity function, that is, we consider the same graph with the same set of triples. \\

\noindent
\textbf{Completeness:} Let $T^*$ be the perfect matching for instance $\cI$. Obviously, the same set $T^*$ covers the entire set of vertices in $\cI'$ as well. Any set of triples that covers all the vertices and has a size of $n$ must be disjoint (since there are $3n$ nodes to cover), hence a perfect matching. By the uniqueness of perfect matching in $\cI$, we conclude that the covering set of triples for $\cI'$ is also unique. Now, consider a set of triples $T$ with $\abs{T} = n$, and let $T'$ be a maximal subset of disjoint triples of $T$. Let $cov(T)$ denote the number of vertices covered by $T$. Note that any of the triple in $T - T'$ intersect with at least one other triple in $T'$ by maximality of $T'$, so they can cover at most $2$ extra vertices compared to the vertices already covered by $T'$.  Then
\begin{align}
cov(T)	& \leq 3\abs{T'} + 2\abs{T - T'} \nonumber \\
		& \leq 3(1 - s_1 \cdot \HW{T^*}{T'})n + 2\abs{T - T'} \nonumber \\
		& = 3n - \dfrac{3s_1 \cdot \abs{T^* \Delta T'}}{2n} n + 2\abs{T - T'} \nonumber \\
		& = 3n - \dfrac{3s_1}{2} \abs{T^* \Delta T'} + 2\abs{T - T'} \nonumber \\
		& \leq 3n - \dfrac{3s_1}{2} \abs{T^* \Delta T} + \dfrac{3s_1}{2} \abs{T - T'} + 2\abs{T - T'} \nonumber \\
		\label{eqn:covergae}
		& = 3n - \dfrac{3s_1}{2} \abs{T^* \Delta T} + \dfrac{3s_1 + 4}{2} \abs{T - T'}
\end{align}
where in the last inequality, we have used the fact that $\abs{T^* \Delta T'} \geq \abs{T^* \Delta T} - \abs{T - T'}$. Now, we consider two cases:\\

\noindent
\textbf{Case 1:} If $\dfrac{3s_1}{2} \abs{T^* \Delta T} - \dfrac{3s_1 + 4}{2} \abs{T - T'} > \dfrac{3s_1}{4} \abs{T^* \Delta T}$. In this case, using \eqref{eqn:covergae} we simply bound $cov(T)$ as
\begin{align*}
cov(T)	& \leq  3n - \dfrac{3s_1}{2} \abs{T^* \Delta T} + \dfrac{3s_1 + 4}{2} \abs{T - T'} \\
		& < 3n - \dfrac{3s_1}{4} \abs{T^* \Delta T} \\
		& = 3n - \dfrac{3s_1 \cdot n}{4n} \abs{T^* \Delta T} \\
		& = 3n \left( 1 - \dfrac{s_1 \cdot \HW{T^*}{T}}{2} \right).
\end{align*}

\noindent
\textbf{Case 2:} If $\dfrac{3s_1}{2} \abs{T^* \Delta T} - \dfrac{3s_1 + 4}{2} \abs{T - T'} \leq \dfrac{3s_1}{4} \abs{T^* \Delta T}$, with re-arranging the terms we get $\abs{T - T'} \geq \dfrac{3s_1}{6s_1 + 8} \abs{T^* \Delta T}$. 
Since every triple in $T - T'$ can cover at most $2$ extra nodes compared to $T'$, each triple in this set can be 
charged with one deficiency in the coverage of $T$ from a maximum of $3n$. Therefore, it must be the case that 
\begin{align*}
cov(T) 	& \leq 3n - \abs{T - T'} \\
		& \leq 3n - \dfrac{3s_1}{6s_1 + 8} \abs{T^* \Delta T} \\
		& = 3n - \dfrac{3s_1 \cdot n}{(3s_1 + 4) \cdot 2n} \abs{T^* \Delta T} \\
		& = 3n \left( 1 - \dfrac{s_1}{3s_1 + 4} \right) \HW{T^*}{T}.
\end{align*}
Choosing $s_2 = \frac{s_1}{3s_1 + 4}$ is sufficient in both cases to obtain the required stability condition.\\

\noindent
\textbf{Soundness:} Let $T$ be a set of $n$ triples in $\cI'$, and assume $T'$ is a maximal disjoint subset of $T$. From the No case of the \stdmb problem, we know that $\abs{T'} \leq (1 - \gamma_1)n$, thus $\abs{T - T'} \geq \gamma_1 \cdot n$. As argued before, the size of $T - T'$ is a lower bound on the number of nodes that are not covered. Therefore
\begin{align*}
cov(T)	& \leq 3n - \abs{T - T'} \\
		& \leq 3n - \gamma_1 \cdot n \\
		& = 3n \left(1 - \dfrac{\gamma_1}{3} \right). 
\end{align*}
Choosing $\gamma_2 = \dfrac{\gamma_1}{3}$ yields the required result.
\end{proof}

\section{Reduction From \scbtbfull to \textsc{Stable} \kmeans}\label{sec:hardness4}

We now show a reduction from any instance of the \scbtb to $(1+\eps_0)$-stable instances of the discrete \kmeans problem,
hence completing proof of Theorem \ref{thm:hardness}. 
In the following, by the Hamming weight function of two sets of centers $C$ and $C'$ we mean $\dfrac{\abs{C \Delta C'}}{2k}$. When $\abs{C} = \abs{C'}$, we have $\abs{C \Delta C'} = \abs{C - C'} + \abs{C' - C} = 2\abs{C - C'} = 2\abs{C' - C}$, so the Hamming weight function effectively becomes $\HW{C}{C'} = \dfrac{\abs{C - C'}}{k}$.

\begin{theorem}\label{thm:scbtb-kmeans}
There is a polynomial-time reduction transforming any instance $\cI$ of \scbtb to an instance $(\cX, \cC, \delta)$ of the (discrete) \kmeans problem in Euclidean space with the following properties.
\begin{enumerate}
    \item \textbf{Yes Case:} If $\cI$ is an instance from the {\bf Yes} case in Theorem \ref{theo:sctb}, then there is a unique optimum set of $k$ centres $C^* \subset \cC$ whose cost is $6k$ which remains the unique optimum solution after any $(1+\epsilon_\circ)$ perturbations.
    Furthermore, for any set of centres $C$ of size $k$, the \kmeans cost (without perturbations) of $C$ is at least $(6 + s_2 \cdot \HW{C^*}{C}) \cdot k$.
    \item \textbf{No Case:} If $\cI$ is an instance from the {\bf No} case in Theorem \ref{theo:sctb}, then every set of $k$ centres $C \subset \cC$, has a \kmeans cost of at least $(1+\gamma_2) \cdot 6k$.
\end{enumerate}
where $\epsilon_\circ > 0$ is some universal constant and $\gamma_2, s_2 > 0$ is the constant from Theorem \ref{theo:sctb}.
\end{theorem}
The last statement of the {\bf Yes} case is not required to conclude the proof Theorem \ref{thm:hardness}, but it may prove useful for future stability-preserving reductions that build off of our reduction.

\begin{proof}
For every point $v_j \in V$, $j = 1, \, 2, \, \ldots, \, 3n$, create a point of $\cX$, $x_j = \boldsymbol{e_j} \in \bR^{3n}$, where $\boldsymbol{e_j}$ has a $1$ in its $j^{th}$ coordinate, and zeros everywhere else. For every triple $t_\ell = (v_i, \, v_j, \, v_k) \in \cT$, $\ell = 1, \, 2, \, \ldots, \, m$, create a new point $y_\ell = \boldsymbol{e_i} + \boldsymbol{e_j} +\boldsymbol{e_k}$ in $\cC$. Finally, use $k = n$ as the number of centres for the resulting \kmeans instance.

We say that  $y_\ell$ \emph{covers} a point $x_i$ if $y_\ell$ has a 1 in its $i^{th}$ coordinate.
Note for two points $x_i, y_\ell$ we have $\delta(x_i, y_\ell)^2 = 2$ if $y_\ell$ covers $x_i$ and $\delta(x_i, y_\ell)^2 = 4$ otherwise.

~

\noindent
\textbf{Completeness:} We use $\epsilon_\circ := \min\{0.4, \frac{2s_2}{3s_2 + 3}\}$.

Assume that, in a given instance $\cI$ of the \scbtb, there exists a unique covering set of triples $T^*$, and all other subsets $T \subseteq \cT, |T| = n$ cover a fraction less than $1-s_2 \cdot \HW{T^*}{T}$ of the points where $s_2$ is the constant from Theorem \ref{theo:sctb}.
Let $C^* \subset \cC$ be the set of centres corresponding to the $n$ triples in $T^*$, i.e. $C^* = \{\boldsymbol{e_i} + \boldsymbol{e_j} +\boldsymbol{e_k} : (v_i, v_j, v_k) \in T^*\}$. Since $T^*$ covers all the points in instance $\cI$, each $x_j \in \cX$ is covered by some centre in $C^*$ so $\delta(x_j, C^*)^2 = 2$. 
Thus the cost of $C^*$ is $2 \cdot |\cX| = 6n$.

We next show the cost of any set of centres $C \subset \cC, |C| = k$ is at least $(6 + s_2 \cdot \HW{C^*}{C}) \cdot k$. To that end, let $T \subset \cT$ be the triples corresponding to centres in $C$. Observe $T$ leaves at least $s_2 \cdot \HW{T^*}{T} \cdot 3n$ points not covered in $\cI$. The corresponding points in $\cX$
are not covered by $C$ and $\delta(v_j, C)^2 = 4$ for any $v_j$ not covered by $C$. Thus, if we let $\eta_C$ denote the fraction of points in $\cX$ not covered by $C$ (noting $\eta_C \geq s_2 \cdot \HW{T^*}{T}$) we have:
\[ \sum_{j \in \cX} \delta(j, C)^2 = (1-\eta_C) \cdot 2 \cdot 3n + \eta_C \cdot 4 \cdot 3n = (1 + \eta_C) \cdot 6n \geq (1 + s_2 \cdot \HW{C^*}{C}) \cdot 6n, \]
as required (noting $\HW{C^*}{C} = \HW{T^*}{T}$).


Now let $\delta'()$ be any $(1+\epsilon_0)$-perturbation of $\delta$, so $\delta(j,i) \leq \delta'(j,i) \leq (1+\epsilon_0) \cdot \delta(j,i)$ for all $j \in \cX, i \in \cC$. We claim $C^*$ remains the unique optimum solution to this \kmeans instance. To that end, consider any $C \subseteq \cC, |C| = k$ with $C \neq C^*$.
Let $\sigma : \cX \rightarrow C$ map each point to its nearest centre according to distances $\delta'()$. Also let $\sigma^* : \cX \rightarrow C^*$ map each point to its nearest centre in $C^*$, again under distances $\delta'()$. Since each $x_j$ is covered by exactly one centre in $C^*$ and since
the perturbation $\epsilon_\circ$ is small (i.e. $2 \cdot (1+\epsilon_\circ)^2 < 4$), $\sigma^*(x_j)$ is the unique centre in $C^*$ covering $x_j$.

Note as $\epsilon_\circ$ is small (i.e. $2 \cdot (1+\epsilon_\circ)^2 < 4$ which holds because $\epsilon_\circ < 0.4$), if a point $x_j \in \cX$ is covered by some centre in $C$, then $\sigma$ will map $x_j$ to a point in $C$ that covers $x_j$.
We partition $\cX$ into three groups:
\begin{itemize}
\item $\mathbf{A}$: The points $x_j$ not covered by any centre in $C$. Note $C \neq C^*$ means $\mathbf{A} \neq \emptyset$.
\item $\mathbf{B}$: The points $x_j$ covered by a centre in $C$ and have $\sigma(x_j) \in C-C^*$.
\item $\mathbf{R}$: The points $x_j$ covered by a centre in $C$ and have $\sigma(x_j) \in C \cap C^*$.
\end{itemize}
For each $x_j \in \cX$, let $\epsilon_j$ be such that $\delta'(x_j, \sigma^*(x_j)) =  \delta(x_j, \sigma(x_j)) \cdot (1+\epsilon_j)$, so $0 \leq \epsilon_j \leq \epsilon_\circ$.
Note the following bounds on $\delta(x_j, \sigma(x_j))$ for each point $x_j \in \cX$:
\begin{itemize}
\item For $x_j \in \mathbf{A}$, $\delta'(x_j, \sigma(x_j))^2 \geq 4$ since $C$ does not cover $x_j$.
\item For $x_j \in \mathbf{B}$, $\delta'(x_j, \sigma(x_j))^2 \geq 2$.
\item For $x_j \in \mathbf{R}$, $\delta'(x_j, \sigma(x_j))^2 = 2 \cdot (1+\epsilon_j)^2$. This is because $\sigma(x_j) = \sigma^*(x_j)$ for such $j$ as there is a unique centre (namely $\sigma^*(j)$) in $C \cap C^*$ covering $x_j$.
\end{itemize}
Therefore:
\[ cost'(C) := \sum_{x_j \in \cX} \delta'(x_j, C)^2 \geq |\mathbf{A}| \cdot 4 + |\mathbf{B}| \cdot 2 + \sum_{x_j \in \mathbf R} 2 \cdot (1+\epsilon_j)^2. \]
On the other hand, using $(1+\epsilon_\circ)^2 < 1 + 3\epsilon_\circ$ (as $\epsilon_\circ < 1$)  we see:
\[ cost'(C^*) := \sum_{x_j \in \cX} \delta'(x_j, C^*)^2 < (|\mathbf{A}| + |\mathbf{B}|) \cdot 2 \cdot (1+3 \cdot \epsilon_\circ) + \sum_{x_j \in \mathbf R} 2 \cdot (1+\epsilon_j)^2. \]
The bound is strict since $\mathbf{A} \neq \emptyset$.
From what we showed earlier, $|\mathbf{A}| \geq 3n \cdot s_2 \cdot \HW{C}{C^*}$ since $\mathbf{A}$ are the points not covered by $C$. Also, $|\mathbf{B}| \leq 3 \cdot |C-C^*| = 3k \cdot \HW{C}{C^*}$.
We now see
\begin{eqnarray*}
 cost'(C) - cost'(C^*)
 & > & |\mathbf{A}| \cdot (2 - 3\cdot \epsilon_\circ) - |\mathbf{B}| \cdot 3 \cdot \epsilon_\circ \\
 & \geq & 3n \cdot s_2 \cdot \HW{C}{C^*} \cdot (2 - 3\cdot \epsilon_\circ) - 3n \cdot \HW{C}{C^*} \cdot \epsilon_\circ \\
 & = & 3n \cdot \HW{C}{C^*} \cdot (2s_2 - (3s_2 + 3) \cdot \epsilon_\circ) = 0.
\end{eqnarray*}
The equality with 0 is by our choice of $\epsilon_\circ$. Therefore, $cost'(C) > cost'(C^*)$ as required. Since this holds for any $C \neq C*$ with $|C| = n$, we see $C^*$ remains the unique optimum set of $n$ centres in the instance with perturbed distances $\delta'$.

~

\noindent
\textbf{Soundness:} Assume $\cI$ is a non-covering instance of \scbtb, so we have that any $T \subseteq \cT, |T| = n$ can cover at most $(1 - \gamma_2) \cdot 3n$ points. In the \kmeans instance, consider any $C \subseteq \cC, |C| = k$. Let $T$ be the triples corresponding to $T$. Since $T$ fails to cover $\eta'_T \geq \gamma_2 \cdot 3n$ points,
then the cost of $C$ is:
\[ 2 \cdot (3n-\eta'_T) + 4 \cdot \eta'_T  = 6n + 2\eta'_T \geq (1 + \gamma_2) \cdot 6n. \]
\end{proof}


\begin{proof}[Proof of Theorem \ref{thm:hardness}]
Now we combine the results of Sections \ref{sec:hardness1} to \ref{sec:hardness4}.
Assuming Theorem \ref{PCP-hypothesis}, it follows that \stqsatb is hard. The reduction in Section \ref{sec:hardness1}
implies the hardness of \stsatb. Then reduction from Section \ref{sec:hardness2} implies hardness of \stdmb. Reduction of
Section \ref{sec:hardness3} implies hardness of \scbtb. Finally, Theorem \ref{thm:scbtb-kmeans} implies hardness of
stable \kmeans.
\end{proof}

\section{Conclusion}
We showed stable instances of \kmed and \kmeans in metrics with constant doubling dimension, including constant-dimensional Euclidean metrics, can be solved in polynomial time by using a standard local search algorithm that always takes the best improvement. We also showed stable instances are hard to solve for some stability constant in arbitrary dimension Euclidean metrics. A natural problem is to find faster algorithms for solving stable \kmeans.
A related direction to consider is what notions of stability cause other {\bf PLS}-complete problems to become polynomial-time solvable.

We also used the concept of stability-preserving reductions to show hardness for stable \kmeans by leveraging a new PCP construction by Paradise \cite{Orr}. What other stable optimization problems can be proven hard with this approach?


\bibliographystyle{plain}
{
\bibliography{stable-clustering}

\begin{thebibliography}{10}

\bibitem{ANSW17}
Sara Ahmadian, Ashkan Norouzi-Fard, Ola Svensson, and Justin Ward.
\newblock Better guarantees for k-means and euclidean k-median by primal-dual
  algorithms.
\newblock In {\em Proceedings of the 58th Annual Symposium on Foundations of
  Computer Science}, FOCS 2017, pages 61--72, 2017.

\bibitem{AKP08}
Ekaterina Alekseeva, Yuri Kochetov, and Alexander Plyasunov.
\newblock Complexity of local search for the p-median problem.
\newblock In {\em European Journal of Operational Research}, volume 191(3),
  pages 736--752, 2008.

\bibitem{ADHP09}
Daniel Aloise, Amit Deshpande, Pierre Hansen, and Preyas Popat.
\newblock {NP}-hardness of {Euclidean} sum-of-squares clustering.
\newblock {\em Mach. Learn.}, 75(2):245--248, 2009.

\bibitem{AMM17}
Haris Angelidakis, Konstantin Makarychev, and Yury Makarychev.
\newblock Algorithms for stable and perturbation-resilient problems.
\newblock In {\em Proceedings of the 49th Annual ACM SIGACT Symposium on Theory
  of Computing}, STOC 2017, pages 438--451, New York, NY, USA, 2017. ACM.

\bibitem{ARYA}
Vijay Arya, Naveen Garg, Rohit Khandekar, Adam Meyerson, Kamesh Munagala, , and
  Vinayaka Pandit.
\newblock Local search heuristics for k-median and facility location problems.
\newblock {\em SIAM J. Comput}, 33(3):544--562, 2004.

\bibitem{ABS10}
Pranjal Awasthi, Avrim Blum, and Or~Sheffet.
\newblock Stability yields a ptas for k-median and k-means clustering.
\newblock In {\em Proceedings of the 2010 IEEE 51st Annual Symposium on
  Foundations of Computer Science}, FOCS '10, pages 309--318, Washington, DC,
  USA, 2010. IEEE Computer Society.

\bibitem{ABS12}
Pranjal Awasthi, Avrim Blum, and Or~Sheffet.
\newblock Center-based clustering under perturbation stability.
\newblock In {\em Inf. Process. Lett.}, volume 112(1-2), pages 49--54, 2012.

\bibitem{ACKS15}
Pranjal Awasthi, Moses Charikar, Ravishankar Krishnaswamy, and Ali~Kemal Sinop.
\newblock {The Hardness of Approximation of {Euclidean} {K}-Means}.
\newblock In {\em Proceedings of 31st International Symposium on Computational
  Geometry (SoCG '15)}, Leibniz International Proceedings in Informatics
  (LIPIcs), pages 754--767, 2015.

\bibitem{BBG09}
Maria-Florina Balcan, Avrim Blum, and Anupam Gupta.
\newblock Approximate clustering without the approximation.
\newblock In {\em Proceedings of the Twentieth Annual ACM-SIAM Symposium on
  Discrete Algorithms}, SODA '09, pages 1068--1077, Philadelphia, PA, USA,
  2009. Society for Industrial and Applied Mathematics.

\bibitem{BHW16}
Maria{-}Florina Balcan, Nika Haghtalab, and Colin White.
\newblock k-center clustering under perturbation resilience.
\newblock In {\em 43rd International Colloquium on Automata, Languages, and
  Programming, {ICALP} 2016, July 11-15, 2016, Rome, Italy}, pages 68:1--68:14,
  2016.

\bibitem{BL16}
Maria{-}Florina Balcan and Yingyu Liang.
\newblock Clustering under perturbation resilience.
\newblock {\em {SIAM} J. Comput.}, 45(1):102--155, 2016.

\bibitem{BR14}
Shalev Ben{-}David and Lev Reyzin.
\newblock Data stability in clustering: {A} closer look.
\newblock {\em Theor. Comput. Sci.}, 558:51--61, 2014.

\bibitem{BDLS13}
Yonatan Bilu, Amit Daniely, Nati Linial, and Michael~E. Saks.
\newblock On the practically interesting instances of {MAXCUT}.
\newblock In {\em 30th International Symposium on Theoretical Aspects of
  Computer Science, {STACS} 2013, February 27 - March 2, 2013, Kiel, Germany},
  pages 526--537, 2013.

\bibitem{BL12}
Yonatan Bilu and Nathan Linial.
\newblock Are stable instances easy?
\newblock {\em Comb. Probab. Comput.}, 21(5):643--660, September 2012.

\bibitem{CSL19}
Vincent Cohen-Addad and Karthik C.S.
\newblock Inapproximability of clustering in lp metrics.
\newblock In {\em 2019 IEEE 60th Annual Symposium on Foundations of Computer
  Science (FOCS)}, pages 519--539, 2019.

\bibitem{CKM16}
Vincent Cohen{-}Addad, Philip~N. Klein, and Claire Mathieu.
\newblock Local search yields approximation schemes for k-means and k-median in
  euclidean and minor-free metrics.
\newblock In {\em {IEEE} 57th Annual Symposium on Foundations of Computer
  Science, {FOCS} 2016, 9-11 October 2016, Hyatt Regency, New Brunswick, New
  Jersey, {USA}}, pages 353--364, 2016.

\bibitem{CSL22}
Vincent Cohen-Addad, Karthik~C. S., and Euiwoong Lee.
\newblock {\em Johnson Coverage Hypothesis: Inapproximability of $k$-means and
  $k$-median in $\ell_p$-metrics}, pages 1493--1530.

\bibitem{CS17}
Vincent Cohen{-}Addad and Chris Schwiegelshohn.
\newblock On the local structure of stable clustering instances.
\newblock In {\em 58th {IEEE} Annual Symposium on Foundations of Computer
  Science, {FOCS} 2017, Berkeley, CA, USA, October 15-17, 2017}, pages 49--60,
  2017.

\bibitem{DGG}
Irit Dinur, Oded Goldreich, and Tom Gur.
\newblock Every set in p is strongly testable under a suitable encoding.
\newblock {\em Electronic Colloquium on Computational Complexity (ECCC)},
  25:50, 2018.

\bibitem{DFKVV04}
P.~Drineas, A.~Frieze, R.~Kannan, S.~Vempala, and V.~Vinay.
\newblock Clustering large graphs via the singular value decomposition.
\newblock {\em Mach. Learn.}, 56(1-3):9--33, 2004.

\bibitem{FKS19}
Zachary Friggstad, Kamyar Khodamoradi, and Mohammad~R. Salavatipour.
\newblock Exact algorithms and lower bounds for stable instances of euclidean
  k-means.
\newblock In Timothy~M. Chan, editor, {\em Proceedings of the Thirtieth Annual
  {ACM-SIAM} Symposium on Discrete Algorithms, {SODA} 2019, San Diego,
  California, USA, January 6-9, 2019}, pages 2958--2972. {SIAM}, 2019.

\bibitem{FRS16A}
Zachary Friggstad, Mohsen Rezapour, and Mohammad~R. Salavatipour.
\newblock Local search yields a {PTAS} for k-means in doubling metrics.
\newblock In {\em {IEEE} 57th Annual Symposium on Foundations of Computer
  Science, {FOCS} 2016, 9-11 October 2016, Hyatt Regency, New Brunswick, New
  Jersey, {USA}}, pages 365--374, 2016.

\bibitem{FRS16}
Zachary Friggstad, Mohsen Rezapour, and Mohammad~R. Salavatipour.
\newblock Local search yields a {PTAS} for k-means in doubling metrics.
\newblock {\em {SIAM} J. Comput.}, 48(2):452--480, 2019.

\bibitem{gareyjohson}
M.~R. Garey and D.~S. Johnson.
\newblock {\em Computers and Intractability: A Guide to the Theory of
  NP-Completeness (Series of Books in the Mathematical Sciences)}.
\newblock W. H. Freeman, first edition edition, 1979.

\bibitem{KMNPSW04}
Tapas Kanungo, David~M. Mount, Nathan~S. Netanyahu, Christine~D. Piatko, Ruth
  Silverman, and Angela~Y. Wu.
\newblock A local search approximation algorithm for {K}-means clustering.
\newblock {\em Comput. Geom. Theory Appl.}, 28(2-3):89--112, 2004.

\bibitem{LSW17}
Euiwoong Lee, Melanie Schmidt, and John Wright.
\newblock Improved and simplified inapproximability for k-means.
\newblock {\em Information Processing Letters}, 120:40--43, 2017.

\bibitem{MNV09}
Meena Mahajan, Prajakta Nimbhorkar, and Kasturi Varadarajan.
\newblock The planar {K}-means problem is {NP}-hard.
\newblock In {\em Proceedings of the 3rd International Workshop on Algorithms
  and Computation (WALCOM '09)}, pages 274--285. Springer-Verlag, 2009.

\bibitem{MMV14}
Konstantin Makarychev, Yury Makarychev, and Aravindan Vijayaraghavan.
\newblock Bilu-linial stable instances of max cut and minimum multiway cut.
\newblock In {\em Proceedings of the Twenty-fifth Annual ACM-SIAM Symposium on
  Discrete Algorithms}, SODA '14, pages 890--906, Philadelphia, PA, USA, 2014.
  Society for Industrial and Applied Mathematics.

\bibitem{Matousek00}
Jir{\i} Matou{\v{s}}ek.
\newblock On approximate geometric k-clustering.
\newblock {\em Discrete \& Computational Geometry}, 24(1):61--84, 2000.

\bibitem{ORSS12}
Rafail Ostrovsky, Yuval Rabani, Leonard~J. Schulman, and Chaitanya Swamy.
\newblock The effectiveness of lloyd-type methods for the k-means problem.
\newblock {\em J. {ACM}}, 59(6):28:1--28:22, 2012.

\bibitem{Orr}
Orr Paradise.
\newblock Smooth and strong pcps.
\newblock {\em computational complexity}, 30(1), 2021.

\bibitem{VV86}
L.~Valiant and V.~Vazirani.
\newblock Np is as easy as detecting unique solutions.
\newblock {\em Theoretical Computer Science}, 47:85--93, 1986.

\bibitem{Vattani09}
Andrea Vattani.
\newblock The hardness of {K}-means clustering in the plane.
\newblock {\em Manuscript}, 2009.

\bibitem{vazirani}
Vijay~V. Vazirani.
\newblock {\em Approximation Algorithms}.
\newblock Springer, 2001.

\bibitem{VDW17}
Aravindan Vijayaraghavan, Abhratanu Dutta, and Alex Wang.
\newblock Clustering stable instances of euclidean k-means.
\newblock In {\em Advances in Neural Information Processing Systems 30: Annual
  Conference on Neural Information Processing Systems 2017, 4-9 December 2017,
  Long Beach, CA, {USA}}, pages 6503--6512, 2017.

\end{thebibliography}
}


\appendix

\section{Alternative Local Search Convergence Analysis}\label{app:alt}

Consider some problem where $\mathfrak F$ is the set of feasible solutions and each $\loc \in \mathfrak F$ is endowed with an {\em integer} value $\cost(\loc)$ that can be evaluated in polynomial time.
The goal is to find some $\loc \in \mathfrak F$ with minimum $\cost(\loc)$. We describe a setting encountered in many approximation algorithms based on local search and provide alternative analysis on the convergence of the
local search heuristics that show the locality guarantee is obtained after a polynomial number of iterations. That is, we avoid the ``$\eps$'' that is typically lost in the guarantee from many local search algorithm
that only take noticeable improvements ({\em e.g.} only if the cost improves by a $(1-\eps/k)$-factor for some value $k$). When we say ``polynomial time'' in this context, we mean the running time
is polynomial in the input size of the underlying problem.

For each $\loc \in \mathfrak F$, let $\mathcal N(\loc) \subseteq \mathfrak F$
be a set of {\em neighbouring solutions} with the property $\loc \in \mathcal N(\loc)$. We assume $\mathcal N(\loc)$ can be enumerated in polynomial time (implying $|\mathcal N(\loc)|$ is polynomially-bounded).
Suppose we also know some $\Delta$ such that $\cost(\loc) \leq \Delta$ for each $\loc \in \mathfrak F$ where $\log \Delta$ is polynomially bounded in the input size.
If the reader wants to consider a specific setting, consider the \kmed problem where $\mathfrak F$ is all subsets of $k$ centres and $\mathcal N(\loc)$
is the set of feasible solutions $\loc'$ with $|\loc - \loc'|=1$ (i.e. the single-swap setting), and $\Delta = |\cl| \cdot \max_{i,j} \delta(i,j)$.

Next, suppose we have the following ``locality'' analysis: for each $\loc \in \mathfrak F$ there is a set of neighbouring solutions $\mathfrak G(\loc) \subseteq \mathcal N(\loc)$ (sometimes called {\em test swaps}) where
\begin{equation}\label{eqn:step}
\sum_{\loc' \in \mathfrak G(\loc)} \left(\cost(\loc') - \cost(\loc)\right) \leq \alpha \cdot \cost(\opt) - \beta \cdot \cost(\loc).
\end{equation}
for some fixed rational values $\alpha \geq 1$ and $0 < \beta \leq 1$ that are both integer multiples of $M$ (i.e. the least-common multiple of the denominators of $\alpha$ and $\beta$).
Finally, suppose we have a bound $|\mathfrak G(\loc)| \leq \kappa$ for each $\loc \in \mathfrak F$ on the number of ``test swaps'' in the above bound. We know $\kappa$ is polynomially-bounded because
$|\mathcal N(\loc)|$ is polynomially-bounded, but perhaps $\kappa$ is even smaller. This is the case in many applications.
In the single-swap \kmed setting, Arya et al. \cite{ARYA} find such a set of test swaps with $\alpha = 5, \beta = 1$ and $\kappa = k$ and, after scaling distances to clear denominators,
we could pick $\Delta = n \cdot \max_{i,j} \delta(i, j)$.

Consider the following local search algorithm for this generic setting.
\begin{algorithm*}[h]
 \caption{A Generic Local Search Algorithm} \label{alg:generic}
\begin{algorithmic}
\State let $\loc^0$ be any set in $\mathfrak{F}$
\State $i \leftarrow 0$
\For{$K := \lceil \kappa \cdot \ln(\Delta) \cdot M / \beta \rceil$ iterations}
\State let $\loc^{i+1}$ be the cheapest set in $\mathcal N(\loc^i)$ \Comment{could be $\loc^{i+1} = \loc^i$}
\State $i \leftarrow i+1$
\EndFor
\State \Return $\loc$
\end{algorithmic}
\end{algorithm*}
Note we do not really need to track the index $i$, we could just update the current set with the best one in its neighbourhood. The indices will be helpful in the proof.

Clearly Algorithm \ref{alg:generic} runs in polynomial time under our assumptions.
We show the approximation guarantee is what is guaranteed by local optimum solutions, even though the returned solution itself might not be a true local optimum
(i.e. it might still be that $\cost(\loc) \neq \min_{\loc' \in \mathfrak{G}(\loc)} \cost(\loc')$).
\begin{theorem}
The returned solution $\loc^K$ satisfies $\cost(\loc) \leq \frac{\alpha}{\beta} \cdot \cost(\opt)$.
\end{theorem}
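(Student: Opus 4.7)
The plan is a two-stage argument: first extract a per-iteration contraction from the locality bound, and then use integrality together with the choice of $K$ to rule out any residual gap.

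First, fix any iteration $i$ and let $\loc = \loc^i$. Averaging \eqref{eqn:step} (renaming is harmless; I mean the displayed locality inequality) over $\mathfrak G(\loc)$ and using $|\mathfrak G(\loc)| \leq \kappa$, there exists some $\loc' \in \mathfrak G(\loc) \subseteq \mathcal N(\loc)$ with
\[
\cost(\loc') - \cost(\loc) \;\leq\; \frac{\alpha \cdot \cost(\opt) - \beta \cdot \cost(\loc)}{\kappa}.
\]
Since the algorithm picks the cheapest element of $\mathcal N(\loc^i)$, the same bound holds with $\loc^{i+1}$ in place of $\loc'$. Define the ``surplus''
\[
E_i \;:=\; \cost(\loc^i) - \tfrac{\alpha}{\beta}\cdot\cost(\opt).
\]
Multiplying through by $\beta$ and rearranging the previous inequality yields (unconditionally, regardless of the sign of $E_i$) the linear recurrence
\[
E_{i+1} \;\leq\; \Bigl(1-\tfrac{\beta}{\kappa}\Bigr)\, E_i .
\]

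Second, iterate this $K$ times. Since $E_0 \leq \cost(\loc^0) \leq \Delta$, we get
\[
E_K \;\leq\; \Delta\cdot\Bigl(1-\tfrac{\beta}{\kappa}\Bigr)^{K} \;\leq\; \Delta\cdot e^{-K\beta/\kappa}.
\]
Plugging in $K = \lceil \kappa \ln(\Delta) M /\beta\rceil$ gives $E_K \leq \Delta\cdot e^{-M\ln \Delta} = \Delta^{1-M}$, which (for $M\geq 1$ and $\Delta\geq 1$) is at most $1/(M\beta)$, and in fact strictly less for any non-degenerate choice of parameters. The third step is the integrality trick: $M\alpha$ and $M\beta$ are integers by definition of $M$, and costs are integers, so $M\beta\cdot E_i = M\beta\cdot\cost(\loc^i) - M\alpha\cdot \cost(\opt)$ is always an integer. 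If $E_K$ were strictly positive, we would have $M\beta \cdot E_K \geq 1$, i.e. $E_K \geq 1/(M\beta)$, contradicting the bound just derived. Hence $E_K \leq 0$, which is exactly $\cost(\loc^K) \leq (\alpha/\beta)\cdot\cost(\opt)$.

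The only mildly subtle step is the last one: one must be careful that the ``$\epsilon$-loss'' avoidance hinges entirely on the integrality of $M\beta E_i$, which in turn is why the loop bound $K$ scales with $M$ (not merely $\log M$). Everything else is routine: the averaging argument is standard, the contraction $E_{i+1}\le (1-\beta/\kappa)E_i$ follows by a one-line rearrangement, and the choice of $K$ is calibrated precisely so that the geometric decay pushes $E_K$ below the integrality threshold $1/(M\beta)$.
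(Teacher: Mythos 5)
Your overall strategy is the same as the paper's (average the test-swap bound, contract the surplus $\beta\,\cost(\loc^i)-\alpha\,\cost(\opt)$ geometrically, then use integrality of $M\beta\,\cost(\loc^i)-M\alpha\,\cost(\opt)$ to kill the residual), but there is a genuine gap in the step you flag as "unconditional". Averaging \eqref{eqn:step} over $\mathfrak G(\loc)$ gives some $\loc'$ with $\cost(\loc')-\cost(\loc)\leq \bigl(\alpha\,\cost(\opt)-\beta\,\cost(\loc)\bigr)/|\mathfrak G(\loc)|$; replacing $|\mathfrak G(\loc)|$ by the larger number $\kappa$ in the denominator preserves the inequality only when the numerator is nonpositive, i.e.\ only when $E_i\geq 0$. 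When $E_i<0$ the numerator is positive and the replacement flips the direction, so your first display, and hence your derivation of $E_{i+1}\leq(1-\beta/\kappa)E_i$, is unjustified precisely in that case. This is exactly the point the paper is careful about: in its proof it assumes $\beta\,\cost(\loc^i)-\alpha\,\cost(\opt)>0$ for all $i<K$ (``otherwise we are done''), the ``otherwise'' resting on the fact that $\loc^i\in\mathcal N(\loc^i)$ so the cost sequence is non-increasing. Your recurrence is in fact true unconditionally, but only via that same monotonicity observation: if $E_i<0$ then $E_{i+1}\leq E_i\leq(1-\beta/\kappa)E_i$ because $\cost(\loc^{i+1})\leq\cost(\loc^i)$ and $E_i<0$. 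You never invoke $\loc\in\mathcal N(\loc)$, so as written the key step does not follow from what you cite; either add this case split (or the paper's contrapositive framing) to repair it.

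A second, more minor issue: your contradiction needs a strict bound. You derive $E_K\leq 1/(M\beta)$ and then argue that $E_K>0$ would force $E_K\geq 1/(M\beta)$; these are compatible (with $E_K=1/(M\beta)$), so there is no contradiction yet. The fix is cheap --- use the strict inequality $(1-\beta/\kappa)^K<e^{-K\beta/\kappa}$ (as the paper does) or verify $\Delta^{1-M}<1/(M\beta)$ directly --- but ``in fact strictly less for any non-degenerate choice of parameters'' is not a proof, and the $\Delta=1$ and $M=1$ corner cases should be checked rather than waved at.
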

\begin{proof}
We show for some $0 \leq i \leq K$ that $\beta \cdot \cost(\loc^i) - \alpha \cdot \cost(\opt) \leq 0$. As $\cost(\loc^{j+1}) \leq \cost(\loc^j)$ at each step $0 \leq j < K$,
this would show $\cost(\loc^K) \leq \cost(\loc^i) \leq \frac{\alpha}{\beta} \cdot \cost(\opt)$.

To that end, suppose $\beta \cdot \cost(\loc^i) - \alpha \cdot \cost(\opt) > 0$ for all $0 \leq i < K$; otherwise we are done. For each such $i$, \eqref{eqn:step} and the fact that the local search algorithm takes the best improvement at each step shows
\begin{eqnarray*}
\cost(\loc^{i+1}) & \leq & \cost(\loc^i) + \min_{\loc' \in \mathfrak G(\loc^i)} \left(\cost(\loc') - \cost(\loc^i)\right) \\
& \leq & \cost(\loc^i) + \frac{1}{|\mathfrak G(\loc^i)|} \sum_{\loc' \in \mathfrak G(\loc^i)} \left(\cost(\loc') - \cost(\loc^i)\right) \\
& \leq & \cost(\loc^i) + \frac{\alpha \cdot \cost(\opt) - \beta \cdot \cost(\loc^i)}{|\mathfrak G(\loc^i)|} \\
& \leq & \cost(\loc^i) + \frac{\alpha \cdot \cost(\opt) - \beta \cdot \cost(\loc^i)}{\kappa} \\
\end{eqnarray*}
Thus,
\[ \beta \cdot \cost(\loc^{i+1}) - \alpha \cdot \cost(\opt) \leq \left(1 - \frac{\beta}{\kappa}\right) \cdot \left(\beta \cdot \cost(\loc^{i}) - \alpha \cdot \cost(\opt)\right). \]
which, inductively, shows
\begin{eqnarray*}
\beta \cdot \cost(\loc^K) - \alpha \cdot \cost(\opt) & \leq & \left(1-\frac{\beta}{\kappa}\right)^K \cdot \left(\beta \cdot \cost(\loc^0) - \alpha \cdot \cost(\opt)\right) \\
& < & e^{-\ln (\Delta M)} \cdot \Delta = 1/M.
\end{eqnarray*}
As $\beta \cdot \cost(\loc^K) - \alpha \cdot \cost(\opt)$ is an integer multiple of $1/M$, then $\beta \cdot \cost(\loc^K) - \alpha \cdot \cost(\opt) \leq 0$ as required.
\end{proof}
This analysis trivially extends to the weighted swap setting, where for each $\loc' \in \mathfrak{G}(\loc)$ we have a value $\lambda_{\loc'} \geq 0$ and
\[ \sum_{\loc' \in \mathfrak{G}(\loc)} \lambda_{\loc'} \cdot (\cost(\loc') - \cost(\loc)) \leq \alpha \cdot \cost(\opt) - \beta \cdot \cost(\loc). \]
Taking $\kappa = \sum_{\loc' \in \mathfrak{G}} \lambda_{\loc'}$ yields the same conclusion: Algorithm \ref{alg:generic} will return a solution $\loc$ with $\cost(\loc) \leq \frac{\alpha}{\beta} \cdot \cost(\opt)$.

\section{A Parsimonious $L$-Reduction That Does Not Preserve Stability}\label{app:nostable}


In this section, we demonstrate that a classic $L$-reduction that reduces an instance of \qsat to one with bounded occurrence for each variable
does not necessarily preserve stability within any constant. Apart from the obvious point that the simple reduction does not work,
we wish to impart the lesson that reductions that preserve stability are not immediately obtained by classic (parsimonious) $L$-reductions: {\em stability-preserving reductions} are a distinct concept.

In particular, we show that the classic technique of replacing each occurrence of a variable
with an expander gadget fails to preserve stability. Our presentation mirrors that in \cite{vazirani}.

First, \cite{vazirani} points out that for any $k \geq 1$ that one can efficiently construct a 14-regular multigraph $G_k = (V_k, E_k)$ with $|V_k| = k$ so that $|\delta_{E_k}(S; V-S)| \geq \min\{|S|, |V-S|\}$ for any $S \subseteq V_k$.
For each variable $x \in X$ of a \qsat instance $\Phi$, let $k_x$ denote the number of clauses of $\Phi$ that depend on $x$.
Replace $x$ with $k_x$ variables in a one-to-one fashion in these clauses, call these new variables $x^1, \ldots, x^{k_x}$. Finally, for each edge $(i,j) \in G_{k_x}$ (viewing $V_{k_x}$ as integers from 1 to $k_x$) add
constraints $x^i \vee \overline{x^j}$ and $\overline{x^i} \vee x^j$. These two constraints ensure $x^i$ and $x^j$ have the same truth value.

As shown in \cite{vazirani}, this is an $L$-reduction and each variable appears in at most $29$ clauses in the resulting \qsat instance. It is also easy to verify it is a parsimonious reduction, noting any satisfying assignment
requires all copies of a variable of $\Phi$ to have the same truth value.

We demonstrate stable instances of \qsat where each variable does not appear in a bounded number of
clauses such that applying this reduction does not result in a stable instance of \qsatb.
For an integer $n \geq 1$, let $\Phi_n$ be the \sat instance with variables $X_n = \{z, x_1, x_2, \ldots, x_n\}$ and the following clauses:
\begin{itemize}
\item $z \vee \overline {x_i}$ for each $1 \leq i \leq n$,
\item $\overline{x_i}$ for each $1 \leq i \leq n$, and
\item $\overline z$.
\end{itemize}
We note this could be ``padded'' to a \tsat instance by adding gadgets like $F(z, w, y)$ instead of the clause $\overline z$ where $F(z, w, y)$ is the collection of 7 clauses that enforce all literals to be \textsc{False} (and similarly
for the other clauses),
but we stick with this smaller instance for ease of discussion.

\begin{claim}
For any $n \geq 1$, $\Phi_n$ is $\frac{1}{2}$-stable.
\end{claim}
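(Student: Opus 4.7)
The plan is to identify the unique satisfying assignment of $\Phi_n$, then split the analysis based on the value of $z$ in the alternative assignment, since the clause $\overline z$ and the clauses $z \vee \overline{x_i}$ behave very differently depending on this choice.

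First I would observe that the only satisfying assignment $x^*$ sets $z$ and every $x_i$ to $\textsc{False}$: the clause $\overline z$ forces $z=\textsc{False}$, and then each clause $\overline{x_i}$ forces $x_i=\textsc{False}$. So $\Phi_n$ is uniquely satisfiable. Let $n'=n+1$ denote the number of variables and $m=2n+1$ the number of clauses. Fix any assignment $x'\neq x^*$, let $k$ be the number of indices $i$ with $x'_i=\textsc{True}$, and analyze by cases on $z$.

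In the case $z=\textsc{True}$ under $x'$, the clause $\overline z$ is violated, each of the $k$ clauses $\overline{x_i}$ with $x'_i=\textsc{True}$ is violated, and the clauses $z\vee\overline{x_i}$ are all satisfied. So exactly $1+k$ clauses are violated while the Hamming weight is $\HW{x^*}{x'}=(1+k)/(n+1)$. The fraction of violated clauses is $(1+k)/(2n+1)$, and the ratio
\[
\frac{(1+k)/(2n+1)}{(1+k)/(n+1)} \;=\; \frac{n+1}{2n+1} \;>\; \frac{1}{2}.
\]
In the case $z=\textsc{False}$ under $x'$, the clause $\overline z$ is satisfied, but for each of the $k$ indices with $x'_i=\textsc{True}$ both $\overline{x_i}$ and $z\vee\overline{x_i}$ are violated. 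Since $x'\neq x^*$ we must have $k\geq 1$ here, giving $2k$ violated clauses and $\HW{x^*}{x'}=k/(n+1)$. The ratio of violated fraction to Hamming weight is $2(n+1)/(2n+1) > 1$.

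Taking the worse of the two cases shows that for every $x'\neq x^*$ the fraction of violated clauses is at least $\tfrac12 \cdot \HW{x^*}{x'}$, establishing $\tfrac12$-stability. The main (and only) subtlety is noticing the two regimes for $z$; everything else is direct counting. I do not expect any obstacle, only bookkeeping.
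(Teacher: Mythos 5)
Your proof is correct and follows essentially the same counting argument as the paper: identify the all-\textsc{False} assignment as the unique satisfying assignment and compare the number of violated clauses to the Hamming weight, obtaining the ratio $\frac{n+1}{2n+1} > \frac12$. The case split on $z$ is not actually needed—the paper simply notes that each of the $b$ variables set to \textsc{True} (whether $z$ or some $x_i$) violates its own singleton clause, which already yields at least $b$ violated clauses out of $2n+1$ against Hamming weight $b/(n+1)$.
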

\begin{proof}
Setting all variables in $X_n$ to \textsc{False} satisfies all clauses, call this assignment $(z^*, x^*)$. Consider some assignment $(z', x')$ and let $h = \HW{(z^*, x^*)}{(z', x')}$.

If $b$ is the number of variables that are set to \texttt{True}, then at least $b$ clauses are not satisfied, namely the singleton clauses. The fraction of unsatisfied clauses is at least $b/(2n+1)$ and
$h = b/(n+1)$. So the number of unsatisfied clauses is at least $\frac{b}{2n+1} \geq \frac{b}{2 \cdot (n+1)} = \frac{h}{2}$.
\end{proof}

Now, for $n \geq 1$ let $\Psi_n$ be the \qsatb instance for $B = 29$ that results by applying the above reduction to $\Phi_n$. Let $s_n$ denote the stability of $\Psi_n$.
\begin{claim}
$s_n \rightarrow 0$ as $n \rightarrow \infty$
\end{claim}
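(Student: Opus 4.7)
\medskip

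\noindent\textbf{Proof plan.} The plan is to exhibit, for each $n$, a single bad assignment witnessing that the stability constant of $\Psi_n$ is at most $O(1/n)$. The unique satisfying assignment of $\Phi_n$ sets every variable to \textsc{False}, and since the expander reduction is parsimonious, the unique satisfying assignment of $\Psi_n$ sets every copy $z^1,\ldots,z^{n+1}$ and every copy $x_i^1,x_i^2$ to \textsc{False}. Call this assignment $(z^*,x^*)$. The witnessing alternative assignment $(z',x')$ will set every copy of $z$ to \textsc{True} while keeping every copy of every $x_i$ at \textsc{False}. Intuitively this is a ``consistent'' flip of the single variable $z$, so it should leave all equivalence gadgets satisfied and violate only the clauses of $\Phi_n$ in which $z$ occurs negatively.

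Next I would enumerate the clauses of $\Psi_n$ and count carefully. The original clauses of $\Phi_n$ contribute $2n+1$ clauses. The expander $G_{n+1}$ (14-regular on $n+1$ vertices) contributes $14(n+1)$ equivalence clauses among the $z$-copies, and for each $x_i$ the expander $G_2$ contributes $28$ equivalence clauses, for a total of $28n$. Thus $\Psi_n$ has $m_n := 44n+15$ clauses and $3n+1$ variables. Under $(z',x')$, all equivalence clauses are satisfied because all copies of each variable agree in value; every clause of the form $z^{?}\vee \overline{x_i^{?}}$ is satisfied because $z^{?} = \textsc{True}$; every clause $\overline{x_i^{?}}$ is satisfied because $x_i^{?} = \textsc{False}$; and the single clause $\overline{z^{?}}$ arising from the original clause $\overline z$ is the only unsatisfied clause. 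Hence exactly $1$ clause of $\Psi_n$ is unsatisfied.

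Finally, compute the Hamming distance: exactly the $n+1$ copies of $z$ differ between $(z^*,x^*)$ and $(z',x')$, so
\[
\HW{(z^*,x^*)}{(z',x')} \;=\; \frac{n+1}{3n+1}.
\]
If $\Psi_n$ were $s_n$-stable, the definition forces
\[
\frac{1}{44n+15} \;\geq\; s_n \cdot \frac{n+1}{3n+1},
\]
so $s_n \leq \frac{3n+1}{(44n+15)(n+1)} = O(1/n)$, which tends to $0$ as $n\to\infty$. The argument is essentially a bookkeeping exercise once the witness is identified; the only mild subtlety is being confident that the equivalence clauses contribute nothing to the unsatisfied count, which follows immediately from the fact that all copies of each original variable receive the same value in $(z',x')$. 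No part of this requires delicate estimates, so the main ``obstacle'' is simply making sure the clause and variable counts are correct and that the witness assignment is in fact distinct from $(z^*,x^*)$ (which it obviously is since $n+1 \geq 2$).
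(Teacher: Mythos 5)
Your proposal is correct and follows essentially the same route as the paper: the same witness assignment (all copies of $z$ set to \textsc{True}, all copies of each $x_i$ set to \textsc{False}), the observation that only the single clause coming from $\overline z$ is violated so the unsatisfied fraction is $O(1/n)$, and the Hamming distance of roughly $1/3$ forcing $s_n = O(1/n)$. Your explicit clause count ($44n+15$) is slightly more detailed than the paper's, which only needs the total to be $\Theta(n)$, but this is the same argument.
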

\begin{proof}
Note that the unique satisfying assignment for $\Psi_n$ is to set all variables to \textsc{False}. Consider the truth assignment that assigns all copies of $z$ the value \texttt{True} and all copies of each $x_i$ the value
\texttt{False}. The only clause that is not satisfied is the single clause $\overline{z^{i'}}$, for whatever copy $i'$ of $z$ was used in the singleton clause $\overline{z}$ of $\Phi_n$. So the fraction of unsatisfied clauses is $O(1/n)$.

On the other hand, number of variables in $\Psi_n$ is $3n+1$ (the total size of all clauses of $\Phi_n$) and the given assignment sets $n+1$ of them to \texttt{True}. Thus, the hamming distance between this assignment and the
all-\textsc{False} assignment is at least $1/3$. So $s_n = O(1/n)$.
\end{proof}



\end{document}